\newtheorem*{rep@theorem}{\rep@title}
\newcommand{\newreptheorem}[2]{%
\newenvironment{rep#1}[1]{%
 \def\rep@title{#2 \ref{##1}}%
 \begin{rep@theorem}}%
 {\end{rep@theorem}}}
\newcommand{\old}[1]{}
\newtheorem{theorem}{Theorem}
\newtheorem{definition}{Definition}[section]
\newtheorem{lemma}[definition]{Lemma}
\newtheorem{corollary}[definition]{Corollary}
\newtheorem{proposition}[definition]{Proposition}
\newtheorem{mechanism}{Mechanism}
\newcommand{\R}{\mathbb{R}}
\newcommand{\E}{\mathop{\mathbb{E}}}
\newcommand{\soc}{\mathsf{sc}}
\newcommand{\mc}{\mathsf{mc}}
\newcommand{\obj}{\mathsf{obj}}
\newcommand{\vx}{\boldsymbol{x}}
\newcommand{\vy}{\boldsymbol{y}}
  \providecommand\BibTeX{{%
    \normalfont B\kern-0.5em{\scshape i\kern-0.25em b}\kern-0.8em\TeX}}}
\begin{document}

%%
%% The "title" command has an optional parameter,
%% allowing the author to define a "short title" to be used in page headers.
\title{Characterization of Group-Strategyproof Mechanisms for Facility Location in Strictly Convex Space}

%%
%% The "author" command and its associated commands are used to define
%% the authors and their affiliations.
%% Of note is the shared affiliation of the first two authors, and the
%% "authornote" and "authornotemark" commands
%% used to denote shared contribution to the research.

\author{Pingzhong Tang}
\affiliation{%
  \institution{IIIS, Tsinghua University}}
\email{kenshinping@gmail.com}

\author{Dingli Yu}
\affiliation{%
  \institution{Computer Science Department, Princeton University}}
\email{leo.dingliyu@gmail.com}

\author{Shengyu Zhao}
\affiliation{%
  \institution{IIIS, Tsinghua University}}
\email{zsyzzsoft@gmail.com}

%%
%% By default, the full list of authors will be used in the page
%% headers. Often, this list is too long, and will overlap
%% other information printed in the page headers. This command allows
%% the author to define a more concise list
%% of authors' names for this purpose.
% \renewcommand{\shortauthors}{Trovato and Tobin, et al.}

%%
%% The abstract is a short summary of the work to be presented in the
%% article.
\begin{abstract}
%Mechanism design for multi-dimensional single facility location games has been an open problem for long. Plenty of works dealing with one-dimensional facility games can be indeed generalized to multi-dimensional domain with separable costs (e.g., $L_1$-norm) by handling each dimension separately, however, there is no known result about randomized mechanisms with respect to any non-separable cost.
We characterize the class of group-strategyproof mechanisms for the single facility location game in any unconstrained strictly convex space. A mechanism is \emph{group-strategyproof}, if no group of agents can misreport so that all its members are \emph{strictly} better off. A strictly convex space is a normed vector space where $\|x+y\|<2$ holds for any pair of different unit vectors $x \neq y$, e.g., any $L_p$ space with $p\in (1,\infty)$.

We show that any deterministic, unanimous, group-strategyproof mechanism must be dictatorial, and that any randomized, unanimous, translation-invariant, group-strategyproof mechanism must be \emph{2-dictatorial}. Here a randomized mechanism is 2-dictatorial if the lottery output of the mechanism must be distributed on the line segment between two dictators' inputs. A mechanism is translation-invariant if the output of the mechanism follows the same translation of the input.

Our characterization directly implies that any (randomized) translation-invariant approximation algorithm satisfying the group-strategyproofness property has a lower bound of $2$-approximation for maximum cost (whenever $n \geq 3$), and $n/2 - 1$ for social cost. We also find an algorithm that $2$-approximates the maximum cost and $n/2$-approximates the social cost, proving the bounds to be (almost) tight.

%Our technical contributions include the scale reduction from general $n$-agent games to the base cases, the output space reduction using the convexity of the centroid, and the use of cost continuity.

%Based on the characterizations, we obtain tight bounds of approximately optimal group-strategyproof mechanisms under both maximum and social cost objectives.

%One-dimensional single facility games are well-known and resolved. However, higher dimensional settings are more common, but there is hardly any previous result. Some approximate strategyproof mechanisms can be generalized to multi-dimensional cases, but almost all of them are no longer group-strategyproof. In this paper, we focus on the bound of group-strategyproof mechanisms for single facility location games in Euclidean space. We provide complete characterizations of group-strategyproof mechanisms in Euclidean space. We show an impossibility result to achieve good approximations for both maximum cost and social cost, even by randomized or non-anonymous mechanisms.%

%For maximum cost in Euclidean space, combined with the trivial 2-approximation group-strategyproof mechanism, we prove the tight bound to be 2. For social cost, we observe that randomized group-strategyproof mechanisms in general can do no better than deterministic mechanisms, and then obtain the impossibility of bounded approximation.%
\end{abstract}

%%
%% The code below is generated by the tool at http://dl.acm.org/ccs.cfm.
%% Please copy and paste the code instead of the example below.
%%
\begin{CCSXML}
<ccs2012>
<concept>
<concept_id>10003752.10010070.10010099.10010101</concept_id>
<concept_desc>Theory of computation~Algorithmic mechanism design</concept_desc>
<concept_significance>500</concept_significance>
</concept>
<concept>
<concept_id>10003752.10010070.10010099.10010100</concept_id>
<concept_desc>Theory of computation~Algorithmic game theory</concept_desc>
<concept_significance>500</concept_significance>
</concept>
</ccs2012>
\end{CCSXML}

\ccsdesc[500]{Theory of computation~Algorithmic mechanism design}
\ccsdesc[500]{Theory of computation~Algorithmic game theory}

%%
%% Keywords. The author(s) should pick words that accurately describe
%% the work being presented. Separate the keywords with commas.
\keywords{strategyproofness; mechanism design; facility location; characterization; approximation bounds}

%% A "teaser" image appears between the author and affiliation
%% information and the body of the document, and typically spans the
%% page.

%%
%% This command processes the author and affiliation and title
%% information and builds the first part of the formatted document.
\maketitle

\clearpage

\section{Introduction}

In a single facility location game of $n$ agents, every agent reports a location, and the mechanism chooses a facility location (i.e., an alternative). The cost of each agent is the distance between the facility location and her true location. A mechanism is strategyproof, if no one can be better off by misreporting, i.e., the outcome of the mechanism cannot be closer even if she reports a fake location. A mechanism is group-strategyproof, if no group of agents can jointly misreport their inputs to the algorithm so that \emph{all} members are \emph{strictly} better off.

Characterization of truthful mechanisms for voting problems has received a great amount of attention in the past few decades. Seminal results include \citet{Gibbard1977Voting} that shows any strategyproof mechanism that depends on individual strong orderings must be a probability mixture of \emph{unilateral} (i.e., only one agent can affect the outcome) or \emph{duple} (restricting the deterministic outcome between a fixed pair of alternatives) schemes.

Unfortunately, such characterization is generally hard to achieve in the restricted domains such as single-peaked preferences (i.e., one-dimensional facility location game) and especially in the higher-dimensional settings, however, they also provide possibilities for positive results. An important breakthrough in this literature was made by \citet{Moulin1980} --- a complete characterization of deterministic strategyproof mechanisms over one-dimensional single-peaked preferences --- known as the median voter schemes. \citet{Border1983Straightforward} extend Moulin's result to Euclidean space and show that it induces to median voter schemes in each dimension separately. \citet{Barbera1993Median} further show that this separability generalizes to any $L_1$-norm, known as generalized median voter schemes. Some more restricted domains are also studied, e.g., a compact set of the Euclidean space~\cite{Barbera1998Compact}, while typically showing that all those mechanisms behave like generalized median voter schemes. However, there is no result about randomized mechanisms in such settings.

An approximation perspective of this problem was first introduced by \citet{Procaccia2009Approximate}. They study approximately optimal strategyproof mechanisms for facility games under social cost (the sum of individual costs) and maximum cost (the maximum of individual costs) objectives while taking care of randomized mechanisms, but focus on the one-dimensional setting. They propose an interesting randomized mechanism for the maximum cost objective, which $3/2$-approximates the optimum and proves to be the best, while also ensuring group-strategyproofness. Many follow-up works are concerned with the approximation bounds of facility games on a line or a network, including many facilities and non-linear costs (see e.g., \cite{Alon2010Networks,Lu2010Two,Escoffier2011Many,Fotakis2013Concave,Feldman2013Tree,Fotakis2014Det,Cai2016Facility,Feldman2016Facility,Fong2018Facility,Golowich2018Deep}). However, to the best of our knowledge, none of the previous works studies randomized mechanisms in any multi-dimensional space --- this is substantially different from the discrete network cases, but no characterization is established, nor any approximation bound. In fact, \cite{Border1983Straightforward} implies that some of those strategyproof mechanisms can indeed be generalized to the multi-dimensional domain and preserve the property of strategyproofness. Unfortunately, they are no longer group-strategyproof, which is the focus of this paper.

We study a fundamental aspect of this problem: single facility location, in an unconstrained multi-dimensional strictly convex space. For example, Euclidean distance, or generally any $L_p$-norm ($p \in (1, \infty)$), after an arbitrary affine transformation, is strictly convex. The motivation of studying strictly convex norms is not only that they are commonly used in the related literature, and moreover, agents in a strictly convex space are more likely to manipulate and behave unlike separable preferences established in the previous characterizations~\cite{Border1983Straightforward,Barbera1993Median}.
%\sy{non-separability}
%\sy{Let me think...:)}
%\ly{we can mentions some merit of strictly convex space  compare to those non-strictly convex space, e.g., somehow more likely to misreport (need to rephrase) } 

A special case of our setting --- deterministic group-strategyproof mechanisms in the Euclidean space --- has long been investigated. \citet{Bordes2011Euclidean}, originally presented as~\citep{Bordes1990Euclidean} 20 years earlier, show that any deterministic group-strategyproof mechanisms in the unconstrained Euclidean space must be dictatorial. Their proof aims to characterize the option sets and utilizes many properties specified to the Euclidean distance, which seems hardly extendable to general strictly convex norms. Our deterministic part can be regarded as a direct generalization of this result but proved in a different and possibly cleaner way.
%\sy{I think our deterministic part is not very novel...We can say it's intuitive, and easier than their proof (at least looks like).}
%\ly{Our first theorem characterizing deterministic mechanisms can be served as a direct generalization of \cite{Bordes2011Euclidean}(, but is proved in a different way.)} %Our deterministic part gives an alternative proof of this result, which we think is more intuitive and of course more generalizable.
\citet{Sui2015Multidim} studies this setting in a practical manner, showing that generalized median mechanisms are not group-strategyproof in the unconstrained Euclidean space while the incentive of group misreport is unbounded. Yet, the approximation bound or even possibility of group-strategyproofness with respect to any other norm is still unknown, let alone randomized mechanisms. Disregarding algorithmic randomness partially circumvents the computational complexity and usually comes with more restrictive results. In this paper we show that randomized mechanisms can indeed implement more.

\subsection{Our Results}

%Our proofs are mainly within two-dimensional Euclidean space, because the bounds are already tight and it generalizes to higher dimensions. Mention again that our definition of group-strategyproofness is the same as the most settings, i.e., no group of agents can misreport in a way that all members are \emph{strictly} better off.%

%We first study the design of deterministic mechanisms, and 
%We first show that a deterministic mechanism is unanimous and group-strategyproof if and only if it is dictatorial (Theorem \ref{DictatorialTheorem}).

We start by considering the deterministic case in Section \ref{section:deterministic}, and our main theorem in this part states as follow.
%Our theorem in this part can be regarded as a generalized result of \citet{Bordes2011Euclidean} (to any strictly convex space instead of the Euclidean space).
%\ly{Too vague, try to be precise and clear, don't use ``usually'': Disregarding algorithmic randomness partially circumvents the computational complexity and usually comes with more restrictive results.}
%\sy{I put this in the introduction instead (see the last sentence)}

\begin{reptheorem}{DictatorialTheorem}
If $f$ is deterministic, unanimous, and group-strategyproof, then $f$ is dictatorial.
\end{reptheorem}

This characterization is actually complete, i.e., conversely, any dictatorial mechanism must be deterministic, unanimous, and group-strategyproof.
Here a mechanism is unanimous if all agents report the same point, the mechanism must choose that point as well. In fact, a mechanism that is non-unanimous can be unbounded in terms of approximation under both social cost and maximum cost objectives. Our approach in this part is a fundamental building block that introduces some basic lemmas and intuitions for the characterization of randomized mechanisms.

%It not only provides complete characterizations, but also shows an impossibility result on anonymous mechanisms.

For randomized mechanisms, we introduce a condition called translational invariance, which says that if we apply a translation to the inputs, the mechanism must output a location that is the result of the same translation to the original output. This condition is quite natural in the facility location domain (also known as shift invariance in~\cite{Feigenbaum2016LpNorm} and position invariance in~\cite{Filos2017DoublePeaked}), and we indeed find some strange mechanisms (e.g., Mechanism \ref{Separate2Dictator}) that are not translation-invariant and related to some constant.
%but is also essential in guaranteeing desirable approximation bound (see Proposition \ref{TranslationProp})
Our main theorem states as follow.

\begin{reptheorem}{2DictatorialTheorem}
If $f$ is unanimous, translation-invariant, and group-strategyproof, then $f$ is 2-dictatorial.
\end{reptheorem}

A mechanism is \emph{2-dictatorial}, if the support of the output always lies on the segment between two dictators' inputs (the two dictators are called 2-dictators). Theorem \ref{2DictatorialTheorem} also indicates an impossibility result with respect to the anonymity (Corollary \ref{2DictatorialCorollary}). However, on the positive side, it forms a game on the line between the 2-dictators, and we present a non-trivial design in Mechanism \ref{RandMed}.
%\ly{Better explain 2-dictatorial again}
%\ly{Try to be objective (avoid ``should''), we can say something like ``the theorem indicates ..., but it does not reduce the number of agents to 2 because ...'':The universal 2-dictatorship is not precisely a negative result; the takeaway should be that there is still something to play between some two agents, as presented in Mechanism \ref{RandMed}.}

All those constraints are essential in guaranteeing the clean characterizations. Without the unanimity, there exist some ``constant-related'' mechanisms, e.g., randomly choosing either the dictator's location or a constant displacement to that location is translation-invariant and group-strategyproof. Without the translation invariance, we also construct a constant-related but a bit more complicated mechanism (Mechanism \ref{Separate2Dictator}) that is unanimous and group-strategyproof but not 2-dictatorial. Without the group-strategyproofness, there would be a large possibility of choices as discussed in Section \ref{sec:discussion}.

%\ly{Can also mention Corollary  4.7.}

%Firstly, we observe that the trivial dictatorial mechanism is a $2$-approximation group-strategyproof mechanism for maximum cost, i.e., it is group-strategyproof and the maximum cost is always two times lower than the optimal. Moreover, we prove this to be the tight bound, even when $n = 3$ ($n$ is the number of agents throughout this paper). This is one of our main results.

%For social cost, we find that randomized mechanisms in general cannot achieve a bounded approximation, when $n$ is large enough. Then we prove Theorem \ref{DictatorialTheorem}, which says that a deterministic and unanimous mechanism is group-strategyproof, if and only if it is dictatorial. Based on this theorem, we prove that no group-strategyproof mechanism can do bounded approximation for social cost (when $n$ is large enough), and this is the other main result.

Based on the characterizations above, we obtain lower bounds of approximately optimal mechanisms under both maximum and social cost objectives, summarized in Table \ref{table:summary}. The upper bounds for randomized mechanisms are guaranteed by Mechanism \ref{RandMed}. All bounds are tight, except for the little gap of randomized mechanisms for the social cost objective.

\begin{table}[t]
%\small
\centering
\begin{threeparttable}
\begin{tabular}{|c|c|c|c|}
\hline
\multicolumn{2}{|c|}{} & Deterministic & Randomized \\
\hline
\multirow{2}*{maximum cost} & $n = 2$ & $[2, 2]$ & $[3/2\tnote{a}~, 3/2]$ \\ \cline{2-4}
    & $n \geq 3$ & $[2, 2]$ & $[2\tnote{b}~, 2]$ \\
\hline
social cost & $n \geq 2$ & $[n - 1, n - 1]$ & $[n/2 - 1\tnote{b}~, n/2]$ \\
\hline
\end{tabular}
\caption{Summary of the approximation bounds of group-strategyproof mechanisms in strictly convex space. $n$ represents the number of agents.}
\label{table:summary}
\begin{tablenotes}
\item[a] Proved by \citet{Procaccia2009Approximate} in the one-dimensional case.
\item[b] Requires translational invariance.
\end{tablenotes}
\end{threeparttable}
\end{table}

\subsection{Our Techniques}

We develop a series of technical tools to facilitate our characterizations. First, when some results are obtained in the small-scale cases, we use \emph{scale reduction} to generalize them over all $n$. Second, we apply \emph{output space reduction} to rule out some locations or randomized distributions that are not preferred by any agent; typically, the output should be a convex combination over some agents' inputs. Third, we establish a fundamental property called \emph{cost continuity}, which is essential in our proofs to generalize local properties over the whole space. Those techniques are detailed as follows.

\subsubsection{Scale Reduction}

For both characterizations, we begin with some small-scale cases ($n = 2$ for deterministic mechanisms and $n = 3$ for randomized mechanisms), and then reduce general $n$-agent games into those base cases. The intuition of the scale reduction is to bind some of the agents together to form a temporary coalition, while the reduced game still preserves the required properties (e.g., group-strategyproofness).

Our scale reduction in Theorem \ref{DictatorialTheorem} is similarly used by Bordes et al.'s characterization~\cite{Bordes2011Euclidean}. Essentially, assuming that the statement holds for the base case ($n = 2$), we first divide the agents into two groups and bind them respectively, one of which will be the dictator in the reduced $2$-agent game. Note that the dictator group forms some ``partial unanimity'' property, i.e., the unanimity with respect a subset of agents. Then we can fix the location of an agent in the non-dictator group, and the reduced $(n - 1)$-agent game is still unanimous and group-strategyproof and thus dictatorial by induction. It remains to show that all reduced games have the same dictator (see the proof of Theorem \ref{DictatorialTheorem} for details).

In Theorem \ref{2DictatorialTheorem}, our scale reduction approach is substantially different. Using the base case ($n = 3$), we can similarly divide the agents into three groups and two of them will be the 2-dictator groups, however, we cannot simply fix the location of an agent since then the reduced game is not necessarily translation-invariant. Here we use a different strategy: we fix an agent in the non-dictator group to some other agent (i.e., bind them together) and reduce the game into $n - 1$ agents, but then we still need to show that the 2-dictatorship of this special game generalizes to the whole space. This is still very complicated, and here we only point out our key observations: we can utilize the ``partial unanimity'' with respect to the 2-dictator groups, as well as the ``partial unanimity'' with respect to the fixed agent together with the 2-dictators in the special game. Note that the intersection of those ``partial unanimity'' groups is exactly the 2-dictators, and those ``partial unanimity'' can be further applied for output space reduction. See the proof of Theorem \ref{2DictatorialTheorem} for details.
%\ly{How about we emphasize this part? I think it may be the most non-trivial part of our proof: This is still very complicated, and here we only point out our key observations: we can utilize the ``partial unanimity'' with respect to the 2-dictator groups, as well as the ``partial unanimity'' with respect to the 2-dictators in the reduced game together with the fixed agent. Those ``partial unanimity'' can be further applied for output space reduction. See the proof of Theorem \ref{2DictatorialTheorem} for details.}

%First, when some results are obtained in some small-scale cases, we use induction to generalize them over all $n$. We successfully find some of the agents to form a temporary coalition, while the reduced mechanism is still unanimous and group-strategyproof. Induction is a rather basic technique, but what we would like to emphasize here is in its flexibility of constructing subgames under various situations (see the proof of Theorem \ref{DictatorialTheorem} and Theorem \ref{2DictatorialTheorem} for more details).

\subsubsection{Output Space Reduction}

If we can find a location that is strictly preferred by everyone, then all agents can collaborate to misreport that location, and thus group-strategyproofness is violated by unanimity. This is the basic idea for output space reduction.

In the deterministic base case ($n = 2$), as long as the output does not lie on the line segment between the two agents, we can always find such location by the strict convexity. This is exactly what Lemma \ref{SegmentLemma} establishes.

For randomized mechanisms, we find the \emph{centroid} (i.e., the expected location of a distribution) very helpful to reduce the space of randomized distributions, as selecting the centroid of the output does not make anyone worse off. This observation gives us a fundamental understanding of randomized group-strategyproof mechanisms, as our Lemma \ref{CentroidLemma} states. We rephrase this lemma below in an informal way.

\begin{lemma}[Informal version of Lemma~\ref{CentroidLemma}]
If the mechanism is unanimous and group-strategyproof, then for any input preferences, the output of the mechanism is either deterministic, or a probability mixture over the line segment between some two agents' inputs.
\end{lemma}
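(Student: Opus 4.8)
The plan is to prove the contrapositive-flavored statement directly: take any input profile and show that if the output distribution is not deterministic, then its support must lie on a segment between two agents' inputs. The central tool is the centroid observation already flagged in the techniques section. First I would define the centroid $c$ of the output lottery as its expected location, and argue via strict convexity that every agent weakly prefers the deterministic outcome $c$ to the random lottery: by strict convexity of the norm (Jensen's inequality applied to the convex cost function $\|\cdot - x_i\|$), the expected distance from agent $i$'s true location $x_i$ to a random draw is at least the distance from $x_i$ to the centroid $c$, with equality only when the lottery is supported on a line through $x_i$ (in fact only when degenerate or when all mass lies where the norm is ``flat,'' which strict convexity forbids unless collinear with $x_i$). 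This says that replacing the output by its centroid makes nobody worse off.

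The next step leverages unanimity and group-strategyproofness to turn ``nobody worse off'' into a structural constraint. If the lottery is genuinely random (non-degenerate) and its support is not contained in a single segment between two of the agents' reported points, I would find a target point strictly preferred by every agent simultaneously. Concretely, I would consider the centroid $c$: if $c$ is strictly preferred by all agents, then all agents could jointly deviate by each reporting $c$; by unanimity the mechanism must then output $c$ deterministically, which every agent strictly prefers to the original random lottery — contradicting group-strategyproofness. The strict-inequality case of Jensen is exactly what gives strict preference: for an agent $i$ whose true location is \emph{not} collinear with the support, strict convexity yields $\E\|X - x_i\| > \|c - x_i\|$ strictly. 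So the only way to avoid a profitable joint deviation to $c$ is that \emph{every} agent's location is collinear with the support, i.e., the support lies on a common line through each $x_i$ — which, after accounting for all agents, forces the support onto a segment spanned by the agents' inputs.

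I would organize the collinearity-to-segment deduction carefully: having each agent be collinear with the support means the support lies on a line, and that line must pass through (or be positioned relative to) each $x_i$ in a way compatible with the strict-convexity equality condition. The remaining work is to pin down that the relevant line is precisely the one through two of the agents' inputs and that the support is confined to the segment between them (not the extending rays), which again uses an output-space-reduction argument: any probability mass outside the convex hull of the agents' inputs can be pushed inward to strictly decrease every agent's expected cost, contradicting group-strategyproofness via the same unanimous-deviation trick.

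The hard part will be the equality analysis of strict convexity in the Jensen step: I must characterize exactly when $\E\|X - x_i\| = \|c - x_i\|$ for a non-degenerate lottery, and argue that this collinearity condition holding \emph{simultaneously for all $n$ agents} forces the support onto a segment between two specific agents rather than merely onto some line. Degenerate edge cases — agents sharing locations, support lying on a line containing several agents, or mass at a single point — will need separate but routine handling. The informal phrasing (``either deterministic, or a mixture over a segment between two agents'') suggests the formal Lemma~\ref{CentroidLemma} packages both the centroid-replacement inequality and this segment-confinement conclusion, so I would expect the real proof to state the centroid inequality as the workhorse and then invoke it repeatedly through the unanimity-plus-group-strategyproofness deviation argument.
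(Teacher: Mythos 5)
Your overall strategy---replace the lottery by its centroid, apply Jensen with a strict-convexity equality analysis, and use a unanimous joint deviation---is indeed the paper's approach, but your execution has a genuine logical gap at the key step. You write that ``the only way to avoid a profitable joint deviation to $c$ is that \emph{every} agent's location is collinear with the support.'' This is the wrong negation: group-strategyproofness only forbids deviations in which \emph{all} coalition members strictly gain, so its contrapositive gives you merely that \emph{at least one} agent weakly prefers the lottery to $c$, hence (by Jensen) that at least one agent attains equality. In general the equality agents form a nonempty but possibly proper subset $N_1 \subseteq N$; nothing forces all agents onto the support's line. Moreover, the equality condition is not mere collinearity but a one-sided ray condition: for a nondegenerate lottery, $\E\|X - x_i\| = \|c - x_i\|$ forces the support to lie in $\{x_i + t e_i : t \ge 0\}$ for a single unit vector $e_i$ (if $x_i$ sat strictly between support points, Jensen would be strict). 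This one-sidedness is precisely what the final conclusion needs, and your sketch loses it.

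The missing idea is a \emph{second} deviation, to a point other than the centroid. After establishing $N_1 \neq \varnothing$, the paper's proof splits into two cases. If all agents in $N_1$ share the same direction $e$ (the support lies on a common ray with every equality agent on the same side of it), take $y' = c - \epsilon e$ with $\epsilon$ smaller than both every equality agent's distance $\|c - x_i\|$ and every other agent's Jensen slack $\|f(\vx) - x_j\| - \|c - x_j\|$. Then $y'$ is \emph{strictly} preferred by everyone: each equality agent gains exactly $\epsilon$ (the shifted point moves straight toward it along the common line), while each remaining agent loses at most $\epsilon$ relative to $c$ but had slack exceeding $\epsilon$. A unanimous deviation to $y'$ then contradicts group-strategyproofness. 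The only surviving possibility is that two agents $i, j \in N_1$ have opposite directions $e$ and $-e$, and it is exactly this sandwich that confines the support to the segment $\overline{x_i x_j}$ between \emph{two specific agents}, which is the statement's conclusion. Your closing plan---that collinearity ``holding simultaneously for all $n$ agents forces the support onto a segment''---is therefore doubly off: the hypothesis need not hold for all agents, and even when all equality agents are collinear with the support, without the opposite-direction argument the support could lie on a ray extending beyond all of them, a case only the shifted-centroid deviation rules out. Your ``push mass inward'' remark gestures at this, but it is not a legal deviation: agents can only misreport locations, not edit the output distribution, so the repair must be implemented exactly as above.
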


\subsubsection{Cost Continuity}
%\ly{continuous expected distance is not a good name, I think ``cost continuity'' is more close to the meaning. }
Cost continuity is a fundamental local property of the output, which states informally as follow.
\begin{lemma}[Informal version of Lemma~\ref{ContinuousLemma}, Cost continuity]
If the mechanism is strategyproof, then for any agent $i$, assuming that the inputs of all other agents are fixed, then the expected distance between the output and agent $i$'s location is a continuous function, and moreover, the distance change cannot be larger than agent $i$'s movement (i.e., the function is 1-Lipschitz continuous).
\end{lemma}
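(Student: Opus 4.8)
The plan is to isolate agent $i$'s truthful cost as a function of her own location, while treating her reported and true locations as genuinely separate arguments, and then to play strategyproofness off against the triangle inequality. Concretely, I would fix the reports $x_{-i}$ of all other agents, and for a true location $a$ and a report $b$ define $h(a,b) = \E_{y \sim f(b, x_{-i})} \|y - a\|$, the expected cost incurred by an agent whose true location is $a$ but who submits $b$. The object the lemma is about is the truthful cost $c(x) = h(x,x)$, and the whole statement reduces to proving $|c(x) - c(x')| \le \|x - x'\|$, since a $1$-Lipschitz function is automatically continuous. Strategyproofness enters as the single inequality $h(a,a) \le h(a,b)$ for all $a,b$.

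The key step is to estimate $c(x) - c(x')$ by inserting the mixed term $h(x, x')$, writing $c(x) - c(x') = [h(x,x) - h(x,x')] + [h(x,x') - h(x',x')]$. For the first bracket, I apply strategyproofness to an agent whose true location is $x$: reporting truthfully is at least as good as reporting $x'$, so $h(x,x) \le h(x,x')$ and the bracket is nonpositive. For the second bracket, the point is that $h(x,x')$ and $h(x',x') = c(x')$ are expectations against the \emph{same} output distribution $f(x', x_{-i})$, differing only in the reference point; hence $h(x,x') - h(x',x') = \E_{y \sim f(x', x_{-i})}\bigl[\|y - x\| - \|y - x'\|\bigr] \le \|x - x'\|$ by the reverse triangle inequality applied pointwise and then integrated. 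Combining the two brackets gives $c(x) - c(x') \le \|x - x'\|$, and swapping the roles of $x$ and $x'$ yields the matching bound, establishing the $1$-Lipschitz property and with it continuity.

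I expect this argument to be short, and notably it uses neither strict convexity nor unanimity nor group-strategyproofness — only plain strategyproofness and the triangle inequality. The one place that genuinely needs care, and the natural trap, is keeping the report and the true location as distinct arguments and invoking strategyproofness in the correct direction: the inequality $h(x,x) \le h(x,x')$ deviates the \emph{report} while holding the true location fixed, whereas the triangle-inequality step holds the \emph{output distribution} fixed and moves the reference point. Conflating these two roles would collapse the proof. A minor technical obligation is to ensure the expectation $\E_{y \sim f(b,x_{-i})}\|y - a\|$ is finite for the distributions the mechanism can output, which I would take as part of the model (and which is in any case implied once the cost objective is well defined).
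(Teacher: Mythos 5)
Your proof is correct and is essentially the same argument as the paper's: both combine the strategyproofness inequality $\mu(x_i) \leq \|f(x_i',\vx_{-i}) - x_i\|$ with the pointwise triangle inequality $\|f(x_i',\vx_{-i}) - x_i\| \leq \mu(x_i') + \|x_i - x_i'\|$ to get the $1$-Lipschitz bound, then deduce continuity. The only difference is presentational --- the paper phrases the chain as a proof by contradiction, while you state it directly via the decomposition through $h(x,x')$.
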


Some previous work establishes the output continuity (stronger than cost continuity) for deterministic mechanisms~\cite{Peters1993Continuity}, but note that the continuity with respect to randomized distributions is essentially different, and here we care about the cost function.
%\ly{better explain the continuity for deterministic mechanisms~\cite{Peters1993Continuity}, and make clear about the differences}

Cost continuity itself might not seems strong at first glance, but it turns out to be essential for later proofs. By moving in small steps, we utilize the cost continuity to further prove other local properties and successfully generalize them to the whole space. Lemma \ref{InfiniteRatioLemma} especially makes the case.

\begin{lemma}[Informal version of Lemma~\ref{InfiniteRatioLemma}]
When $n = 3$, if the mechanism is unanimous and group-strategyproof, then there exists an input profile such that the output is relatively very close to an agent and isolated from another agent.
\end{lemma}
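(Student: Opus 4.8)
The plan is to reduce the lemma to a single clean statement and then attack that statement with an iterative small-step argument driven by cost continuity. First I would fix agents $2$ and $3$ at two distinct points $A\ne B$ and view the output as a function of agent $1$'s report $x$ alone, writing $g(x):=\E\big[\,\|f(x,A,B)-x\|\,\big]$ for agent $1$'s own expected cost. By the Centroid Lemma (Lemma~\ref{CentroidLemma}) the output $f(x,A,B)$ is supported on a segment between two of the three agents, so its centroid lies within $g(x)$ of $x$. The key observation is that once $g(x)$ can be driven arbitrarily close to $0$, the isolation comes for free: the output then concentrates near $x$, and since $A\ne B$ the point $x$ is separated from at least one of $A,B$ by at least $\tfrac12\|A-B\|$ (triangle inequality), so the expected cost to that agent stays bounded below while the cost to agent $1$ tends to $0$, giving the desired infinite ratio. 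Thus the whole lemma reduces to proving $\inf_x g(x)=0$.

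To establish $\inf_x g(x)=0$ I would run a chasing / steepest-descent process on $g$. At a current report $x$ with output $Y=f(x,A,B)$, strategyproofness gives the \emph{frozen-output} bound $g(x')\le \E\|Y-x'\|$ for every alternative report $x'$, since agent $1$ with true location $x'$ could always deviate to $x$. Moving $x$ a small distance in the direction of the expected unit vector $\E[\,\widehat{Y-x}\,]$ (the negative gradient of the frozen cost) decreases this bound at first order by an amount proportional to $\|\E[\,\widehat{Y-x}\,]\|$, and cost continuity (Lemma~\ref{ContinuousLemma}) controls each step because $g$ is $1$-Lipschitz. Iterating these small steps produces a monotone decreasing sequence of values of $g$ converging to the infimum, and the segment structure from Lemma~\ref{CentroidLemma} keeps the geometry essentially one-dimensional (the directions $\widehat{Y-x}$ lie along a single line), which is exactly what makes the descent analyzable. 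Note that the naive ``jump to the centroid'' move does \emph{not} work here, since for a spread output the mean-absolute-deviation exceeds the distance to the mean, so the correct descent direction is the expected unit vector rather than the direction toward the centroid.

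The hard part will be ruling out a \emph{positive} infimum, i.e. the possibility that the descent stalls. This happens precisely when $Y$ is genuinely spread on a segment and $x$ sits at its balancing point (the weighted median along the segment line), so that $\E[\,\widehat{Y-x}\,]=0$ and no first-order single-agent move helps even though $g(x)>0$. At such a profile single-agent strategyproofness is too weak, and I would switch to group-strategyproofness: bind agent $1$ with whichever of agents $2,3$ lies on the same side of the balancing point, and argue that this two-agent coalition can jointly misreport to pull the spread output strictly inward toward their common side, making \emph{both} members strictly better off. Combining this coalition deviation with unanimity and the output-space reduction idea behind Lemma~\ref{SegmentLemma} (through strict convexity) contradicts group-strategyproofness, forcing the spread to collapse and hence $g\to 0$. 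The genuinely delicate step is stitching the continuous small-step descent together with this discrete coalition argument: one must show that \emph{whenever} the single-agent descent stalls there is a strictly improving coalition deviation, so that the two ingredients together drive $g$ all the way down to $0$.
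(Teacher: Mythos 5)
There is a fatal gap: your central reduction---``the whole lemma reduces to proving $\inf_x g(x)=0$'' with agents $2,3$ held at distinct points $A\neq B$---is false. Consider the random dictatorship over agents $2$ and $3$: $f(x_1,x_2,x_3)$ outputs $x_2$ with probability $1/2$ and $x_3$ with probability $1/2$, ignoring agent $1$. This mechanism is unanimous, translation-invariant, and group-strategyproof (any coalition that changes the output must contain agent $2$ or $3$, and one checks via the triangle inequality that agents $2$ and $3$ cannot both strictly gain, while a coalition member who is agent $2$ or $3$ reporting truthfully already has minimal cost among its own deviations). For this mechanism, with agents $2,3$ fixed at $A\neq B$, your $g(x)=\tfrac12\bigl(\|x-A\|+\|x-B\|\bigr)\geq\tfrac12\|A-B\|$ for \emph{every} report $x$, so $\inf_x g(x)>0$. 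Moreover every $x$ strictly between $A$ and $B$ is exactly one of your ``stall'' points ($\E[\widehat{Y-x}]=0$ with the output spread on $\overline{AB}$), and yet no strictly improving coalition deviation exists there, because the mechanism \emph{is} group-strategyproof. So the contradiction you hope to derive at stalls cannot exist; your stall-resolution step would be proving something false. A related structural problem is that your coalition argument is unimplementable even in principle: unanimity pins down the post-deviation output only for the \emph{grand} coalition, whereas for a two-agent coalition (agent $3$ staying at $B$) the mechanism's response to the joint misreport is simply unknown, so you cannot certify that both members strictly gain.

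The lemma is still true for the random-dictator example, but via the agent you never consider: the isolated agent is agent $1$, and the witnessing profiles have agents $2,3$ (nearly) coincident near $z$, where the output coincides with their common location. This is precisely the point your proposal misses and the paper's proof handles: which agent plays the role of $i$ cannot be fixed a priori as ``one of the two agents I froze at distinct points''; it must be read off from the mechanism's behavior. The paper does this by starting from an equilateral triangle near $z$, letting the output's location (deterministic inside, or on a particular edge, via Lemma~\ref{CentroidLemma}) select the apex/isolated agent, and then deforming an isosceles triangle---moving agents one at a time and using cost continuity through a four-profile argument---so that the output provably stays on the base while the base-to-leg ratio degenerates. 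That construction also keeps the near agents within $\ell$ of $z$ and the scale below $\ell/\epsilon$, localization constraints in the formal statement that your descent process (whose minimizing sequence could wander arbitrarily far) does not control.
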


The formal version of this lemma is a pivot step of our final result. Cost continuity is used to keep the output distribution staying on the base of an isosceles triangle, while slowly reshaping the triangle to approach nearly a doubled line (and then the apex of the triangle will be the isolated agent). If the output ``jumps'' from the base to some leg, then its distance to some agent must not be continuous and we further show the contradiction by moving some two agents simultaneously. For more details, please refer to the formal statement and the proof.

Finally, we would like to remark that the cost continuity also holds for strategyproof mechanisms, which shows its potential for future work.

\section{Preliminaries}

We consider the single facility location game with $n$  ($n \geq 2$) agents $N = \{1, \ldots, n\}$. All agents are located in a $d$-dimensional strictly convex space $\R_d$ ($d \geq 2$), i.e., a normed vector space where $\forall x, y \in \R_d$, $x \neq y$ and $\|x\| = \|y\|$ together imply $\|x\| + \|y\| > \|x + y\|$, or equivalently, the triangle inequality $\|x + y\| \leq \|x\| + \|y\|$ holds with equality if and only if $x, y$ are in the same direction. Real number field is considered for simplicity. Let $x_i\in \R_d$ be the location of agent $i$'s location. A location profile is a vector of agents' location, $\vx = (x_1,\ldots,x_n)$.

A deterministic mechanism is a function $f:\R_d^n\to \R_d$, mapping a location profile to the location of facility. Given the location of facility $f(\vx) = y\in \R_d$, the cost of agent $i$ is the distance between $x_i$ and $y$, i.e., $\|x_i-y\|$. We say $f(\vx)$ is the output of $\vx$.

A randomized mechanism is a function mapping a location profile $\vx \in \R_d^n$ to a probability distribution over $\R_d$. Given a probability distribution of facility $f(\vx) = P$, the cost of agent $i$ is the expected distance between $x_i$ and the facility, i.e., \[\|x_i - P\| \triangleq \E_{y\sim P} \|x_i - y\|.\]

In what follows, we formally define several constraints and/or properties of a mechanism, which will be thoroughly discussed in the paper.

\begin{definition}[Strategyproofness]
A mechanism is \emph{strategyproof} if and only if no agent can gain from misreporting the location, that is, for all $\vx\in \R_d^n$, for all $i\in N$, and for all $x_i'\in \R_d$,
\[\|f(\vx) - x_i\|\leq \|f(x_i',\vx_{-i}) - x_i\|,\]
where $\vx_{-i} = (x_1,\ldots,x_{i-1},x_{i+1},\ldots,x_n)$ is the location profile without $x_i$.
\end{definition}

\begin{definition}[Group-strategyproofness]
A mechanism is \emph{group-strategyproof} if and only if for all $S\subseteq N$, there is no $x_S'\in \R_d^{|S|}$ such that all agents in $S$ can gain from misreporting, that is, for all $\vx \in \R_d^n$, for all $S\subseteq N$, for all $x_S'\in \R_d^{|S|}$, there exists $i\in S$ such that
\[\|f(\vx) - x_i\| \leq \|f(x_S', \vx_{-S}) - x_i\|,\]
where $\vx_{-S}$ is the location profile without agents in $S$.
\end{definition}

\begin{definition}[Unanimity]
A mechanism $f$ is \emph{unanimous} if and only if  \[x_1=\cdots=x_n=x \implies f(\vx) = x.\] That is, if all agents report the same point, the mechanism must choose that point as well.
\end{definition}

\begin{definition}[Translational invariance]
A mechanism $f$ is \emph{translation-invariant} if and only if \[\forall \vx\in \R_d^n, \forall a \in \R_d, f(\vx) + a = f(\vx + a), \]
where $\vx + a = (x_1 + a,\ldots,x_n+a)$. Namely, if we apply a translation to the inputs, the mechanism must output a location that is the result of the same translation to the original output.
\end{definition}

\begin{definition}[Dictatorship]
A mechanism $f$ is \emph{dictatorial} if and only if $\exists i \in N$, $\forall \vx$, $f(\vx) = x_i$. We say agent $i$ is the \emph{dictator}.
\end{definition}

\begin{definition}[2-Dictatorship]
A mechanism $f$ is \emph{2-dictatorial} if and only if $\exists i, j \in N$, $\forall \vx$, $f(\vx)$ lies on the segment between $x_i$ and $x_j$. We say agents $i$ and $j$ are the \emph{2-dictators}.
\end{definition}

We are also interested in designing a group-strategyproof mechanism while minimizing one of the following common objectives --- expected maximum cost or expected social cost, that is,
\[\mc(P, \vx) = \mathop{\E}_{y\sim P}\left[\max_{i\in N} \|x_i - y\|\right],\]
or \[\soc(P, \vx) = \E_{y\sim P} \left[\sum_{i\in N} \|x_i - y\| \right].\]
Also, we slightly abuse the notation, such that for $y\in \R_d$, we use $\mc(y,\vx)$ and $\soc(y, \vx)$ to denote the deterministic version of the objective functions. We say a mechanism is an $\alpha$-approximation of the optimum with respect to an objective $\obj$ if $\forall \vx \in \R_d^n$,
\[\obj(f(\vx), \vx) \leq \alpha \min_{y\in \R_d} \obj(y, \vx).\]

\section{Deterministic Mechanisms}
\label{section:deterministic}

%In this section, we aim to characterize deterministic group-strategyproof mechanisms in the Euclidean space.

The trivial dictatorship mechanism is group-strategyproof, $2$-approximation for maximum cost, and $(n - 1)$-approximation for social cost. Unfortunately, we will show that this is the only possibility of deterministic, unanimous, and group-strategyproof mechanisms in any strictly convex space.
%This section also introduces some basic lemmas and concepts for the more challenging randomized part.

%\begin{figure}[htbp]
%\centering
%\includegraphics[scale=0.1]{}
%\caption{Worst case of the dictatorial mechanism for social cost. Agent $1$ is the dictator.}
%\label{worst_sc}
%\end{figure}

We start with a lemma which quickly follows by the definition of unanimity.
% (See the proof in Appendix \ref{ProofUnanimousInfApprox}.)
\begin{lemma}\label{UnanimousInfApprox}
All bounded-approximation mechanisms are unanimous.
\end{lemma}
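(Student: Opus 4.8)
The plan is to argue through the approximation ratio directly: a mechanism with a finite approximation guarantee cannot incur any positive cost on a profile whose optimal objective value is zero, and the profiles with zero optimum are precisely the unanimous ones, on which the mechanism is thereby forced to output the agreed point.

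First I would fix an arbitrary unanimous input, i.e. a profile $\vx = (x, \ldots, x)$ in which every agent reports the same location $x$. For such a profile the facility placed at $x$ incurs zero cost under either objective, since each agent's distance to $x$ is zero; hence $\min_{y \in \R_d} \obj(y, \vx) = \obj(x, \vx) = 0$ for $\obj \in \{\mc, \soc\}$. Because $f$ is an $\alpha$-approximation for some finite $\alpha$, the defining inequality yields $\obj(f(\vx), \vx) \leq \alpha \cdot 0 = 0$, and since the objective is a nonnegative quantity this forces $\obj(f(\vx), \vx) = 0$.

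It remains to translate a zero objective value into $f(\vx) = x$, i.e. the output distribution being the point mass at $x$, and this is the only step with any content. Writing $f(\vx) = P$, a zero social cost reads $\E_{y \sim P}\left[\sum_{i} \|x_i - y\|\right] = n \cdot \E_{y \sim P}\|x - y\| = 0$, and likewise zero maximum cost gives $\E_{y \sim P}\|x - y\| = 0$. Since $\|x - y\| \geq 0$ pointwise, a nonnegative random variable with zero expectation must vanish almost surely, so $y = x$ with probability one under $P$; thus $P$ is the Dirac mass at $x$, which is exactly the statement $f(\vx) = x$. For a deterministic mechanism this degenerates to the trivial $\|x - f(\vx)\| = 0$. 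I do not expect a genuine obstacle: the lemma is essentially definitional, and the only point worth stating precisely is this passage from expected distance zero to a Dirac output, which rests on nonnegativity of the norm alone and uses neither translational invariance nor strict convexity.
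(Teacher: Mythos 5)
Your proof is correct and follows essentially the same argument as the paper's: place all agents at a common point, note the optimum is zero, use the finite approximation ratio to force the mechanism's objective to zero, and conclude the output must be that point. The only difference is that you spell out the randomized case (zero expected nonnegative distance implies a Dirac output), which the paper compresses into ``by the definition of $\mc(\cdot)$ and $\soc(\cdot)$.''
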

\begin{proof}
For any $y \in \R_d$, consider the case where $x_1=\cdots=x_n=y$.

For both maximum cost and social cost,
\[\mc(y, \vx) = \soc(y, \vx) = 0,\]
so it must be
\[\mc(f(\vx), \vx) = \soc(f(\vx), \vx) = 0.\]
By the definition of $\mc(\cdot)$ and $\soc(\cdot)$, $f(\vx) = y$, that is, $f$ is unanimous.
\end{proof}

In some papers, they use \emph{onto} as one of the constraints instead of \emph{unanimous} (e.g., \cite{Schummer2002Network}), which are known to be equivalent when $f$ is deterministic and strategyproof. Intuitively, given any output location, we can simply move every agent to the output location one by one and the output must stand still, otherwise it would contradict strategyproofness. Here we give a short proof.
% We give a short proof of this proposition in Appendix \ref{ProofUnanimousOnto}.
%\sy{This statement has already been proved for many times. We may delete it.}
\begin{proposition}\label{UnanimousOnto}
Suppose $f$ is deterministic and strategyproof, then $f$ is unanimous if and only if $f$ is onto.
\end{proposition}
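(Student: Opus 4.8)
The plan is to prove the two directions of the equivalence separately. Recall that a mechanism is \emph{onto} if every point of $\R_d$ is attainable as an output, i.e. for all $y \in \R_d$ there exists some profile $\vx$ with $f(\vx) = y$. The forward direction (unanimous $\implies$ onto) is immediate: given any target $y$, the profile $\vx = (y,\ldots,y)$ satisfies $f(\vx) = y$ by unanimity, so $y$ is in the range of $f$. This requires no appeal to strategyproofness.

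The substantive direction is onto $\implies$ unanimous, and here I would use strategyproofness exactly as the surrounding text hints. Fix an arbitrary $y \in \R_d$; I want to show $f(y,\ldots,y) = y$. Since $f$ is onto, there is some profile $\vx = (x_1,\ldots,x_n)$ with $f(\vx) = y$. The idea is to morph $\vx$ into the unanimous profile $(y,\ldots,y)$ one coordinate at a time, invoking strategyproofness at each step to pin the output at $y$. Concretely, consider the hybrid profiles obtained by successively replacing each $x_i$ with $y$. I would prove by induction on $k$ that $f(y,\ldots,y,x_{k+1},\ldots,x_n) = y$, where the first $k$ coordinates have been switched to $y$. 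The base case $k=0$ is just $f(\vx) = y$.

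For the inductive step, suppose the output is $y$ after the first $k$ coordinates have been set to $y$. I examine the effect of changing the $(k{+}1)$-th coordinate from $x_{k+1}$ to $y$. Apply strategyproofness from the viewpoint of agent $k+1$ in two directions. First, treating the post-change profile as the truthful one and the pre-change report $x_{k+1}$ as a deviation, agent $k{+}1$'s honest location is $y$, so her cost at the truthful output $f(y,\ldots,y,y,x_{k+2},\ldots)$ must be no larger than her cost if she deviates to $x_{k+1}$, which yields output $y$ (the inductive hypothesis) and hence cost $0$. Thus the truthful output has distance $0$ from $y$, forcing it to equal $y$. (One must be a little careful about which profile is declared ``truthful,'' so I would set up the two applications of strategyproofness symmetrically: one showing the new output is within the old agent's indifference and one the reverse, squeezing the distance to $y$ down to zero.) Iterating through $k = 0,1,\ldots,n-1$ gives $f(y,\ldots,y) = y$, which is unanimity since $y$ was arbitrary.

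The main obstacle I anticipate is the bookkeeping in the inductive step: strategyproofness bounds an agent's cost only relative to \emph{her own} true location, so I must choose the reference point of each application correctly and ensure the distance inequalities compose to force the output to stay exactly at $y$ rather than merely nearby. The cleanest route is probably to argue that when an agent whose true location coincides with the current output location $y$ changes her report, the output cannot move away from $y$ at all — if it did, that agent would strictly prefer to misreport back to her true location $y$ (an already-achievable output), contradicting strategyproofness. This ``an agent sitting at the current output cannot be made worse off'' observation, applied coordinate by coordinate, is the heart of the argument and sidesteps the need for strict convexity or any geometric structure beyond the norm.
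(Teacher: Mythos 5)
Your proposal is correct and is essentially identical to the paper's own proof: the forward direction by plugging in the unanimous profile, and the reverse direction by taking a profile with output $y$ and replacing reports by $y$ one at a time, using strategyproofness (with the agent's true location taken to be $y$, so deviating back to the old report gives cost $0$) to pin the output at $y$ at every step. The single application of strategyproofness you describe first is sufficient; no symmetric second application is needed.
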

\begin{proof}
If $f$ is unanimous, $\forall y$, $f(\vx) = y$ when $x_1=\cdots=x_n=y$, so $f$ is onto.

If $f$ is onto, then $\forall y$, there exists a profile $\vx$ such that $f(\vx) = y$. For all $i \in N$, when we move $x_i$ to $y$, because of strategyproofness, $\|f(y, \vx_{-i}) - y\| \leq \|f(\vx) - y\| = 0$, that is, $f(y, \vx_{-i}) = y$. After we move all agents to $y$ (one by one), the output still stays unchanged, and thus $f$ is unanimous.
\end{proof}

In the same manner, we obtain the uncompromising property, which is simple but useful.
% A full proof can be found in Appendix \ref{ProofStillDictatorLemma}.
The name of this property refers back to an earlier paper~\cite{Border1983Straightforward}, which is meant to have a similar purpose.

\begin{lemma}[Uncompromising property] \label{StillDictatorLemma}
Let $f$ be a strategyproof mechanism. Let $\vx$ be a profile such that the output is deterministic, and let $y = f(\vx)$. We claim that $f(\vx') = y$, if either $x_i' = x_i$ or $x_i' = y$ holds for all $i \in N$.
\end{lemma}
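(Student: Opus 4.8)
The plan is to reduce the claim to a single-agent relocation and then iterate. Let $S = \{i \in N : x_i' = y\}$ be the set of agents whose reports change to $y$; every other agent satisfies $x_i' = x_i$ and never moves. I would enumerate $S = \{i_1, \ldots, i_k\}$ in an arbitrary order and build a chain of profiles $\vx = \vx^{(0)}, \vx^{(1)}, \ldots, \vx^{(k)} = \vx'$, where $\vx^{(t)}$ is obtained from $\vx^{(t-1)}$ by resetting the $i_t$-th coordinate from $x_{i_t}$ to $y$. The goal is then to prove by induction on $t$ that $f(\vx^{(t)})$ is the point mass at $y$.

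The base case $t = 0$ is the hypothesis. For the inductive step I would suppose $f(\vx^{(t-1)})$ is deterministically $y$ and consider agent $i_t$, whose true location in $\vx^{(t)}$ is $y$. If this agent misreports their old location $x_{i_t}$, the profile reverts exactly to $\vx^{(t-1)}$ (agents outside $S$, as well as $i_{t+1}, \ldots, i_k$, sit inertly inside $\vx^{(t)}_{-i_t}$ and are unaffected). Strategyproofness applied to $i_t$ therefore yields
\[
\|f(\vx^{(t)}) - y\| \;\leq\; \|f(\vx^{(t-1)}) - y\| \;=\; 0.
\]
Since the left-hand side is the expected distance $\E_{w \sim f(\vx^{(t)})} \|w - y\|$, a nonnegative quantity, it must equal zero, which forces the whole distribution $f(\vx^{(t)})$ to be concentrated at $y$. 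Chaining the $k$ steps gives $f(\vx') = f(\vx^{(k)}) = y$.

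The one delicate point — and the reason the same simple argument covers randomized mechanisms — is that a zero cost pins down the entire output distribution, not merely its mean: $\E_{w}\|w - y\| = 0$ with a nonnegative integrand forces $w = y$ almost surely, i.e.\ the point mass at $y$. This is precisely what makes the inductive invariant ``$f(\vx^{(t)})$ is deterministically $y$'' self-sustaining, because it guarantees the right-hand side of the next strategyproofness inequality is again zero. I do not expect a genuine obstacle here; the only things to verify carefully are that each $\vx^{(t)}$ differs from $\vx^{(t-1)}$ in a single coordinate so the single-agent strategyproofness inequality applies verbatim, and that the choice of ordering of $S$ is immaterial. It is worth remarking that the proof uses only strategyproofness, not unanimity.
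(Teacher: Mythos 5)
Your proof is correct and follows essentially the same route as the paper: both move the agents in $S$ to $y$ one at a time and use the single-agent strategyproofness inequality $\|f(y,\vx_{-i})-y\| \leq \|f(\vx)-y\| = 0$ to keep the output pinned at $y$ throughout. Your extra remark that zero expected distance forces a point mass at $y$ is a welcome clarification of a point the paper leaves implicit for randomized outputs, but it is not a different argument.
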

\begin{proof}
$\forall i \in N$, we claim that $f(y, \vx_{-i}) = y$, that is, if $x_i$ moves to the output, the output will not change. If not, agent $i$ would gain by misreporting $x_i$ instead of $y$. Formally, 
\[\|f(y, \vx_{-i}) - y\|\leq \|f(\vx) - y\| = 0.\]

Applying the claim for multiple times, we can move any group of agents to $y$ one by one, while the output stands still. Therefore, $f(\vx') = y$.
\end{proof} 

Derived from strategyproofness, we then find a general property that characterizes the relation between one's movement and the cost. We will show that the distance from an agent to the output cannot have a sudden change when an agent is moving slowly.
Intuitively, whenever there is a sudden distance change, the agent can always gain by misreporting the one with lower cost. In the proof, we first establish the 1-Lipschitz property, i.e., the distance change cannot be larger than one's movement. This property
also holds for randomized mechanisms, so we consider the expected distance.
% See Appendix \ref{ProofContinuousLemma} for the detailed proof.

\begin{lemma}[Cost continuity] \label{ContinuousLemma}
Let $f$ be a strategyproof mechanism. $\forall i \in N$, for any fixed $\vx_{-i} \in \R_d^{n-1}$, the distance between $x_i$ and the output of $(x_i, \vx_{-i})$ 
\[\mu(x_i) \triangleq \|f(x_i, \vx_{-i}) - x_i\|\]
is a continuous function. Moreover, $\forall x_i, x_i' \in \R_d$,
\[\|\mu(x_i) - \mu(x_i')\| \leq \|x_i - x_i'\|.\]
\end{lemma}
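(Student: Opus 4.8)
The plan is to prove the two claims in the natural order: first the $1$-Lipschitz bound $|\mu(x_i) - \mu(x_i')| \le \|x_i - x_i'\|$, and then deduce continuity of $\mu$ as an immediate consequence (since any $1$-Lipschitz function is continuous). Throughout I fix $\vx_{-i}$ and abbreviate $P = f(x_i, \vx_{-i})$ and $P' = f(x_i', \vx_{-i})$, so that $\mu(x_i) = \|x_i - P\|$ and $\mu(x_i') = \|x_i' - P'\|$ in the expected-distance sense.

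The core of the argument is a two-sided application of strategyproofness to agent $i$. First I would apply strategyproofness at the profile with true location $x_i$: since reporting $x_i'$ instead is not profitable, $\|x_i - P\| \le \|x_i - P'\|$. Symmetrically, at the profile with true location $x_i'$, reporting $x_i$ is not profitable, giving $\|x_i' - P'\| \le \|x_i' - P\|$. From the first inequality, $\mu(x_i) = \|x_i - P\| \le \|x_i - P'\|$, and I would bound $\|x_i - P'\|$ by the triangle inequality for the expected distance: $\|x_i - P'\| = \E_{y \sim P'}\|x_i - y\| \le \E_{y \sim P'}(\|x_i - x_i'\| + \|x_i' - y\|) = \|x_i - x_i'\| + \|x_i' - P'\| = \|x_i - x_i'\| + \mu(x_i')$. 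Hence $\mu(x_i) - \mu(x_i') \le \|x_i - x_i'\|$. Running the same chain starting from the second strategyproofness inequality yields $\mu(x_i') - \mu(x_i) \le \|x_i - x_i'\|$, and combining the two gives $|\mu(x_i) - \mu(x_i')| \le \|x_i - x_i'\|$.

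The only point that needs care is that the triangle inequality is being applied under an expectation over the distribution $P'$, rather than to a single point. I would verify this explicitly: for each realized facility location $y$, the norm triangle inequality gives $\|x_i - y\| \le \|x_i - x_i'\| + \|x_i' - y\|$ pointwise, and taking expectation over $y \sim P'$ preserves the inequality by linearity and monotonicity of $\E$, with $\E_{y \sim P'}\|x_i - x_i'\| = \|x_i - x_i'\|$ since that term does not depend on $y$. This is the step where the randomized case genuinely differs from the deterministic one, and stating it carefully is what makes the lemma valid for both. Once the Lipschitz bound is in hand, continuity is automatic: for any $x_i$ and any sequence $x_i' \to x_i$, we have $|\mu(x_i) - \mu(x_i')| \le \|x_i - x_i'\| \to 0$. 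I do not expect a serious obstacle here — the argument is short — but the subtlety worth flagging is making sure both directions of strategyproofness are invoked (one to control $\mu(x_i) - \mu(x_i')$ and one for the reverse), since a single application only yields a one-sided bound.
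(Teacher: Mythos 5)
Your proof is correct and takes essentially the same approach as the paper's: both combine a single application of strategyproofness for agent $i$ with the triangle inequality for the expected distance, the paper merely phrasing the argument as a proof by contradiction (with a WLOG choice of sign) where you argue the two one-sided bounds directly. Your explicit check that the triangle inequality survives the expectation over $y \sim P'$ is sound and matches what the paper leaves implicit.
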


\begin{proof}
Assume for contradiction that  $\exists x_i, x_i' \in \R_d$ such that $\|\mu(x_i) - \mu(x_i')\| > \|x_i - x_i'\|$. Without loss of generality, assume $\mu(x_i) - \mu(x_i') > \|x_i - x_i'\|$. If agent $i$ misreports $x_i'$ instead of $x_i$, then
\begin{align*}
\|f(x_i', \vx_{-i}) - x_i\| ~&\leq \|f(x_i', \vx_{-i}) - x_i'\| + \|x_i - x_i'\| \\
    ~&= \mu(x_i') + \|x_i - x_i'\| \\
    ~&< \mu(x_i) \\
    ~&= \|f(x_i, \vx_{-i}) - x_i\|,
\end{align*}
which disobeys strategyproofness. Therefore, $\forall x_i, x_i'$, $\|\mu(x_i) - \mu(x_i')\| \leq \|x_i - x_i'\|$.

$\forall x_i$, let $x_i' \to x_i$, then we obtain the continuity of $\mu$ at all points.
\end{proof}

\subsection{Output Space Reduction}

%Unlike networks, in our multi-dimensional compact space, things become quite complicated.

%For simplicity (not exactly, later we will see how they can be used.), we begin with the case $n = 2$. \ly{@Shengyu, do you mean "For convenience, we start with the base case where $n=2$. The result of this base case will be used afterward." or something else?}

% Then we begin with the base case where $n = 2$. The following lemma aims to reduce the output space into a line segment between the two agents.
Output space reduction plays an important role in our characterizations; this can be done by considering the situation when all agents collaborate to misreport the same location, and then the output must be that point as well by unanimity. When $n = 2$, due to the strict convexity, as long as the output does not lie on the segment between the two agents, there exists such location where the group-strategyproofness can be violated.
% (see Appendix \ref{ProofSegmentLemma} for the detailed proof). Later in Section \ref{section:randomized}, we will show that similar idea also applies to randomized mechanisms.

\begin{lemma} \label{SegmentLemma}
Suppose $f$ is deterministic, unanimous, and group-strategyproof. When $n = 2$, $\forall \vx$, $f(\vx)$ lies on $\overline{x_1x_2}$~\footnote{For any $x\in \R_d$, $y\in \R_d$, we denote $\overline{xy}$ as the segment between $x$ and $y$, that is, the set $\{\xi x + (1-\xi) y \mid \xi \in [0,1]\}$.}.
\end{lemma}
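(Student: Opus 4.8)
The plan is to argue by contradiction, following the output space reduction idea flagged earlier: if $f(\vx) = y$ failed to lie on $\overline{x_1x_2}$, I would exhibit a single point $z$ that \emph{both} agents strictly prefer to $y$, and then let both agents jointly report $z$. By unanimity $f(z,z) = z$, so this deviation (with coalition $S = \{1,2\}$) strictly decreases both agents' costs, contradicting group-strategyproofness. The degenerate case $x_1 = x_2$ is immediate, since unanimity forces $f(\vx) = x_1 = x_2$, which lies on the single-point segment; so I may assume $x_1 \neq x_2$, and because $x_1, x_2 \in \overline{x_1x_2}$ while $y \notin \overline{x_1x_2}$, also $y \neq x_1$ and $y \neq x_2$. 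Write $r_1 = \|y - x_1\| > 0$ and $r_2 = \|y - x_2\| > 0$.

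The heart of the argument is a purely geometric claim, and this is exactly where strict convexity enters. Writing $x_2 - x_1 = (x_2 - y) + (y - x_1)$ and applying the triangle inequality gives $\|x_2 - x_1\| \le r_1 + r_2$; by the equality characterization of strict convexity recorded in the preliminaries, equality can hold only when $x_2 - y$ and $y - x_1$ point in the same direction, which is precisely the condition that $y \in \overline{x_1x_2}$. Since $y$ lies off the segment, the inequality is strict: $\|x_2 - x_1\| < r_1 + r_2$.

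I would then locate $z$ directly on the segment. Parametrize $w(t) = (1-t)x_1 + t x_2$ for $t \in [0,1]$; by homogeneity of the norm, $\|w(t) - x_1\| = t\,\|x_2 - x_1\|$ and $\|w(t) - x_2\| = (1-t)\,\|x_2 - x_1\|$. Hence the two target inequalities $\|w(t) - x_1\| < r_1$ and $\|w(t) - x_2\| < r_2$ reduce to the single condition $t \in \bigl(1 - r_2/\|x_2 - x_1\|,\; r_1/\|x_2 - x_1\|\bigr)$, and the strict bound $\|x_2 - x_1\| < r_1 + r_2$ established above is exactly what makes this open interval nonempty; a short check (using $r_1, r_2 > 0$) confirms that it meets $[0,1]$. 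Choosing any such $t$ and setting $z = w(t)$ completes the construction.

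The main obstacle is the geometric step, namely invoking the equality-case (strict) triangle inequality to obtain the sharp bound $\|x_2 - x_1\| < r_1 + r_2$ and then verifying that the induced range for $t$ is nonempty and intersects $[0,1]$. Once $z$ is secured, the group deviation together with the appeal to unanimity is routine and yields the contradiction.
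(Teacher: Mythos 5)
Your proposal is correct and takes essentially the same route as the paper's own proof: both argue by contradiction, use strict convexity to get the strict triangle inequality $\|x_1 - f(\vx)\| + \|f(\vx) - x_2\| > \|x_1 - x_2\|$, produce a point on $\overline{x_1x_2}$ strictly closer to both agents, and invoke unanimity under a joint deviation to contradict group-strategyproofness. The only differences are cosmetic: you spell out the parametrized choice of the deviation point and the degenerate case $x_1 = x_2$, which the paper treats as immediate ("we can easily find an $y$").
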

\begin{proof}

Suppose for the sake of contradiction that there exists $\vx$ such that $f(\vx)$ does not lie on $\overline{x_1x_2}$. Due to the strict convexity, $f(\vx) - x_1$ and $f(\vx) - x_2$ are not in the same direction and thus the following triangle inequality holds strictly:
\begin{align*}
\|x_1 - f(\vx)\| + \|f(\vx) - x_2\| > \|x_1 - x_2\|.
\end{align*}
However, $\forall y \in \overline{x_1x_2}$,
\begin{align*}
\|x_1 - y\| + \|y - x_2\| = \|x_1 - x_2\|,
\end{align*}
and thus we can easily find an $y$ such that $\|y - x_1\| < \|f(\vx) - x_1\|$ and $\|y - x_2\| < \|f(\vx) - x_2\|$, i.e., $y$ is a strictly better choice for both agents. If both agents misreport $y$, then the output must be $y$ by unanimity, contradicting group-strategyproofness.
\end{proof}

% \begin{corollary} \label{SegmentCorollary}
% Suppose $f$ is deterministic, unanimous, and group-strategyproof. When $n = 2$, $\forall \vx$, $f(\vx)$ lies on $\overline{x_1x_2}$~\footnote{For any $x\in \R_d$, $y\in \R_d$, we denote $\overline{xy}$ as the segment between $x$ and $y$, that is, the set $\{\xi x + (1-\xi) y \mid \xi \in [0,1]\}$.}.
% \end{corollary}

% Lemma \ref{SegmentLemma} indicates that the facility location is determined by the distance between the output and an agent. Combined with the cost continuity, a corollary quickly follows.

% \begin{corollary} \label{ContinuousChoiceCorollary}
% When $n = 2$, if $f$ is deterministic, unanimous, and group-strategyproof, then $f(\vx)$ is a continuous function.
% \end{corollary}

\subsection{Base Case}

%Lemma \ref{SegmentLemma} and Lemma \ref{ContinuousLemma} show some basic properties in our settings.
Then we begin with the base case where $n = 2$. The result of this base case will be used afterwards for induction.
% Then we will give more insights into the domination of dictatorship ---
The following lemma states that if it is dictatorial for some profile, then it is dictatorial over all profiles. %(See the proof in Appendix \ref{ProofPartialDictatorLemma}.)

\begin{lemma} \label{PartialDictatorLemma}
Suppose $f$ is deterministic, unanimous, and group-strategyproof. When $n = 2$, if there exist $x_1 \neq x_2$ such that $f(\vx) = x_1$, then agent $1$ is the dictator in all profiles.
\end{lemma}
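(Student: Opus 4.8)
The plan is to leverage the uncompromising property (Lemma~\ref{StillDictatorLemma}) together with Lemma~\ref{SegmentLemma} and cost continuity to spread the dictatorship from a single profile out to the entire space. We are given one profile $\vx = (x_1, x_2)$ with $x_1 \neq x_2$ and $f(\vx) = x_1$; we want to conclude $f(\vx') = x_1'$ for \emph{every} profile $\vx' = (x_1', x_2')$. Since all outputs already lie on the segment $\overline{x_1'x_2'}$ by Lemma~\ref{SegmentLemma}, the task reduces to ruling out the possibility that the output ever moves off of agent~$1$'s reported location and toward agent~$2$.

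First I would move agent~$2$ while keeping agent~$1$ fixed at $x_1$. By the uncompromising property, setting $x_2' = x_1$ (or $x_2' = f(\vx) = x_1$) keeps the output at $x_1$; more usefully, I would argue that for the profile $(x_1, x_2')$ with $x_2'$ arbitrary, the output stays at $x_1$. The cleanest route is cost continuity (Lemma~\ref{ContinuousLemma}) applied to agent~$2$: with $x_1$ fixed, the function $\mu(x_2') = \|f(x_1, x_2') - x_2'\|$ is continuous, and since the output lies on $\overline{x_1 x_2'}$, its position along that segment is a continuous one-parameter quantity. Starting from $x_2 = x_2$ where the output coincides with $x_1$ (so the output is at the endpoint farthest from agent~$2$), I would show that the output cannot continuously detach from $x_1$ without creating a profitable joint deviation: whenever $f(x_1, x_2')$ is strictly interior to the segment, strict convexity lets both agents find a commonly preferred point, contradicting group-strategyproofness exactly as in Lemma~\ref{SegmentLemma}. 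Hence the output stays pinned at $x_1$ for all $x_2'$, establishing that agent~$1$ dictates whenever she reports $x_1$.

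Next I would slide agent~$1$ from $x_1$ to an arbitrary target $x_1'$, again invoking cost continuity, now for agent~$1$, to track the output. The key is that at each infinitesimal step the output must ``follow'' agent~$1$: if the output ever lagged behind and sat strictly between the two reported points, the same strict-convexity argument would furnish a joint misreport strictly improving both agents. Combining the two sweeps — first fixing agent~$1$ and freeing agent~$2$, then freeing agent~$1$ — covers all profiles $(x_1', x_2')$ and forces $f(x_1', x_2') = x_1'$ throughout.

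The main obstacle I anticipate is the continuity argument that the output cannot ``jump'' from the dictator's endpoint to an interior (or opposite) point of the segment. Cost continuity controls the \emph{distance} $\mu$, not directly the output's location, so I must translate distance-continuity into position-continuity along the segment and then close the gap with the group-strategyproofness contradiction. A subtlety is handling the degenerate configurations where $x_1'$ and $x_2'$ become collinear with the original points or coincide; there I would fall back directly on unanimity and the uncompromising property rather than on the convexity estimate. I expect the bookkeeping of which agent is held fixed during each continuity sweep to be where the argument is most delicate, but no single step should require heavy computation.
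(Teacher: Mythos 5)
Your overall plan (pin the output at $x_1$ while sweeping agent $2$, then sweep agent $1$ to an arbitrary $x_1'$) matches the paper's decomposition into Lemma~\ref{WeakDictatorLemma} plus a three-step argument, but the engine you use to keep the output pinned does not work. At both sweeps your contradiction is: ``whenever the output is strictly interior to the segment between the two reports, strict convexity lets both agents find a commonly preferred point, contradicting group-strategyproofness exactly as in Lemma~\ref{SegmentLemma}.'' That claim is false. The convexity argument in Lemma~\ref{SegmentLemma} only bites when the output is \emph{off} the segment: for any $y$ on $\overline{x_1x_2'}$ one has $\|x_1-y\|+\|y-x_2'\|=\|x_1-x_2'\|$, while every $z\in\R_d$ satisfies $\|x_1-z\|+\|z-x_2'\|\ge\|x_1-x_2'\|$, so any move that strictly helps one agent strictly hurts the other. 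Interior points of the segment are Pareto optimal for the pair and cannot be excluded by any joint-deviation argument; if they could, Lemma~\ref{SegmentLemma} alone would already force the output to an endpoint in every profile and most of the paper's Section~\ref{section:deterministic} would be unnecessary. So both of your sweeps lack their actual contradiction, and this is the heart of the lemma, not bookkeeping.

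What fills the gap in the paper is \emph{individual} strategyproofness, used in two different ways. For the agent-$2$ sweep (Lemma~\ref{WeakDictatorLemma}): if $f(x_1,x_2)=x_1$ and agent $2$ moves to $x_2'$ with $\|x_2'-x_2\|<\|x_1-x_2\|$, strategyproofness of the \emph{moving} agent gives $\|f(x_1,x_2')-x_2\|\ge\|x_1-x_2\|$, and every point of $\overline{x_1x_2'}$ other than $x_1$ is strictly closer to $x_2$ than that; hence $f(x_1,x_2')=x_1$, and an arbitrary $x_2'$ is reached by a finite chain of such small steps (this route needs $d\ge 2$). For the agent-$1$ sweep, the paper first places agent $2$ at a point $\hat{x}_2$ chosen so that the target $x_1'$ lies on $\overline{x_1\hat{x}_2}$ --- a collinear setup your proposal never makes --- and then runs an intermediate-value argument on $\xi\mapsto\|f(\xi x_1+(1-\xi)x_1',\hat{x}_2)-\hat{x}_2\|$, which is continuous by Lemma~\ref{ContinuousLemma} combined with the segment structure: if $f(x_1',\hat{x}_2)\neq x_1'$, there is an intermediate report $y$ of agent $1$ whose output is exactly $x_1'$, and then agent $1$ with true location $x_1'$ profitably misreports $y$ to get cost zero, contradicting strategyproofness. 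Group-strategyproofness enters only through Lemma~\ref{SegmentLemma}, to confine the output to the segment; everything that pins the output to the dictator's endpoint is an individual-deviation argument, which is precisely the ingredient missing from your proposal.
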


In the proof (see Appendix \ref{ProofPartialDictatorLemma}), we start with a weaker lemma, that is, by changing the location of the other agent, the dictator remains the same. The intuition behind this lemma is simply to move by steps, while preserving the only possible output to be the location of the dictator. Also, the route requires at least two dimensions, so this statement is not true for the one-dimensional setting (think that one can choose the leftmost as well as the rightmost point).

Now we are ready for the dictatorship in this special case.
%Then we show the dictatorship when $n=2$, which is the special case of Theorem \ref{DictatorialTheorem}.
In the proof, we actually show that the condition of Lemma \ref{PartialDictatorLemma} always holds with respect to some agent, otherwise it would contradict the uncompromising property.
% (see Appendix \ref{ProofSpecialDictatorialTheorem} for details).

\begin{lemma} \label{SpecialDictatorialTheorem}
When $n = 2$, if $f$ is deterministic, unanimous, and group-strategyproof, then $f$ is dictatorial.
\end{lemma}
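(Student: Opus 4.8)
The plan is to reduce everything to Lemma~\ref{PartialDictatorLemma}, which already globalizes a single ``dictatorial instance'' into full dictatorship. Thus it suffices to exhibit \emph{one} profile $\vx$ with $x_1 \neq x_2$ whose output coincides with one of the two reported locations. So I would fix an arbitrary profile $\vx = (x_1, x_2)$ with $x_1 \neq x_2$, set $y = f(\vx)$, and note that by Lemma~\ref{SegmentLemma} the point $y$ lies on the segment $\overline{x_1 x_2}$. The whole argument is then a case split on where $y$ lands on this segment.

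If $y = x_1$ (respectively $y = x_2$), the hypothesis of Lemma~\ref{PartialDictatorLemma} is met verbatim, so agent $1$ (respectively agent $2$) is the dictator and we are done. The only remaining possibility is that $y$ lies strictly in the interior of the segment, i.e.\ $y \neq x_1$ and $y \neq x_2$. For this case I would invoke the uncompromising property (Lemma~\ref{StillDictatorLemma}): applying it with $x_1' = y$ and $x_2' = x_2$ (both admissible, since $y = f(\vx)$ and $x_2 = x_2$) yields $f(y, x_2) = y$. Now the profile $(y, x_2)$ has distinct reports, because in the interior case $y \neq x_2$, and its output equals the first agent's reported location $y$. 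This is exactly the situation governed by Lemma~\ref{PartialDictatorLemma}, applied with the roles $x_1 := y$ and $x_2 := x_2$, so agent $1$ is the dictator. In every case $f$ is dictatorial.

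I do not expect a genuine obstacle here, since the substantive work is already packaged into the earlier lemmas and the argument is a clean three-way case analysis. The one step carrying the real weight is the interior case, where the uncompromising property lets me ``snap'' an interior output onto a profile that realizes an endpoint output, thereby reducing to the first two cases. The only point to watch is that the snapped profile $(y, x_2)$ remains non-degenerate, which holds precisely because a strictly interior point of a non-degenerate segment differs from both endpoints. I would also emphasize that a single starting profile suffices: Lemma~\ref{PartialDictatorLemma} performs all the globalization, so no continuity (Lemma~\ref{ContinuousLemma}) or scale-reduction machinery is needed to close this base case.
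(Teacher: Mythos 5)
Your proof is correct and takes essentially the same route as the paper's: the paper likewise handles the interior case by applying the uncompromising property to get $f(y, x_2) = y$ and then invoking Lemma~\ref{PartialDictatorLemma}, the only cosmetic differences being that the paper phrases that case as a contradiction (dictatorship of agent $1$ would force $f(\vx) = x_1$, contradicting $y \neq x_1$) whereas you read off dictatorship directly, and that your appeal to Lemma~\ref{SegmentLemma} is harmless but not actually needed for the case split. Both formulations are valid.
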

\begin{proof}
Let $\vx$ be any profile such that $x_1 \neq x_2$, and $y = f(\vx)$. However, if $y \neq x_1$ and $y \neq x_2$, by the uncompromising property, we obtain $f(y, x_2) = y$. Therefore, by Lemma \ref{PartialDictatorLemma}, agent $1$ is the dictator in all profiles, which contradicts $y \neq x_1$.

Since either $y = x_1$ or $y = x_2$, $f$ is dictatorial (by Lemma \ref{PartialDictatorLemma}).
\end{proof}

\subsection{Scale Reduction}

Our theorem below generalizes the base case for any $n\geq 2$.
Intuitively, if we divide the agents into two groups where each group of agents shares the same location, we may conclude that one of them is the group of dictators. We first show that there exists a group of dictators containing $n-1$ agents, and then reduce the $n$-agent game into an $(n-1)$-agent problem by fixing the location of the non-dictator. By induction, there should be a dictator in the group, and finally we show that this dictator keeps to be the same agent regardless of the location of the non-dictator.
% See Appendix \ref{ProofDictatorialTheorem} for the detailed proof.

\begin{theorem} \label{DictatorialTheorem}
If $f$ is deterministic, unanimous, and group-strategyproof, then $f$ is dictatorial.
\end{theorem}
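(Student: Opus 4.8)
The plan is to prove Theorem~\ref{DictatorialTheorem} by induction on $n$, using Lemma~\ref{SpecialDictatorialTheorem} as the base case ($n = 2$) and reducing the general $n$-agent game to smaller games. The scale reduction works in two stages: first I would identify a group of $n-1$ agents that collectively dictate the outcome, and then I would pin down a single agent within that group as the true dictator.

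\textbf{Stage 1 (finding a dictator group of size $n-1$).} First I would partition the agents into two nonempty groups and force each group to report a common location, say all agents in group $A$ report $a$ and all in group $B$ report $b$ with $a \neq b$. Binding the agents of each group together yields a reduced $2$-agent game: the key observation is that this reduced game inherits unanimity and group-strategyproofness (a coalition in the reduced game corresponds to a coalition in the original game that happens to move in lockstep, and group-strategyproofness is preserved under such restrictions). By the base case Lemma~\ref{SpecialDictatorialTheorem}, one of the two groups is the dictator group, i.e.\ the output equals that group's common location. I would then argue that there is a dictator group of size exactly $n-1$: splitting off a single agent $i$ as one group and the remaining $n-1$ agents as the other, either agent $i$ alone dictates every such profile (in which case a short argument via the uncompromising Lemma~\ref{StillDictatorLemma} should extend this to all profiles and we are done), or the group $N \setminus \{i\}$ is the dictator group for these ``partially unanimous'' profiles.

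\textbf{Stage 2 (reducing to $n-1$ agents and fixing the dictator).} Given a dictator group $N\setminus\{i\}$ of size $n-1$, I would fix agent $i$'s location to an arbitrary constant $c$ and define the reduced mechanism $g$ on the remaining $n-1$ agents by $g(x_1,\dots,x_{i-1},x_{i+1},\dots,x_n) = f(x_1,\dots,x_{i-1},c,x_{i+1},\dots,x_n)$. The partial-unanimity property established in Stage~1 (when the $n-1$ agents agree on a point, the output is that point regardless of $x_i = c$) gives unanimity of $g$; group-strategyproofness of $g$ follows because any profitable coalitional deviation among the $n-1$ agents in $g$ is also one in $f$. By the induction hypothesis, $g$ is dictatorial, so some agent $j \neq i$ is a dictator \emph{for the fixed value} $x_i = c$.

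\textbf{The main obstacle} is the final step: showing the dictator $j$ is independent of the fixed location $c$ of agent $i$, so that $j$ dictates over all profiles and not merely on the slice $x_i = c$. For this I would invoke cost continuity (Lemma~\ref{ContinuousLemma}): as $x_i$ moves slowly, the output $f(\vx)$ cannot jump discontinuously in the cost experienced by agent $j$, which should prevent the identity of the dictator from switching between different slices $x_i = c$ and $x_i = c'$. Concretely, if agent $j$ dictated on the slice $x_i = c$ but some other agent $k$ dictated on $x_i = c'$, then along a path from $c$ to $c'$ the output must at some point leave $x_j$, contradicting the continuity of $j$'s cost together with the fact that $j$ pins the output exactly on one slice. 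I expect this ``same dictator across slices'' argument to require the most care, since it is precisely where the geometry of the strictly convex space and the $1$-Lipschitz cost bound must be combined to rule out a discontinuous change of dictator; once it is settled, we conclude that agent $j$ is the dictator in all profiles and $f$ is dictatorial.
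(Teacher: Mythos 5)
Your scaffolding (induction on $n$, binding agents into groups, invoking Lemma~\ref{SpecialDictatorialTheorem} as the base case, then fixing one agent's location and applying the induction hypothesis slice by slice) is exactly the paper's, but two steps fail as written. First, in Stage~1, the case ``agent $i$ alone dictates every bound profile'' is \emph{not} closed by ``a short argument via Lemma~\ref{StillDictatorLemma}'': the uncompromising property only lets you move agents \emph{onto} the current output, so from $f(y_i, y, \dots, y) = y_i$ you can only generate profiles whose entries lie in the two-point set $\{y_i, y\}$, never an arbitrary profile. The paper handles this case differently: if the singleton $\{1\}$ dictates under the partition $(\{1\}, \{2,\dots,n\})$, then the uncompromising property (moving agents $2,\dots,n-1$ onto the output $y_1$) shows that the \emph{big} group $\{1,\dots,n-1\}$ dictates under the other partition $(\{1,\dots,n-1\},\{n\})$. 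So one of the two partitions always has its size-$(n-1)$ group dictating, and you proceed to Stage~2 with that partition; the singleton case is rerouted into the other case, not finished on the spot.

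Second, and more seriously, your final ``same dictator across slices'' step misuses Lemma~\ref{ContinuousLemma}. That lemma controls the cost of the \emph{moving} agent: $\|f(x_i,\vx_{-i}) - x_i\|$ is $1$-Lipschitz in $x_i$. It says nothing about $\|f(x_i,\vx_{-i}) - x_j\|$ for $j \neq i$, and under mere strategyproofness another agent's cost \emph{can} jump as agent $i$ moves (e.g.\ $f(\vx) = $ the lattice point of $\mathbb{Z}^d$ nearest to $x_1$ is strategyproof, yet agent $2$'s cost is discontinuous in $x_1$); so ``contradicting the continuity of $j$'s cost'' is not an available move. Your route can be repaired — along a path $x_i(t)$ the output equals $x_{j(t)}$ for the slice dictator $j(t)$, so the \emph{mover's own} cost $\|x_{j(t)} - x_i(t)\|$ is continuous, and for a generically chosen profile $\vx_{-i}$ (all other agents at very different distances from every point of the path) this forces $j(t)$ to be locally constant, hence constant — but the paper's argument is far simpler and needs no continuity at all: if agent $2$ dictates on the slice $x_1 = c$ and agent $3$ dictates on the slice $x_1 = c'$, consider the profile in which agent $2$ reports $c'$ and agent $3$ reports $c$. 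Reporting truthfully, agent $1$ located at $c$ gets output $c'$; misreporting $c'$ switches the dictator to agent $3$ and moves the output to $c$, giving agent $1$ cost zero. This violates plain strategyproofness. You should either carry out the genericity/continuity repair in full or replace the step with this two-profile contradiction.
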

\begin{proof}
We prove by induction on $n$. By Lemma \ref{SpecialDictatorialTheorem}, it holds when $n = 2$.
Now we assume that $n \geq 3$, and assume it holds for $n - 1$.

%When $n \geq 2$, first divide $N$ into two non-empty groups $N_1$, $N_2$, where $N_1 \cup N_2 = N$ and $N_1 \cap N_2 = \varnothing$. $\forall y_1, y_2 \in \R_d$, denote $\vx(y_1, y_2)$ as the profile where $x_i = y_1$ for all $i \in N_1$ and $x_i = y_2$ for all $i \in N_2$.%

First, we construct two new mechanisms $g_1$, $g_2$, where there are only two agents for each: $\forall \vy = (y_1, y_2)$, let
\begin{align*}
g_1(y_1, y_2) &= f(y_1, y_2, \dots, y_2), \\
g_2(y_1, y_2) &= f(y_1, \dots, y_1, y_2).
\end{align*}
In short, we bind the two groups of agents $\{2, \dots, n\}$, $\{1, \dots, n - 1\}$ respectively and then construct $g_1$, $g_2$.

It is clear that $g_1$, $g_2$ are deterministic, unanimous, and group-strategyproof, so $g_1$, $g_2$ are dictatorial (by Lemma \ref{SpecialDictatorialTheorem}). If agent $1$ ($y_1$) is the dictator of $g_1$, then agent $2$ ($y_2$) is \emph{not} the dictator of $g_2$, because, by the uncompromising property,
\[f(y_1, y_2, \dots, y_2) = y_1 \implies f(y_1, \dots, y_1, y_2) = y_1.\]
Thus, either agent $2$ is the dictator of $g_1$, or agent $1$ is the dictator of $g_2$. Without loss of generality, assume agent $2$ is the dictator of $g_1$.

%Then we define a new mechanism $g$, where there are only two agents, and $\forall \vy = (y_1, y_2)$,
%\[g(y_1, y_2) = f(\vx(y_1, y_2)).\]
%In short, we bind the two groups of agents respectively and then construct this new mechanism. It is clear that $g$ is deterministic, unanimous, and group-strategyproof, so $g$ is dictatorial (by Lemma \ref{SpecialDictatorialTheorem}). Without loss of generality, assume $y_1$ is the dictator of $g$.

Then we construct a set of mechanisms, where there are exactly $n - 1$ agents for each: $\forall x_1$, $\forall \vx_{-1}$, let
\[f_{x_1}(\vx_{-1}) = f(x_{1}, \vx_{-1}).\]
It is clear that each reduced mechanism $f_{x_1}$ is deterministic and group-strategyproof, while its unanimity comes from the dictatorship of $g_1$. Therefore, $\forall x_1$, $f_{x_1}$ is dictatorial (by the induction assumption).

Then it suffices to show that all $f_{x_1}$ have a common dictator. Suppose for contradiction that there exist $x_1, x_1'$ such that $f_{x_1}$, $f_{x_1'}$ have different dictators. Without loss of generality, assume agents $2, 3$ are the dictators of $f_{x_1}$, $f_{x_1'}$, respectively. Consider the following profiles:
\begin{align*}
\vx &= (x_1, x_1', x_1, \vx_{-\{1, 2, 3\}}), \\
\vx' &= (x_1', x_1', x_1, \vx_{-\{1, 2, 3\}}),
\end{align*}
where $\vx_{-\{1, 2, 3\}}$ can be arbitrary. In this case, agent $1$ would misreport $x_1'$ instead of truthfully reporting $x_1$, as $f(\vx) = x_1'$ and $f(\vx') = x_1$, which leads to a contradiction.

Therefore, there exists $i \in \{2, \dots, n\}$ such that $\forall x_1$, $f_{x_1}$ is dictatorial and agent $i$ is the dictator. That is, $f$ is dictatorial.
\end{proof}

%Note that in the proof above, it does not mean that agent $1$ is not the dictator at all, because our consideration on $x_1$ is just W.L.O.G.

Theorem \ref{DictatorialTheorem} also shows an impossibility result with respect to the \emph{anonymity} --- a commonly used property in the literature, when the voters are unwilling to be identified.

\begin{corollary} \label{DictatorialCorollary}
No deterministic, unanimous, group-strategyproof mechanism is anonymous.
\end{corollary}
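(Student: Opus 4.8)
The plan is to derive a contradiction directly from the dictatorship characterization established in Theorem~\ref{DictatorialTheorem}. First I would recall the definitions: a mechanism $f$ is \emph{anonymous} if its output is invariant under any permutation of the agents' reported locations, that is, for every permutation $\pi$ of $N$ and every profile $\vx$, we have $f(x_{\pi(1)}, \ldots, x_{\pi(n)}) = f(\vx)$. By contrast, $f$ is \emph{dictatorial} (Definition on dictatorship) if there is a fixed agent $i$ with $f(\vx) = x_i$ for all $\vx$. The strategy is to show these two properties are incompatible whenever the mechanism is also deterministic, unanimous, and group-strategyproof.

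The key step is to apply Theorem~\ref{DictatorialTheorem}: any deterministic, unanimous, group-strategyproof mechanism \emph{must} be dictatorial, so there exists a dictator $i \in N$ with $f(\vx) = x_i$ for all $\vx$. It then suffices to exhibit a single profile on which anonymity fails for this dictatorial $f$. Since $n \geq 2$, I would pick a profile $\vx$ in which agent $i$'s location differs from at least one other agent $j$'s location, say $x_i \neq x_j$ (such a profile exists because the space $\R_d$ contains at least two distinct points). Applying the transposition $\pi$ that swaps $i$ and $j$ and fixes everything else, the dictator still reads off position $i$, but position $i$ now holds the value $x_j$, so $f(x_{\pi(1)}, \ldots, x_{\pi(n)}) = x_j \neq x_i = f(\vx)$. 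This violates anonymity, giving the desired contradiction.

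There is essentially no hard obstacle here, as the corollary is an immediate consequence of the preceding theorem; the only subtlety is being careful about the definition of anonymity and ensuring the chosen profile genuinely distinguishes the dictator's coordinate under the permutation. I would phrase the argument as a short proof by contradiction: assume $f$ is deterministic, unanimous, group-strategyproof, and anonymous; invoke Theorem~\ref{DictatorialTheorem} to get a dictator $i$; then the swap argument above contradicts anonymity. Hence no such anonymous mechanism can exist.
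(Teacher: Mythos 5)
Your proof is correct and follows exactly the route the paper intends: the corollary is stated as an immediate consequence of Theorem~\ref{DictatorialTheorem}, and the swap argument you spell out (a dictatorial mechanism fails anonymity on any profile where the dictator's location differs from another agent's) is precisely the implicit justification. No gaps; your write-up simply makes explicit what the paper leaves to the reader.
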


\section{Randomized Mechanisms}
\label{section:randomized}

%Compared with deterministic mechanisms, it seems that randomized mechanisms have more potential to achieve better approximations. In this section, we focus on the performance of randomized group-strategyproof mechanisms.

Compared with deterministic mechanisms, it seems that randomized mechanisms have more potential to achieve better approximations. We first consider the following mechanism, which is a variant of the prototype proposed by Procaccia and Tennenholtz~\cite{Procaccia2009Approximate}. In their one-dimensional setting, a $3/2$-approximation for maximum cost is guaranteed by randomly selecting the leftmost point, the rightmost point, and the midpoint. However, this is unachievable in multi-dimensional space and thus we simply use the locations of two fixed agents instead.

\begin{mechanism} \label{RandMed}
Given $\vx$, return $x_1$ with $1/4$ probability, $x_2$ with $1/4$ probability, and $(x_1 + x_2)/2$ with $1/2$ probability.
\end{mechanism}

%In the one-dimensional setting, the prototype of Mechanism \ref{RandMed}, where $x_1, x_2$ are replaced by the leftmost point and the rightmost point respectively, is group-strategyproof and $3/2$-approximation for maximum cost, and has been proved to be the best (see \cite{Procaccia2009Approximate}).

When $n = 2$, Mechanism \ref{RandMed} is group-strategyproof and $3/2$-approximation for maximum cost (by a similar proof to~\cite{Procaccia2009Approximate}), but it reduces to a trivial $2$-approximation for any $n \geq 3$. For social cost, it ensures an $n/2$-approximation (see Appendix \ref{ProofRandMed} for the proof), but finally we will show that this is almost tight as well.

Under the deterministic setting, translational invariance is not used since dictatorship already implies translational invariance. However, after we remove the deterministic constraint, there exist some strange mechanisms which are not translation-invariant. 

\begin{mechanism} \label{Separate2Dictator}
Given $\vx$, let $r$ be the first coordinate of $x_1$, and $y$ be the point on $\overline{x_1x_2}$ such that $\|x_1-y\| = \min\{|r - a|,\|x_1-x_2\|\}$. Similarly, let $y'$ be the point on $\overline{x_1x_3}$ such that $\|x_1-y'\| = \min\{|r - a|,\|x_1-x_3\|\}$. 

If $r \geq a$, return $x_1$ with $2/3$ probability, and $y$ with $1/3$ probability. Otherwise, return $x_1$ with $2/3$ probability, and $y'$ with $1/3$ probability.
\end{mechanism}

\begin{figure}%[htbp]
    \centering
    \subfigure[]{
    \label{SepCase1}
    \includegraphics[scale=0.1]{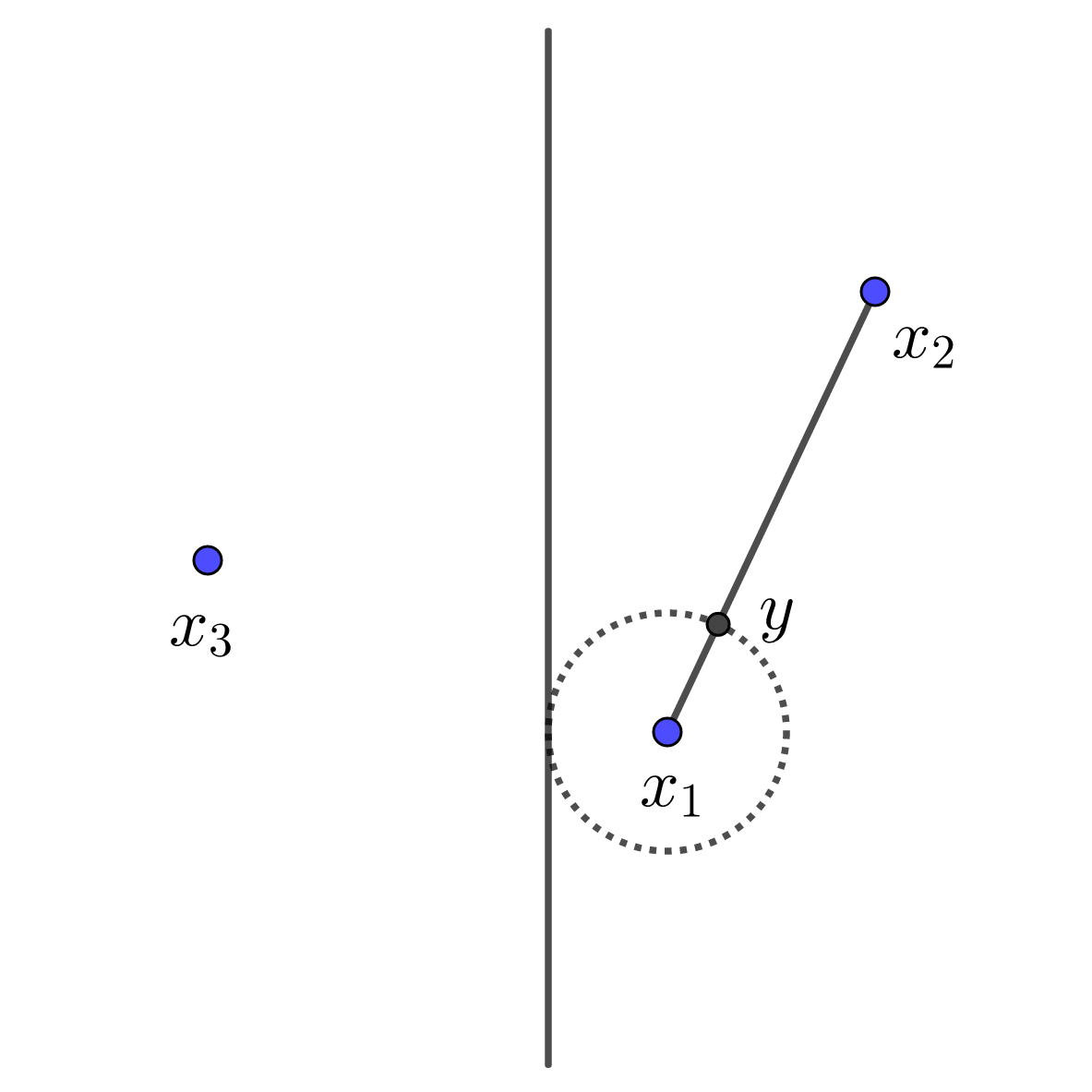}}
    \subfigure[]{
    \label{SepCase2}
    \includegraphics[scale=0.1]{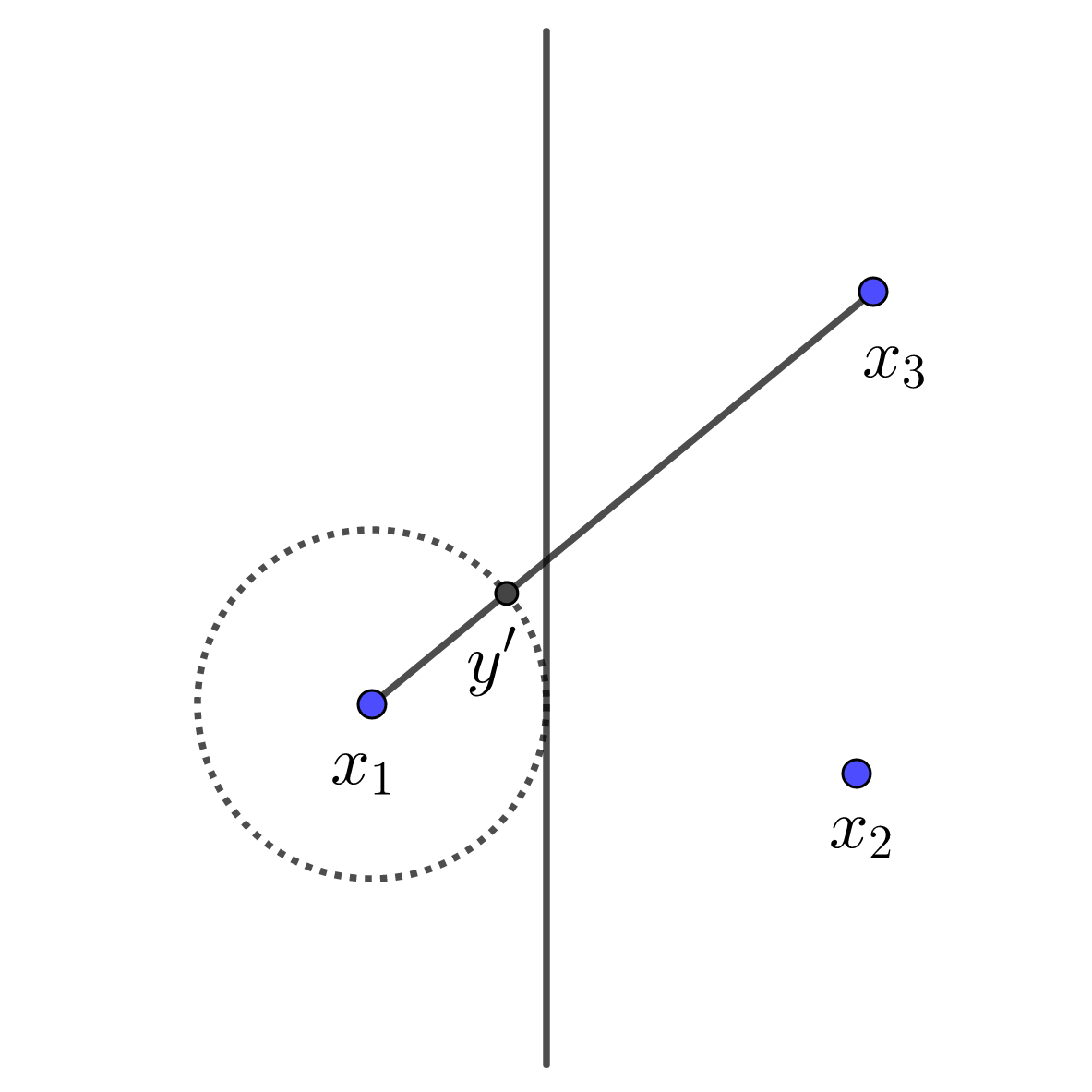}}
    \subfigure[]{
    \label{SepCase3}
    \includegraphics[scale=0.1]{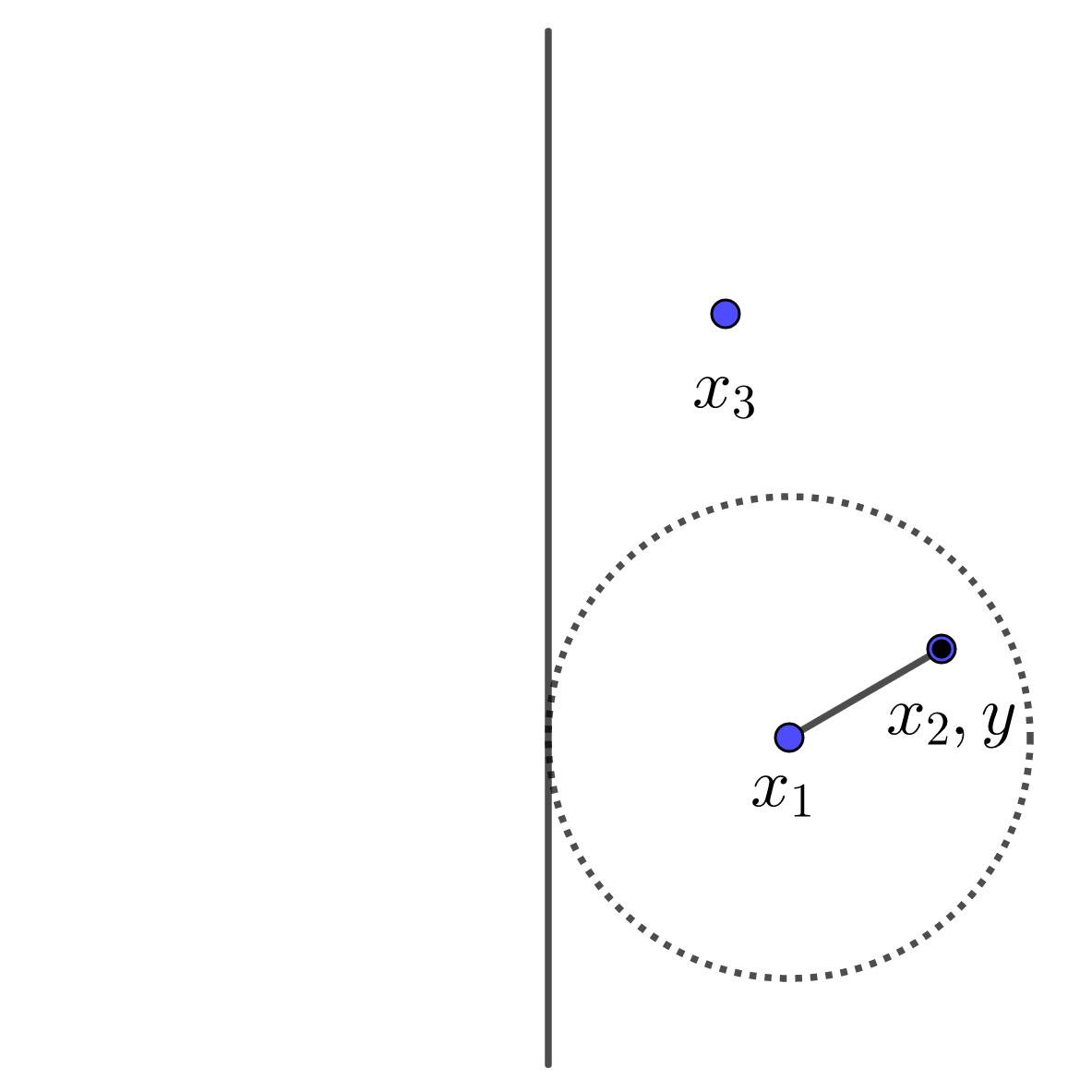}}
    \caption{Illustration of Mechanism \ref{Separate2Dictator}. Three possible cases are shown in the 2-dimensional Euclidean space. The vertical lines stand for the points of which the first coordinate equals to $a$, and the circles are of radius $|r-a|$.}
    \label{SepIllustration}
\end{figure}

For any fixed constant $a$, Mechanism \ref{Separate2Dictator} is unanimous and group-strategyproof (see Appendix \ref{ProofSeparate2Dictator} for the proof), but it is not translation-invariant. For example, supposing $r \geq a$ and $r - a< \|x_1-x_2\|$, when $r$ increases, $y$ gets more and more close to $x_2$. Intuitively, if a mechanism is not translation-invariant, it might be related to some constant (e.g., $a$ in Mechanism \ref{Separate2Dictator}), which is not a desirable property.

For better illustration, we use the two notations below --- \emph{centroid} and \emph{radius} of a distribution. Let the centroid of a distribution $P$ be \[\mathcal C(P) = \E_{y\sim P} y,\] and the radius of $P$ be \[\mathcal R(P) = \E_{y\sim P} \|y - \mathcal C(P)\| = \|P - \mathcal C(P)\|.\]

%\{Proof that $\exists i,j\in N$ such that the support of $f(\vx)$ lies on the segment between $x_i$ and $x_j$.\}

%\indent\indent\{Lemma: If $f$ is group-strategyproof, $\forall \vx$, either $\mathcal R(f(\vx)) = 0$, or $\exists i,j\in N$ such that the support of $f(\vx)$ lies on the  segment between $x_i$ and $x_j$.\}

\subsection{Output Space Reduction}

Generally, the output can be a distribution over the whole space. Our first lemma in this section reduces every group-strategyproof and unanimous mechanism into distributions on the line segments, derived from the convexity at the centroid.
Consider the situation when all agents collaborate to misreport the centroid. %With high probability, 
By the convexity of the distance function, no agent would be worse off after misreporting, but strict preferences cannot hold in some cases. Specifically, if the support of the output forms a line and some agent lies on the same line on one side, then misreporting the centroid is indifferent to her. If there exists one such agent or multiple agents on the same side, then they can choose a slightly closer location which they strictly prefer.
% For a full proof, see Appendix \ref{ProofCentroidLemma}. %We further make detailed analysis for the remaining extreme cases.%
 
\begin{lemma}\label{CentroidLemma}
Suppose $f$ is unanimous and group-strategyproof. $\forall \vx$, either $\mathcal R(f(\vx)) = 0$ (that is, the output is deterministic), or $\exists i, j \in N$ such that the support of $f(\vx)$ lies on the segment between $x_i$ and $x_j$.
\end{lemma}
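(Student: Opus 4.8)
The plan is to exploit the convexity of the distance function, which guarantees that for any agent $i$ and any distribution $P$ with centroid $\mathcal C(P)$, we have $\|x_i - \mathcal C(P)\| \leq \|x_i - P\|$ by Jensen's inequality. First I would consider the coalition $S = N$ in which every agent simultaneously misreports the single point $\mathcal C(f(\vx))$. By unanimity, the output of this misreport is deterministically $\mathcal C(f(\vx))$. Group-strategyproofness then forbids \emph{all} agents from being \emph{strictly} better off, so there must exist some agent $i$ for whom $\|x_i - \mathcal C(f(\vx))\| \geq \|x_i - f(\vx)\|$. Combined with the Jensen inequality in the other direction, this forces equality $\|x_i - \mathcal C(f(\vx))\| = \|x_i - f(\vx)\|$ for that agent.

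The next step is to analyze what equality in Jensen's inequality means in a strictly convex space. If $\mathcal R(f(\vx)) > 0$ (the output is genuinely randomized), then equality $\E_{y \sim P}\|x_i - y\| = \|x_i - \E_{y\sim P}\, y\|$ can hold only if the support of $P$, together with the point $x_i$, is arranged so that the distance function $y \mapsto \|x_i - y\|$ is \emph{affine} on the support. In a strictly convex norm the norm is strictly convex except along rays, so affinity on the support forces the entire support to lie on a single ray emanating from $x_i$, i.e. on a half-line through $x_i$; equivalently the support lies on a line passing through $x_i$ and entirely on one side of it. This is precisely the geometric picture described in the paragraph preceding the lemma: the support is collinear with $x_i$ and $x_i$ sits at one end.

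Having pinned the support to a line $\ell$ through $x_i$, I would argue that $\ell$ must actually be the segment between two of the agents' inputs. The idea is to apply output space reduction once more, but now to the endpoints of the support along $\ell$. Let $p$ and $q$ be the two extreme points of the support on $\ell$ (the output is a mixture over $\overline{pq}$, with $x_i$ at or beyond one extreme). If no agent lies at or beyond the far endpoint $q$, then \emph{every} agent strictly prefers a slight contraction of the distribution toward $x_i$ along $\ell$ — more carefully, every agent strictly prefers the point $q' \in \overline{pq}$ slightly closer to $p$, since moving the whole support inward decreases each agent's expected distance when no agent is positioned to benefit from the outer part. This contradicts group-strategyproofness via the all-agents-misreport argument. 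Hence there must be an agent $j$ sitting at (or beyond) the far endpoint as well, which identifies the two endpoints of the support-segment with two agents' inputs $x_i, x_j$, giving the claim.

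The main obstacle I expect is the third step: making rigorous the claim that ``if no agent anchors the far end of the support, everyone can be made strictly better off.'' The subtlety is that different agents may project onto $\ell$ at different places, so a single inward contraction need not strictly help every agent simultaneously — an agent whose nearest point on the mixture is an interior point, not an endpoint, might be indifferent or even hurt. I would handle this by choosing the perturbation to be a contraction \emph{toward the centroid} (or toward the nearest anchored agent) rather than toward an endpoint, and by carefully using strict convexity to ensure the inequality is strict for every agent except those exactly collinear with $\ell$ and beyond its endpoints. Ensuring the perturbed target is itself achievable as a deterministic unanimous misreport, and checking that the strictness survives for agents lying off the line $\ell$, is where the care is needed; the strict-convexity hypothesis is exactly what prevents degenerate ties here.
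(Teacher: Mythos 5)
Your first two steps coincide with the paper's proof: having the full coalition misreport the centroid $y = \mathcal C(f(\vx))$, invoking unanimity and group-strategyproofness to obtain an agent $i$ with $\|f(\vx) - x_i\| \leq \|y - x_i\|$, and using the equality case of Jensen's inequality under strict convexity to place the support of $f(\vx)$ on a ray emanating from $x_i$. The gap is in your third step, and it is genuine. Your proposed deviation point $q'$ --- near the far endpoint $q$ of the support, moved slightly toward $p$ --- makes the anchored agent $i$ itself strictly \emph{worse} off: since the support lies on the ray from $x_i$ and $\mathcal R(f(\vx)) > 0$, the centroid $y$ lies strictly between $x_i$ and $q$, so $\|x_i - q'\| > \|x_i - y\| = \|x_i - f(\vx)\|$ whenever $q'$ sits between $y$ and $q$. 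Hence the coalition deviation to $q'$ yields no violation of group-strategyproofness. Moreover, your fallback of ``contracting the distribution'' is not a legitimate deviation at all: unanimity only lets the coalition force a deterministic point, not a modified lottery, so any contradiction must come from a single point that every agent strictly prefers to the lottery $f(\vx)$.

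The paper resolves exactly this case by deviating in the \emph{opposite} direction. Let $e$ be the unit vector with $\mathrm{supp}(f(\vx)) \subseteq \{x_i + ce : c \geq 0\}$, and let $N_1$ be the set of agents attaining equality in Jensen's inequality; each such agent is anchored with direction $e$ or $-e$. If two agents in $N_1$ have opposite directions, the support lies on the segment between their locations and we are done; otherwise all of $N_1$ shares the direction $e$, and the point $y' = y - \epsilon e$, for sufficiently small $\epsilon > 0$, is strictly better for \emph{everyone}: each agent in $N_1$ lies on the line on the near side, so its distance drops by exactly $\epsilon$ from $\|y - x_i\| = \|f(\vx) - x_i\|$; each agent outside $N_1$ has strict Jensen slack $\|f(\vx) - x_j\| - \|y - x_j\| > 0$, which absorbs the at-most-$\epsilon$ increase given by the triangle inequality. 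This supplies precisely the strictness ``for agents collinear with $\ell$ and beyond its endpoints'' that you conceded you could not obtain: it does not come from strict convexity but from moving along the ray \emph{toward} those agents, and your perturbation points the wrong way to get it.
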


\begin{proof}
%First consider the case where $f(\vx)$ is deterministic. By contradiction, we assume $f(\vx)$ lies outside of $\conv(\vx)$, namely, there exists an unit vector $e$, such that for any $i\in N$, $\langle e, x_i - f(\vx)\rangle > 0$. Let  $x' = f(\vx) + \epsilon e$ where \[\epsilon = \frac12\max_{i\in N}\langle e, x_i - f(\vx)\rangle,\] then we still have $\forall i\in N$, \[\langle e, x_i - x'\rangle = \langle e, x_i - f(\vx) - \epsilon e\rangle = \langle e, x_i - f(\vx)\rangle  - \epsilon > 0.\] Therefore, $\|x_i - f(\vx)\| > \|x_i - x'\|$. According to the unanimity, if all agents collaborate to misreport $x'$, the output must be $x'$, which is a strictly better choice for all. It contradicts to group-strategyproofness, so $f(\vx)$ must be in $\conv(\vx)$.

Suppose $\mathcal R(f(\vx)) > 0$ (i.e., the output is strictly randomized) and let $y = \mathcal C(f(\vx))$. According to the group-strategyproofness, at least one agent cannot gain if all agents misreport the same location $y$. That is, by unanimity, $\exists i\in N$ such that 
\begin{equation}\label{eq1}
    \|f(\vx) - x_i\| \leq \|y - x_i\|.
\end{equation}
Let $N_1$ be the set of agents satisfying inequality (\ref{eq1}), and $N_2 = N\setminus N_1$.

Due to the strict convexity, $\forall i \in N$, $\|f(\vx)-x_i\| \geq \|y - x_i\|$, and moreover, $\|f(\vx)-x_i\| = \|y - x_i\|$ (i.e., $i \in N_1$) if and only if the support of $f(\vx)$ lies on a line with $x_i$ on the same side. Formally, $\|f(\vx)-x_i\| = \|y - x_i\|$ if and only if there exists a unit vector $e$ such that $\forall y \in \text{supp}(f(\vx))$, $\exists c \geq 0$, $y = x_i + c \cdot e$.
%$\langle y-x_i,e \rangle = \|y-x_i\|$.

%If no $x_i$ satisfies the inequality (\ref{eq1}), group-strategyproofness is violated.

If all agents in $N_1$ share the same unit vector $e$ defined above, then we can find an $\epsilon$ where $\epsilon < \|f(\vx) - x_j\| - \|y - x_j\|$ for all $j\in N_2$, and $\epsilon < \|y - x_i\|$ for all $i\in N_1$. Let $y' = y - \epsilon \cdot e$. Then, $\forall j\in N_2$,
\[\|y' - x_j\| \leq \|y' - y\| + \|y - x_j\| = \epsilon + \|y- x_j\| < \|f(\vx) - x_j\|,\]
and $\forall i \in N_1$,
\[\|y' - x_j\| = \|y - x_j\| - \epsilon = \|f(\vx) - x_i\| - \epsilon < \|f(\vx) - x_i\|. \]
Thus, it violates group-strategyproofness if all agents collaborate to misreport $y'$.

Otherwise, there exist agents $i, j \in N_1$ with two opposite unit vectors $e$ and $-e$, which means that the support of $f(\vx)$ lies on the different sides to $x_i$ and $x_j$ on the same line, that is, the support of $f(\vx)$ lies on the segment between $x_i$ and $x_j$.
\end{proof}

Combined with Lemma \ref{SegmentLemma} and the cost continuity, we can obtain some easy characterizations of 2-agent games, as the following corollaries state.

\begin{corollary} \label{RandSegmentCorollary}
When $n = 2$, if $f$ is unanimous and group-strategyproof, then $\forall \vx$, $f(\vx)$ lies on $\overline{x_1 x_2}$.
\end{corollary}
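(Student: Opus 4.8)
The plan is to reduce immediately to the dichotomy delivered by Lemma~\ref{CentroidLemma}, since that lemma already pins down the structure of any unanimous, group-strategyproof randomized output. Applied to an arbitrary profile $\vx$ in the $n=2$ setting, it gives two possibilities: either $\mathcal{R}(f(\vx)) = 0$, so the output is a single deterministic point, or the support of $f(\vx)$ lies on the segment $\overline{x_i x_j}$ between some two agents' inputs. I would handle these two cases separately, and in fact the genuinely randomized case is essentially free.

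When $\mathcal{R}(f(\vx)) > 0$, the support contains at least two distinct points, so the two indices supplied by Lemma~\ref{CentroidLemma} must be distinct. But with $n = 2$ the only pair of distinct agents is $\{1,2\}$, which forces the support of $f(\vx)$ onto $\overline{x_1 x_2}$. (Tracing the proof of Lemma~\ref{CentroidLemma} confirms this: under $\mathcal{R}(f(\vx))>0$ the ``same unit vector'' branch is contradictory, so one lands in the branch that produces two agents in $N_1$ with opposite unit vectors, i.e. the support lying on the segment between them.) This already gives the claim in the randomized case.

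The deterministic case, where $f(\vx)$ collapses to a single point $p$, is where I would reuse the strict-convexity argument of Lemma~\ref{SegmentLemma}. Suppose for contradiction $p \notin \overline{x_1 x_2}$. Then $p - x_1$ and $p - x_2$ are not in the same direction, so by strict convexity $\|x_1 - p\| + \|p - x_2\| > \|x_1 - x_2\|$ strictly, whereas every $y \in \overline{x_1 x_2}$ satisfies $\|x_1 - y\| + \|y - x_2\| = \|x_1 - x_2\|$. This lets me choose $y \in \overline{x_1 x_2}$ with $\|y - x_1\| < \|p - x_1\|$ and $\|y - x_2\| < \|p - x_2\|$ simultaneously; if both agents misreport $y$, unanimity forces the (deterministic) output to be $y$, strictly improving both agents and contradicting group-strategyproofness.

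The one point worth stating carefully, rather than merely citing, is that Lemma~\ref{SegmentLemma} is phrased for a globally deterministic mechanism, whereas here $f$ is randomized and only happens to output a point at this particular profile. I expect this to be the only real obstacle, and it dissolves on inspection: the argument above touches only the single profile $\vx$ and the misreport profile $(y,y)$, whose output is deterministic by unanimity, so the randomness of $f$ at other profiles never enters and the strict-convexity contradiction transfers verbatim. I do not anticipate needing cost continuity for this particular corollary; that tool is reserved for the companion $2$-agent characterizations mentioned alongside it.
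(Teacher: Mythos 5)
Your proof is correct and follows the paper's intended route exactly: the paper derives Corollary~\ref{RandSegmentCorollary} precisely by combining the dichotomy of Lemma~\ref{CentroidLemma} (which with $n=2$ and $\mathcal{R}(f(\vx))>0$ forces the support onto $\overline{x_1x_2}$) with the strict-convexity/unanimity argument of Lemma~\ref{SegmentLemma} for the profile-wise deterministic case. Your extra care in noting that Lemma~\ref{SegmentLemma}'s argument applies verbatim at a single profile where the output happens to be a point, even though the lemma is stated for globally deterministic mechanisms, is a valid and worthwhile clarification of a detail the paper leaves implicit.
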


\begin{corollary} \label{RandContinuousCorollary}
When $n = 2$, if $f$ is unanimous and group-strategyproof, then $\mathcal C(f(x_1, x_2))$ is a continuous function.
\end{corollary}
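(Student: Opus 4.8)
The plan is to reduce continuity of the centroid to the cost continuity already established in Lemma~\ref{ContinuousLemma}. The starting point is Corollary~\ref{RandSegmentCorollary}: when $n=2$ the entire support of $f(\vx)$, and hence its centroid $\mathcal C(f(\vx))$, lies on $\overline{x_1 x_2}$. The key observation is that along a segment emanating from an endpoint the distance function is linear: writing each point $y$ of the support as $y = x_1 + s_y(x_2 - x_1)$ with $s_y \in [0,1]$, we have $\|y - x_1\| = s_y\|x_2 - x_1\|$, so taking expectations gives $\|f(\vx) - x_1\| = \E[s_y]\,\|x_2-x_1\| = \|\mathcal C(f(\vx)) - x_1\|$. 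Thus each agent's cost equals the distance from her own report to the centroid, and symmetrically $\|f(\vx) - x_2\| = \|\mathcal C(f(\vx)) - x_2\|$.

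Next I would exploit cost continuity. Group-strategyproofness implies strategyproofness (singleton coalitions), so Lemma~\ref{ContinuousLemma} applies: setting $\mu_1(x_1,x_2) \triangleq \|f(\vx) - x_1\|$ and $\mu_2(x_1,x_2) \triangleq \|f(\vx) - x_2\|$, the function $\mu_1$ is $1$-Lipschitz in $x_1$ for fixed $x_2$, and $\mu_2$ is $1$-Lipschitz in $x_2$ for fixed $x_1$. Because the centroid sits on the segment, these two costs add up to the full length, $\mu_1 + \mu_2 = \|x_1 - x_2\|$. Rewriting $\mu_1 = \|x_1 - x_2\| - \mu_2$ shows that $\mu_1$ is also Lipschitz in $x_2$ (as the difference of two $1$-Lipschitz functions of $x_2$, hence $2$-Lipschitz). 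Having a uniform Lipschitz bound in each argument separately yields joint continuity via $|\mu_1(x_1,x_2) - \mu_1(x_1',x_2')| \le |\mu_1(x_1,x_2) - \mu_1(x_1',x_2)| + |\mu_1(x_1',x_2) - \mu_1(x_1',x_2')|$, so $\mu_1$ is jointly continuous in $(x_1,x_2)$.

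Finally I would recover the centroid itself. For $x_1 \neq x_2$, the identity of the first paragraph gives $\mathcal C(f(\vx)) = x_1 + \mu_1(x_1,x_2)\cdot \tfrac{x_2 - x_1}{\|x_2 - x_1\|}$, a composition of the jointly continuous $\mu_1$ with the continuous map $(x_1,x_2)\mapsto x_1 + t\,(x_2-x_1)/\|x_2-x_1\|$; hence $\mathcal C(f(\vx))$ is continuous off the diagonal. On the diagonal $x_1 = x_2$, unanimity forces $\mathcal C(f(\vx)) = x_1$, and since the centroid always lies on $\overline{x_1 x_2}$ we have $\|\mathcal C(f(\vx)) - x_1\| \le \|x_1 - x_2\| \to 0$ as $(x_1,x_2)$ approaches the diagonal, giving continuity there as well.

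I expect the main obstacle to be the passage from separate (per-coordinate) continuity, which is all that cost continuity directly provides, to the joint continuity needed for the centroid; separate continuity alone is famously insufficient. The resolution is the quantitative strengthening to uniform Lipschitz bounds in each coordinate, which crucially relies on the segment relation $\mu_1 + \mu_2 = \|x_1 - x_2\|$ to transfer the Lipschitz property from one coordinate to the other. A secondary point of care is the degenerate direction when $x_1 = x_2$, which is handled separately by unanimity.
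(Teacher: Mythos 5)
Your proposal is correct and takes essentially the same route the paper intends: the paper presents this corollary as an immediate consequence of combining the segment characterization (Corollary \ref{RandSegmentCorollary}) with cost continuity (Lemma \ref{ContinuousLemma}), which is exactly your reduction via the identity $\|f(\vx) - x_i\| = \|\mathcal{C}(f(\vx)) - x_i\|$ for supports on $\overline{x_1x_2}$. Your explicit upgrade from separate $1$-Lipschitz bounds to joint continuity (using $\mu_1 + \mu_2 = \|x_1 - x_2\|$) and the separate handling of the diagonal are details the paper leaves implicit, not a departure from its argument.
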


%For the deterministic case, if $f(\vx)\notin \conv(\vx)$, we prove that there exists a point $y$ which is very close to $f(\vx)$, while the distance from any $x_i$ to $y$ is shorter than the distance to $f(\vx)$. Thus, agents may all misreport $y$ and break group-strategyproofness (by unanimity).

It is also meaningful to think of what would happen if \emph{strong} group-strategyproofness should be guaranteed, i.e., no group of agents can jointly misreport their preferences so that \emph{at least one} member is strictly better off without making any member worse off.
%Strong group-strategyproofness does not narrow down the characterization (dictatorship) of deterministic mechanisms, however, with randomization, group-strategyproof mechanisms could have much more possibilities.
Following the idea of this lemma, strong group-strategyproofness could lead to a clear dictatorship even with randomness. If $f$ is supposed to be strong group-strategyproof instead, we can observe that (by a similar proof) in the case of $\mathcal R(f(\vx)) > 0$, $\vx$ must be on a line, and $f(\vx)$ is also on the same line, i.e., the output must be deterministic as long as the agents are not on a line.

%From Lemma \ref{CentroidLemma}, we know that there are only two possibilities. By the following lemma, we reject the deterministic case.

%\indent\indent\{Lemma: If $\exists \vx$ such that $x_i\neq x_j$ for some $i\neq j$ and $\mathcal R(f(\vx)) = 0$ (that is, the output is determined), $f$ is at most 2-approximation.\}

%\begin{lemma}\label{RadiusLemma}
%Let $f$ be a strategyproof mechanism. If $\exists \vx$ such that $x_i\neq x_j$ for some $i\neq j$ and $\mathcal R(f(\vx)) = 0$ (that is, the output is deterministic), then $f$ is at most $2$-approximation for maximum cost.
%\end{lemma}
%\begin{proof}
%Let $y = f(\vx)$, and without loss of generality, assume $x_i \neq y$. Let $\vx' = (x_1', \ldots, x_n')$ where $x_j' = y$ for all $j\in N$ except $x_i' = x_i$. We claim that $f(\vx') = y$ (by uncompromising property).

%Here $\|f(\vx') - x_i\| = \|y-x_i\|$. However, the best maximum cost of $\vx'$ should be $\|x_i-y\|/2$. Therefore, $f$ cannot do better than $2$-approximation.
%\end{proof}

%\begin{lemma} \label{EqRadiusLemma}
%Suppose $f$ is strategyproof. Given $x$, we claim that $\|f(\vx') - x\| \leq \|f(\vx) - x\|$, if either $x_i' = x_i$ or $x_i' = x$ holds for all $i \in N$.
%\end{lemma}
%\begin{proof}
%$\forall i \in N$, we claim that $\|f(x, \vx_{-i}) - x\| \leq \|f(\vx) - x\|$, otherwise agent $i$ would gain by misreporting $x_i$ instead of $x$.
%
%After we move some of the agents step by step, the distance will not increase. Therefore, $\|f(\vx') - x\| \leq \|f(\vx) - x\|$.
%\end{proof}

\subsection{Base Case}

Lemma \ref{CentroidLemma} reduces every randomized output to a line between some agents, however, the lines are not necessarily formed by the same two agents and here our situation is still very complicated. In what follows, we aim to characterize the base case where $n = 3$, $f$ is unanimous and group-strategyproof. The following lemma shows that under some conditions, the output can be relatively very close to the input of some agent $j$ and isolated from some agent $i$. This is a pivot step of our final result.

\begin{lemma} \label{InfiniteRatioLemma}
When $n = 3$, if $f$ is unanimous and group-strategyproof, then $\forall z \in \R_d$, $\forall \ell > 0$, $\exists i \in N$,  $\forall \epsilon > 0$, there exists $\vx$ such that \[\|f(\vx) - x_j\| < \epsilon \|x_i - x_j\| < \ell\]
holds for some $j \neq i$ where $\|x_j - z\| < \ell$.
\end{lemma}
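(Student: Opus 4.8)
The plan is to produce, for the given $z$ and $\ell$, a continuous family of three-agent profiles lying inside the ball $B=\{x:\|x-z\|<\ell\}$ and degenerating toward a ``doubled line,'' and to control the output along this family with cost continuity. Keeping every profile inside $B$ makes the requirements $\|x_i-x_j\|<\ell$ and $\|x_j-z\|<\ell$ automatic, so the real content is to force the output arbitrarily close (relative to $\|x_i-x_j\|$) to one agent while staying away from another. The structural tool is Lemma~\ref{CentroidLemma}: for any non-constant profile the support of $f(\vx)$ either is a single point or lies on the segment $\overline{x_p x_q}$ between two of the three agents, so the output always sits on one of the three edges of the triangle the agents form.

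Concretely, I would start from a near-equilateral isosceles triangle inside $B$; by Lemma~\ref{CentroidLemma} its output lies on some edge, and I relabel so that this edge is $\overline{x_2 x_3}$ and the opposite vertex is $x_1=:i$. I then deform the triangle continuously by shrinking $\overline{x_2 x_3}$ (sending the apex angle at $x_1$ to $0$), turning it into a ``needle'' with tiny base $\overline{x_2x_3}$ and isolated apex $x_1$; in the limit the triangle flattens onto a segment traversed twice, which is the doubled line. If the output stays on the base throughout the deformation, then once $\overline{x_2 x_3}$ is short enough we get $\|f(\vx)-x_2\|\le\|x_2-x_3\|$ arbitrarily small while $\|x_1-x_2\|$ stays bounded below (it tends to the height of the original triangle), so $\|f(\vx)-x_j\|<\epsilon\,\|x_i-x_j\|$ with $i=1,\ j=2$, as required.

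The crux, and what I expect to be the main obstacle, is ruling out that the output \emph{jumps} off the base onto a leg during the deformation. The idea is that on the shrinking base the distance from the output to each base endpoint is tiny, whereas a point sitting at a ``medium'' position of a leg is a bounded fraction of the leg length away from that endpoint; a transition between these regimes would make $\|f(\vx)-x_2\|$ (or $\|f(\vx)-x_3\|$) discontinuous, contradicting the $1$-Lipschitz cost continuity of Lemma~\ref{ContinuousLemma}. The delicate point is that Lemma~\ref{ContinuousLemma} controls the distance only when a \emph{single} agent moves with the others fixed, while the flattening deformation moves $x_2$ and $x_3$ simultaneously; I would therefore decompose the deformation into single-agent moves and coordinate the motion of the two base agents so that the putative discontinuity surfaces in the cost function of one of them. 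Using strict convexity to keep the on-base and on-leg distance regimes quantitatively separated — so the ``gap'' the output would have to cross is bounded away from zero — is what converts the jump into a genuine contradiction.

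Finally, I would assemble these pieces: the no-jump argument pins the output near $x_2$ for every sufficiently thin member of the family, and choosing a thin enough member produces, for each prescribed $\epsilon$, a profile inside $B$ witnessing the claim. The deterministic case is subsumed by treating the output point as a degenerate distribution on an edge (a deterministic output lying off every edge would be strictly improvable by a coordinated misreport, violating group-strategyproofness in the spirit of Lemma~\ref{SegmentLemma}), and since the vertex $i$ is chosen as the one opposite whichever edge the initial output occupies, some agent always plays the role of the isolated apex.
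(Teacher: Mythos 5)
Your scaffolding (equilateral start inside the ball, the apex fixed as $i$ independently of $\epsilon$, flattening toward a needle, using cost continuity to keep the output on the base) matches the paper's, but the step you yourself call the crux --- ruling out jumps --- has a genuine gap, in two places. First, your separation claim is false near the shared vertex: the ``on-base'' and ``on-leg'' distance regimes are \emph{not} quantitatively separated when the output is close to a base endpoint, so the output can pass from the base onto a leg continuously through that endpoint with every agent's cost continuous; any argument resting on ``the gap the output would have to cross is bounded away from zero'' collapses exactly there. The paper treats this as a separate case and resolves it not by continuity at all: if the centroid $C_1$ of the output is within $\epsilon\ell/2$ of a base vertex, then $\mathcal{R}(P_1)\le 2r$, and moving \emph{both} base agents to $C_1$ gives, by group-strategyproofness, $\|f(C_1,C_1,x_3)-C_1\|\le\mathcal{R}(P_1)\le 2r<\epsilon\ell$, which already witnesses the lemma. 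Your proposal has no counterpart to this case. Second, even away from the vertices, a jump does not by itself contradict Lemma~\ref{ContinuousLemma}: that lemma is $1$-Lipschitz only in the \emph{mover's own} cost, and a jump can preserve the mover's distance (e.g., when agent $2$ moves, the output can jump from a base point at distance $t$ from $x_2$ to a leg point at the same distance $t$ from $x_2$). You acknowledge this and propose to ``coordinate the two base agents,'' but that is precisely where the work lies, and it is not how the paper proceeds either: the paper plays a \emph{base} move (one base agent slides along the circle centered at the apex, shrinking the base) against an \emph{apex} move (the apex recedes, lengthening the legs), examines the four profiles $P_1,P_2,P_3,P_4$ obtained by moving neither, one, or both agents, and shows that if \emph{both} single-agent moves throw the output onto a leg, then the doubly-moved profile violates the triangle inequality up to Lipschitz errors; quantitatively $3\delta_2+3\delta_3\ \ge\ \|x_k-C_1\|+\|x_3-C_1\|-\|x_k-x_3\|$, whose right-hand side is bounded below by strict convexity (angle bounded below, legs bounded by $\ell/2\epsilon$) precisely outside the near-vertex case above. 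Hence at least one of the two single-agent moves keeps the output on the base, and the deformation proceeds with that one. This also explains a design choice you dropped: the paper's triangles let the legs grow to length $\ell/2\epsilon$, far outside your ball $B$, because leg-lengthening is one of the two moves being played off against each other.

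A secondary but real error: your dismissal of the deterministic case rests on a false claim. Lemma~\ref{CentroidLemma} places no constraint on where a deterministic output lies, and a deterministic point strictly inside the triangle is not Pareto-improvable by any coordinated misreport (no point is strictly closer to all three vertices), so it does not violate group-strategyproofness and need not lie on an edge. The paper instead finishes this case immediately via the uncompromising property: if $y=f(\vx)$ is deterministic and $y\neq x_3$, then $f(y,y,x_3)=y$, so the lemma holds with $j$ located at $y$ --- and this case must be handled at every step of the deformation, not only at the initial profile.
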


To prove this lemma, we construct a sequence of profiles such that each profile forms an isosceles triangle, while the output keeps on the base of the triangle. Moreover, we ensure that as the agents move, the legs of the triangle get longer, while the base gets shorter. After that, we can easily find a profile in the sequence such that $\|f(\vx) - x_j\| <\epsilon \|x_i - x_j\|$, where $x_i$ is the apex of the triangle.

To make the output remains on the base, a clever application of the cost continuity is needed. Intuitively, if the output jumps from the base to a leg, then the distance must not be continuous for some agents. However, the distance only needs to be kept continuous with respect to the moving agent, and this intuition should be further clarified. We designate two agents and try to move either one at a time. If the output jumps when one agent moves and also when the other agent moves, we show the contradiction by considering the situation where both agents move.
\begin{proof}[Proof of Lemma \ref{InfiniteRatioLemma}]
For any $z$ and $\ell$, consider an equilateral triangle $\Delta x_1x_2x_3$ with edges of length $\ell$, such that $\|x_i - z\|< \ell$ for $i\in N$. By Lemma \ref{CentroidLemma}, there are two cases: $f(x_1,x_2,x_3)$ is deterministic and is in the triangle; or there exist $i,j$, such that $f(x_1,x_2,x_3)$ lies on $\overline{x_ix_j}$. For the first case, let $y = f(x_1,x_2,x_3)$, without loss of generality, we assume $x_3 \neq y$. By the uncompromising property, $f(y,y,x_3) = y$. That is, the lemma is satisfied for $i = 3$. For the second case, without loss of generality, we assume $f(x_1,x_2,x_3)$ lies on $\overline{x_1x_2}$. Let $x_3$ be the ``$x_i$'' in the lemma. We will construct a sequence of location profiles $(x_1,x_2,x_3)$ in the plane defined by the equilateral triangle such that %\sy{We are not proving $2$-approximation now, so Lemma \ref{RadiusLemma} is not adopted.}
\begin{enumerate}[(a)]
    \item $\Delta x_1x_2x_3$ remains to be an isosceles triangle, with $x_3$ be the apex;
    \item in each iteration, $x_3$ becomes further away from $x_1$ and $x_2$, or $x_1$ and $x_2$ get closer, i.e., either $\|x_3 - x_1\| = \|x_3 - x_2\|$ increases, or $\|x_1 - x_2\|$ decreases;
    \item $\|x_1 - z\|< \ell$, $\|x_2 - z\|< \ell$;
    \item $\|x_3 - x_1\| = \|x_3 - x_2\| < \ell /\epsilon$.
    \item the output $f(\vx)$ always lies on $\overline{x_1x_2}$;
\end{enumerate}
Particularly, in each round, we will find $x_1',x_2', x_3'$, and inductively prove that  $\exists i\in N$, the location profile $(x_i',\vx_{-i})$ satisfies the properties above. 

%\sy{Remember that we have deterministic case...}

For all $\delta_1, \delta_2, \delta_3$ (such that $\delta_1\leq \|x_1-x_2\|$ and $\delta_2\leq \|x_1-x_2\|$), we can easily find $x_1', x_2', x_3'$ such that  $\|x_1-x_1'\| = \delta_1$, $\|x_2 - x_2'\|=\delta_2, \|x_3 - x_3'\|=\delta_3$, and $\forall i \in N$ the location profile $(x_i',\vx_{-i})$ satisfies the property (a)-(c). For property (d), if  $\|x_3-x_1\| = \|x_3 - x_2\| > \ell/2\epsilon$, then we already have \[\min\{\|f(\vx) - x_1\|, \|f(\vx) - x_2\|\}\leq \ell /2 = \epsilon \cdot \ell/2\epsilon < \epsilon \|x_2-x_1\|.\] Therefore, we can assume $\|x_3 - x_1\| \leq \ell/2\epsilon$, namely, if we set $\delta_3 < \ell /2\epsilon$, property (d) is also satisfied.
For property (e), we first rule out the case where $f(x_i',\vx_{-i})$ is deterministic and strictly inside or outside the triangle, since in that case, we can move agents $1, 2$ to the output and the output remains the same by the uncompromising property, and thus the lemma is satisfied for any $\epsilon$. In what follows, we only consider the case that $f(x_i',\vx_{-i})$ lies on some edge, and we will show either one of $(x_i',\vx_{-i})$ satisfies property (e), or $f(\vx)$ is extremely close to $x_1$ or $x_2$.

Let $P_1 = f(x_1,x_2,x_3)$, $P_2 = f(x_1,x_2',x_3)$, $P_3 = f(x_1,x_2,x_3')$, and $P_4=f(x_1,x_2',x_3')$.  If $P_3$ lies on $\overline{x_1x_2}$, we are done. If not, without loss of generality, we assume that $P_3$ lies on $\overline{x_2x_3'}$. Now consider $P_2$, if $P_2$ lies on $\overline{x_1x_2'}$, we are done. Otherwise we consider the following two cases separately:

\begin{figure}[htbp]
    \centering
    \subfigure[]{
    \label{TriangleCase1}
    \includegraphics[scale=0.1]{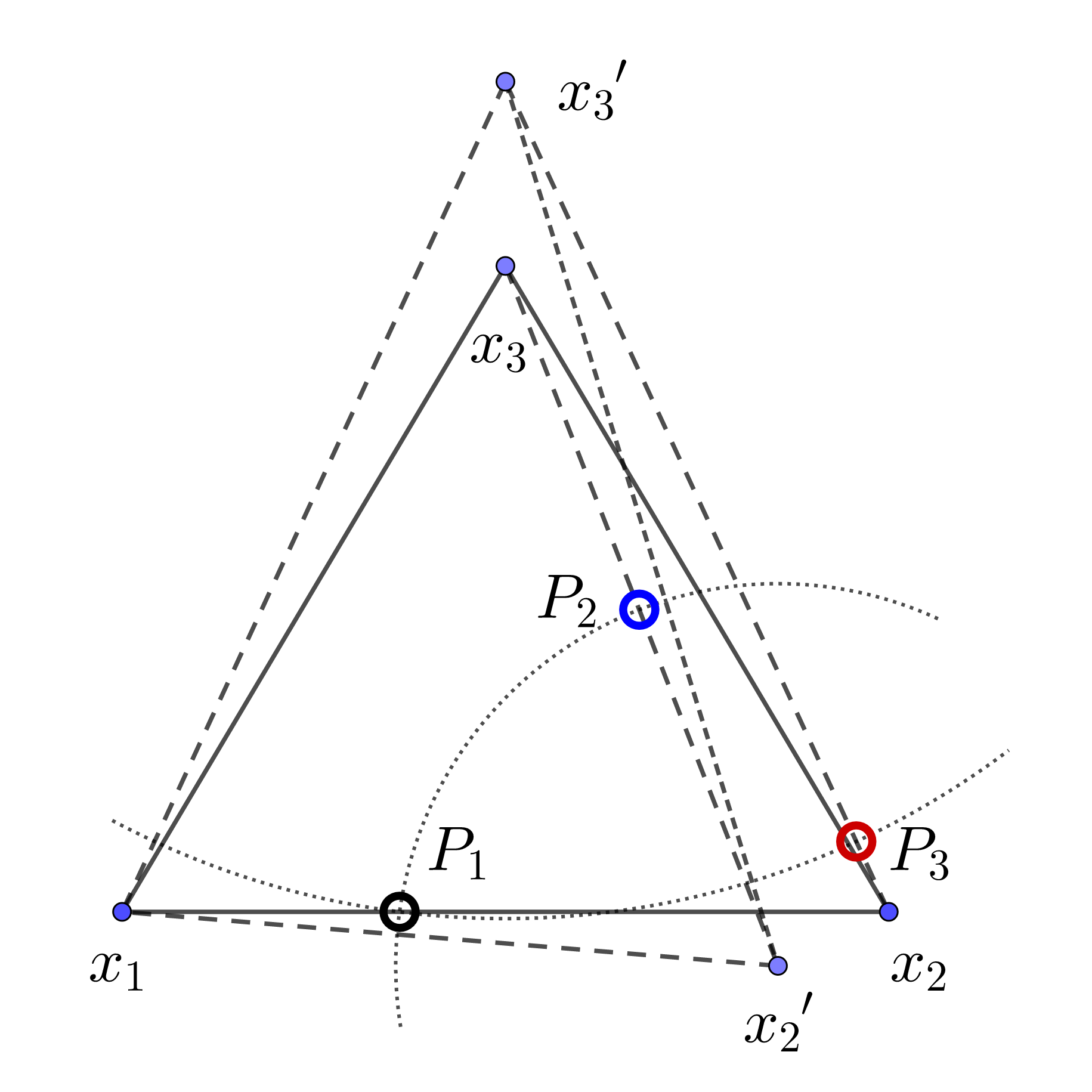}}
    \subfigure[]{
    \label{TriangleCase2}
    \includegraphics[scale=0.1]{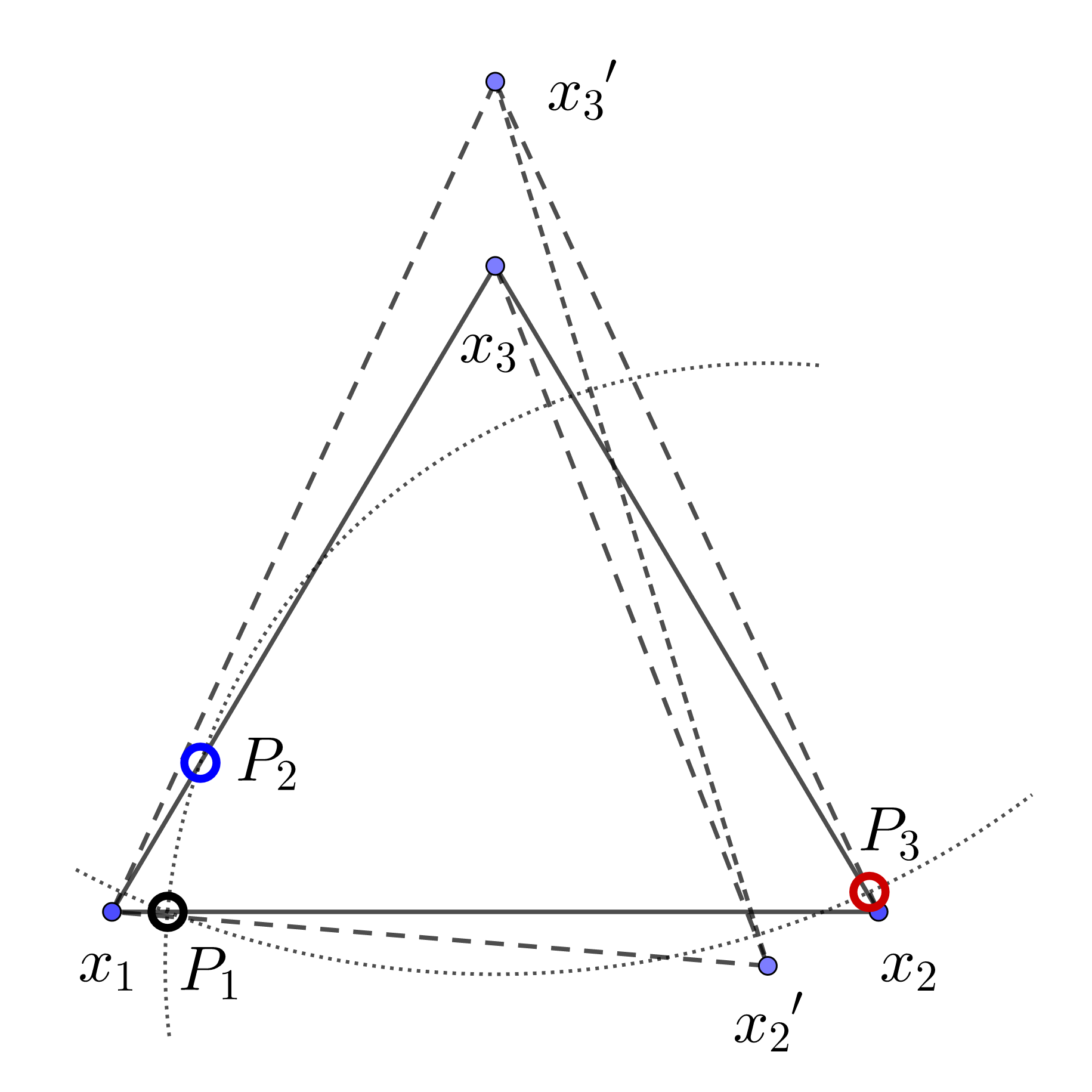}}
    \caption{Proof of Lemma \ref{InfiniteRatioLemma}. Two possible cases are shown in the Euclidean space, where neither $P_2$ nor $P_3$ stays on the same edge with $P_1$.}
\end{figure}

\begin{enumerate}
    \item $P_2$ lies on $\overline{x_2'x_3}$ (see Figure \ref{TriangleCase1}). Then we have 
    \begin{align*}
        \|x_3'-P_4\| \leq&~ \|x_3 - P_2\| + \delta_3 & \text{(cost continuity)}\\
        =&~ \|x_2' - x_3\| - \|x_2'-P_2\| + \delta_3 & \\
        \leq&~ \|x_2' - x_3\| - \|x_2 - P_1\| + \delta_2 + \delta_3. & \text{(cost continuity)}
    \end{align*}
    Similarly, 
    \begin{align*}
        \|x_2'-P_4\| \leq&~ \|x_2 - P_3\| + \delta_2 & \text{(cost continuity)}\\
        =&~ \|x_2 - x_3'\| - \|x_3'-P_3\| + \delta_2 & \\
        \leq&~ \|x_2 - x_3'\| - \|x_3 - P_1\| + \delta_2 + \delta_3. & \text{(cost continuity)}
    \end{align*}
    Adding two inequalities together, 
    \begin{align*}
        \|x_2'-x_3'\| \leq&~\|x_3'-P_4\|+\|x_2'-P_4\|\nonumber\\
        \leq&~ \|x_2' - x_3\| - \|x_2 - P_1\| + \|x_2 - x_3'\| - \|x_3 - P_1\| + 2\delta_2 + 2\delta_3\nonumber\\
        \leq&~ \|x_2'-x_3'\| + \|x_2 - x_3\| - (\|x_2 - P_1\| + \|x_3 - P_1\|) + 3\delta_2 + 3\delta_3\label{eq2}.
    \end{align*}
    Let $C_1 = \mathcal C(P_1)$, then by convexity,
    \begin{equation}
        3\delta_2 + 3\delta_3\geq \|x_2 - P_1\|+\|x_3-P_1\| - \|x_2 - x_3\|\geq \|x_2 - C_1\|+\|x_3-C_1\| - \|x_2 - x_3\|.\label{eq:case1}
    \end{equation}
    
    \iffalse
    Let $a = \|x_2-x_3\|, b = \|x_2 - C_1\|=\|x_2-P_1\|, c = \|x_3-C_1\|\leq \|x_3-P_1\|$. I.e., $\Delta x_2x_3C_1$ is a triangle with edges of length $a,b,c$.
    Since $\angle x_3x_2C_1 \geq \pi/3$, \[ a^2 + b^2 - c^2=2ab\cos \angle x_2x_2C_1 \leq \sqrt 3 ab.\] Therefore $c \geq \sqrt{a^2+b^2-\sqrt 3ab} \geq a - \frac{\sqrt 3}{2} b.$ As a result,
    \[\|x_2 - P_1\| + \|x_3 - P_1\| - \|x_2 - x_3\| \geq b + c - a\geq \frac{2-\sqrt 3}{2} b.\]
    Combining with (\ref{eq2}), we have \[3\delta_2+3\delta_3 \geq \frac{2-\sqrt 3}{2} \|x_2 - P_1\|.\]
    \fi
    
    \item $P_2$ lies on $\overline{x_1x_3}$ (see Figure \ref{TriangleCase2}). First by the cost continuity, 
    \begin{equation}
        \|x_2-P_2\| \leq \|x_2' - P_2\| + \delta_2 \leq \|x_2 - P_1\| + 2\delta_2.\label{eq2}
    \end{equation}
    Let $C_2=\mathcal C(P_2)$, then
    \[\|x_1-P_2\| + \|P_2 - x_2\|\geq \|x_1-C_2\| + \|C_2 - x_2\| \geq \|x_1 - x_2\| = \|x_1 - P_1\| + \|x_2 - P_1\|.\]
    Combined with (\ref{eq2}), we have
    \[\|x_1 - P_1\|\leq \|x_1 - P_2\| + 2\delta_2.\]
    
    Also, by cost continuity,
    \begin{align*}
        \|x_3'-P_4\| \leq&~ \|x_3 - P_2\| + \delta_3.
    \end{align*}
    On the other hand,
    \begin{align*}
        \|x_2' - P_4\| \leq&~ \|x_2 - P_3\| + \delta_2 &\text{(cost continuity)}\\
        =&~ \|x_2-x_3'\| - \|x_3' - P_3\| + \delta_2 & \\
        \leq&~ \|x_2 - x_3'\| - \|x_3 - P_1\| + \delta_3 + \delta_2. &\text{(cost continuity)}
    \end{align*}
    Adding the three inequalities above, 
    \begin{align*}
        \|x_2' - x_3'\| =&~ \|x_2' - P_4\| + \|x_3' - P_4\|\\
        \leq&~ \|x_2-x_3'\| - \|x_3 - P_1\| + \|x_3 - P_2\| + \|x_1 - P_2\| - \|x_1 - P_1\| + 3\delta_2 + 2\delta_3\\
        =&~ \|x_2-x_3'\| - \|x_3 - P_1\| + \|x_3 - x_1\| - \|x_1 - P_1\| + 3\delta_2 + 2\delta_3.
    \end{align*}
    Let $C_1 = \mathcal C(P_1)$, then by convexity,
    \begin{equation}
        3\delta_2 + 2\delta_2 \geq \|x_1 - P_1\| + \|x_3 - P_1\| - \|x_3 - x_1\| \geq \|x_1 - C_1\| + \|x_3 - C_1\| - \|x_3 - x_1\|.\label{eq:case2}
    \end{equation}
\end{enumerate}

One can observe the right hand sides of both (\ref{eq:case1}) and (\ref{eq:case2}) are in a form of 
\[\|x_k - C_1\| + \|x_3 - C_1\| - \|x_k - x_3\|, \]
where $k=1, 2$. By thinking the formula as a function of $\|x_3 - C_1\|$, $\|x_k - x_3\|$, $\angle x_3x_kC_1$ and the slope of $\overline{x_kx_4}$, it is easy to see that for any $\alpha$, there exists $\delta' > 0$, such that when $\|x_k - x_3\|\leq \ell/2\epsilon$, $\|x_k -C_1\|\geq \epsilon\ell/2$ and $\angle x_3x_kC_1\geq \alpha$,  $\|x_k - C_1\| + \|x_3 - C_1\| - \|x_k - x_3\|$ is always lower bounded by $\delta'$.
In that case, if we set both $\delta_2$ and $\delta_3$ to be $\delta'/10$, the two possible cases are rejected, that is,  either $P_2$ lies on $\overline{x_1x_2'}$, or $P_3$ lies on $\overline{x_1x_2}$.

Note $\|x_k - x_3\|\leq \ell/2\epsilon$, $\angle x_3x_kC_1\geq \alpha$ are always satisfied if $\alpha$ is set to be $\min\{\angle x_3x_1x_2, \angle x_3x_2x_1\}$ at the beginning (because $\angle x_3x_kC_1$ always increases). 
It suffices to consider the case $\|x_k -C_1\|< \epsilon\ell/2$. 
Let $r = \min\{\|x_2 - C_1\|, \|x_1 - C_1\|\}$. Then \[\mathcal R(P_1) = \E_{y\sim P_1} \|y - C_1\|\leq 2r.\] If we move $x_1, x_2$ to $C_1$, i.e., consider $\vx' = (C_1, C_1, x_3)$, then by group-strategyproofness we have \[\|f(\vx') - C_1\|\leq \|f(\vx) - C_1\| = \|P_1 - C_1\| = \mathcal R(P_1) \leq 2r < \epsilon\ell \leq \epsilon\|x_3 - x_1\|.\]
That is, we have already found the profile $\vx'$ satisfying the requirement of the lemma.
\end{proof}
In the following lemma, from a set of convergent profiles obtained above, we prove that as long as some two agents share the same input, the output must be exactly their input. Intuitively, starting from an appropriate profile, we can then move the isolated agent to anywhere we want by steps, while keeping the output relatively very close to the fixed agent.

%\begin{definition}
%$\Phi$ is \emph{solid} in $\Omega$, if and only if $\exists i \in N$, $\forall x \in \Omega$, $i \in \Phi(x)$, and it is said to be \emph{$i$-solid}.
%\end{definition}

%\begin{definition}
%A set $S \subseteq \R_d$ is \emph{dense} in $\Omega$, if and only if $\forall y \in \Omega$, $\forall \epsilon > 0$, $\exists x \in S$ such that $\|x - y\| < \epsilon$.
%\end{definition}

%Now we have an infinite number of profiles that have nice properties. 

\begin{lemma} \label{ConvergeLemma}
Consider the case where $n = 3$, $f$ is unanimous, translation-invariant, and group-strategyproof. Then $\exists i \in N$, $\forall z \in \R_d$, $\forall \vx$ such that $\vx_{-i} = (z,\ldots,z)$, $f(\vx) = z$.
\end{lemma}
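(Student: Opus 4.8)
The plan is to let $i$ be the agent singled out by Lemma~\ref{InfiniteRatioLemma} as the \emph{isolated} agent and to prove that the remaining two agents $\{j,k\}=N\setminus\{i\}$ act as a partial-unanimity coalition: whenever both report a common point $z$, the output is forced to $z$, no matter where agent $i$ sits. First I would use translational invariance to collapse the quantifier over $z$: it suffices to exhibit one point $z_0$ with $f(x_i,z_0,z_0)=z_0$ for every $x_i$, since for any other $z$ and any $x_i$, writing $a=z-z_0$ and applying the shown identity to $(x_i-a,z_0,z_0)$ gives $f(x_i,z,z)=f((x_i-a,z_0,z_0)+a)=f(x_i-a,z_0,z_0)+a=z$.

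Next I would pin down the shape of the output at the profile $(x_i,z,z)$ with $x_i\neq z$. The only two-agent segments here are $\overline{x_i z}$ and the degenerate point $\{z\}$, so by Lemma~\ref{CentroidLemma} a genuinely randomized output is supported on $\overline{x_i z}$; and a deterministic output must also lie on $\overline{x_i z}$, since otherwise the strict triangle inequality (as in Lemma~\ref{SegmentLemma}) produces a point of $\overline{x_i z}$ strictly closer to both $x_i$ and $z$, which all three agents would jointly misreport, forcing that point by unanimity and contradicting group-strategyproofness. Hence the target reduces to excluding any mass strictly between $z$ and $x_i$. A clean consequence I would record here is that the good set $G=\{x_i: f(x_i,z,z)=z\}$ is closed: if $x_i^{(m)}\to x_i$ with each in $G$, then cost continuity (Lemma~\ref{ContinuousLemma}) gives $\|f(x_i,z,z)-x_i\|=\lim_m\|z-x_i^{(m)}\|=\|z-x_i\|$, and since the support lies on $\overline{x_i z}$ with $z$ the far endpoint, this equality forces all mass onto $z$, i.e. $x_i\in G$.

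The heart of the argument is producing good profiles and turning approximate statements into exact ones, and this is where Lemma~\ref{InfiniteRatioLemma} enters. Starting from a profile whose output sits on the base $\overline{x_jx_k}$ within $\epsilon\|x_i-x_j\|$ of $x_j$ while $x_i$ is pushed far away, I would walk the two base agents together toward a common point $z$ in small steps, using cost continuity at each step to keep the output clinging to the shrinking base and to prevent it from jumping onto a segment that involves the far agent $i$ (such a jump would make agent $i$'s cost discontinuous). The large, Lipschitz-controlled distance from the output to $x_i$ is exactly what forbids the output from drifting toward $x_i$; letting $\epsilon\to0$ and the triangle shrink then drives the distance from the output to the merged point to $0$. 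Combined with the closedness of $G$ and a final round of step-walking of the isolated agent $i$ (again tracked by cost continuity) to reach an arbitrary location, this yields $f(x_i,z,z)=z$ for all $x_i$, and translational invariance finishes the lemma for all $z$.

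I expect the main obstacle to be the ``no-jump'' control of the \emph{entire} output distribution while only the scalar quantity $\|f(\cdot)-x_i\|$ is known to be $1$-Lipschitz: cost continuity constrains each agent's expected distance but not the distribution itself, so I must orchestrate which agent moves at each step, and in what order, so that the per-step Lipschitz bounds compose into a genuine pinning of the output rather than a mere accumulation of error. Getting an \emph{exact} equality $f(x_i,z,z)=z$ (not just proximity) for every $x_i$ is the delicate part, and it is precisely the arbitrarily strong isolation furnished by Lemma~\ref{InfiniteRatioLemma}---output within an arbitrarily small fraction of the leg length---that provides the slack needed to close the gap.
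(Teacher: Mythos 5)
Your setup is sound (translation invariance does reduce the claim to a single point $z_0$; Corollary~\ref{RandSegmentCorollary} pins the output of $(x_i,z,z)$ to $\overline{x_i z}$; your closedness argument for $G$ is correct as stated), but the heart of your argument has a genuine gap: cost continuity cannot do the transport work you assign to it. Lemma~\ref{ContinuousLemma} is an \emph{additive} $1$-Lipschitz bound on $\mu(x_i)=\|f(x_i,z,z)-x_i\|$; since the output is supported on $\overline{x_i z}$, the quantity you actually need to control, $\|f(x_i,z,z)-z\|=\|x_i-z\|-\mu(x_i)$, is therefore merely $2$-Lipschitz in $x_i$. So when you ``step-walk'' agent $i$ from the position $\hat{x}_i$ produced by Lemma~\ref{InfiniteRatioLemma} (a position you do not get to choose) to an arbitrary target $x_i$, the per-step bounds compose \emph{additively}: after total displacement $D$ the output may drift from $z$ by order $D$. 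Making the isolation parameter $\epsilon$ arbitrarily small shrinks only the \emph{initial} error, not this transport error, so ``letting $\epsilon\to 0$'' does not close the gap --- you name this obstacle yourself but offer no mechanism that overcomes it. Relatedly, your exactness device is circular: closedness of $G$ needs exact members $f(\cdot,z,z)=z$ arbitrarily close to the target, which is precisely what the approximate bounds cannot produce.

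What the paper supplies at exactly this point, and what is missing from your proposal, is a \emph{multiplicative} amplification bound (Lemmas~\ref{DeltaBoundLemma} and~\ref{DeltaUnanimousLemma}): in the two-player game obtained by merging agents $j,k$, if the output is within $\delta$ of the merged agent, then after the far agent moves a distance $d<r$ (with $r$ the current separation) the output stays within $\delta/(1-d/r)$; composing such steps --- first along a contour of constant radius, then radially --- amplifies the error only by a factor $O(\ell/r)$, where $\ell$ is the final separation. This bound is \emph{not} a consequence of cost continuity: it comes from strategyproofness of the \emph{moving} agent combined with convexity of the distance function along the segment. Since Lemma~\ref{InfiniteRatioLemma} provides an initial error below $\epsilon r$, the transported error is $O(\epsilon \ell)$, i.e., controlled \emph{relative} to the final configuration; choosing $\epsilon$ small enough then contradicts an assumed deviation $\|f(\vx)-z\|\geq\delta>0$ via a joint misreport of the group $\{j,k\}$ (this contradiction, not a limiting or closedness argument, is how the paper obtains exact equality). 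A secondary omission: the isolated agent in Lemma~\ref{InfiniteRatioLemma} depends on $(z,\ell)$, so ``let $i$ be the agent singled out'' is not yet well-defined; the paper fixes a single $i$ by a pigeonhole argument over $\ell=1/m$ together with translation invariance, a step your proposal skips.
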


Much care should be taken when moving randomized distributions. Essentially, we show that the relative distance between the output and an agent may be amplified but bounded by a constant, which actually decomposes into two more smaller lemmas in our proof (see Appendix \ref{ProofConvergeLemma}).

Note that this property just likes the unanimity if we disregard the isolated agent $i$. We can then complete the special case where $n=3$, following a similar proof to Lemma \ref{SegmentLemma}.
% (see Appendix \ref{ProofSpecial2DictatorialTheorem} for the full proof).

\begin{lemma} \label{Special2DictatorialTheorem}
When $n = 3$, if $f$ is unanimous, translation-invariant, and group-strategyproof, then $f$ is 2-dictatorial.
\end{lemma}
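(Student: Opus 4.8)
The plan is to use Lemma~\ref{ConvergeLemma} to pin down the two dictators and then mimic the output-space-reduction argument of Lemma~\ref{SegmentLemma}, except that the partial unanimity of the dictator pair plays the role that full unanimity played there. Concretely, Lemma~\ref{ConvergeLemma} hands us a distinguished agent $i$ (this is where translational invariance enters); set $\{j,k\} = N \setminus \{i\}$ and observe that whenever agents $j$ and $k$ report a common point $z$, the output is forced to be $z$, regardless of agent $i$'s report. I claim that $j$ and $k$ are the $2$-dictators, i.e.\ $\mathrm{supp}(f(\vx)) \subseteq \overline{x_jx_k}$ for every profile $\vx$.

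I would argue by contradiction: suppose $\mathrm{supp}(f(\vx)) \not\subseteq \overline{x_jx_k}$ for some $\vx$. It suffices to exhibit a single point $z$ with $\|z - x_j\| < \|f(\vx) - x_j\|$ and $\|z - x_k\| < \|f(\vx) - x_k\|$: then letting $j$ and $k$ jointly misreport $z$, the partial unanimity above makes the output the deterministic point $z$, which strictly benefits both, contradicting group-strategyproofness. To build $z$ I look at the centroid $C = \mathcal C(f(\vx))$, for which convexity gives $\|C - x_j\| \le \|f(\vx) - x_j\|$ and $\|C - x_k\| \le \|f(\vx) - x_k\|$. If $C \notin \overline{x_jx_k}$, then by strict convexity the triangle inequality at $C$ is strict, and $z$ is produced exactly as in the proof of Lemma~\ref{SegmentLemma} (a point on the segment that beats $C$, hence beats $f(\vx)$, for both agents).

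The remaining, and most delicate, case is $C \in \overline{x_jx_k}$ while the support escapes the segment. Here I invoke Lemma~\ref{CentroidLemma}: the support lies on a segment joining two of the agents, and since it is not contained in $\overline{x_jx_k}$, this segment must involve $i$, say $\mathrm{supp}(f(\vx)) \subseteq \overline{x_ix_j}$ (the other choice is symmetric). Then the support lies on a single ray out of $x_j$, so $\|C - x_j\| = \|f(\vx) - x_j\|$; meanwhile I claim $\|C - x_k\| < \|f(\vx) - x_k\|$ strictly, since equality would force the support to lie on a ray from $x_k$ as well, making $x_i,x_j,x_k$ collinear with the support squeezed onto $\overline{x_jx_k}$ (using $C \in \overline{x_jx_k}$), contradicting our assumption. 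Since also $C \ne x_j$ (otherwise the output would be the deterministic point $x_j \in \overline{x_jx_k}$), I take $z$ to be a small step from $C$ toward $x_j$: this strictly decreases the distance to $x_j$, while the strict slack for $x_k$ absorbs the perturbation, yielding the desired $z$. The main obstacle is precisely this last case: handling randomized outputs whose centroid already sits on the target segment requires the strict-convexity (strict Jensen) analysis together with the nudge toward a dictator, rather than the one-shot triangle-inequality move that suffices in the deterministic Lemma~\ref{SegmentLemma}.
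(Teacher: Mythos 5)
Your proof is correct and follows essentially the same route as the paper's: Lemma~\ref{ConvergeLemma} supplies the distinguished agent $i$ and the partial unanimity of the pair $\{j,k\}$, and the contradiction comes from a joint misreport by $j,k$ to a point strictly better for both, which partial unanimity turns into the output. The only difference is one of detail: the paper compresses the existence of such a point into a single appeal to strict convexity, whereas you carefully justify it through the centroid case analysis (invoking Lemma~\ref{CentroidLemma} and strict Jensen when the centroid already lies on $\overline{x_jx_k}$), correctly filling in a step the paper leaves implicit for randomized outputs.
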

\begin{proof}

Let $i$ be the agent chosen in Lemma \ref{ConvergeLemma}, and let $j, k$ be the other two agents. We claim that agents $j, k$ are the 2-dictators.

Assume for contradiction that there exists $\vx$ such that $f(\vx)$ does not lie on $\overline{x_j x_k}$. Due to the strict convexity, there exists $y \in \R_d$ such that $\|y - x_j\| < \|f(\vx) - x_j\|$ and $\|y - x_k\| < \|f(\vx) - x_k\|$. Let $\vx'$ be the profile where $x_i' = x_i$ and $x_j' = x_k' = y$. By Lemma \ref{ConvergeLemma}, $f(\vx') = y$, so agents $j$ and $k$ would collaborate to misreport $y$, contradicting group-strategyproofness.
\end{proof}

\subsection{Scale Reduction}

Theorem \ref{2DictatorialTheorem} generalizes the base case for any $n \geq 3$. Similarly, we can divide the agents into three groups, among which two will be the groups of 2-dictators. However, here we cannot simply fix the location of an agent as done in Theorem \ref{DictatorialTheorem}, because then the reduced game would not be translation-invariant. Differently, by fixing the location of a non-dictator to another agent (i.e., bind them together), we can then construct an $(n-1)$-agent game which is still unanimous, translation-invariant, and group-strategyproof, and thus 2-dictatorial by the induction assumption. Leveraging some ``partial unanimity'' properties, we finally show that the 2-dictatorship in the reduced mechanism extends to all profiles no matter how the non-dictator locates.
% See Appendix \ref{Proof2DictatorialTheorem} for the proof.

\begin{theorem} \label{2DictatorialTheorem}
If $f$ is unanimous, translation-invariant, and group-strategyproof, then $f$ is 2-dictatorial.
\end{theorem}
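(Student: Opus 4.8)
The plan is to prove Theorem~\ref{2DictatorialTheorem} by induction on $n$, following the scale-reduction template already used in Theorem~\ref{DictatorialTheorem} but adapted to preserve translational invariance. The base case $n=3$ is exactly Lemma~\ref{Special2DictatorialTheorem}, so I assume $n\geq 4$ and that the claim holds for $n-1$. First I would bind agents together to form three candidate ``super-agents.'' Concretely, define a three-agent mechanism by collapsing the agent set $N$ into three groups $A,B,C$ that each report a common location; the resulting mechanism $g(y_A,y_B,y_C)=f(\ldots)$ is unanimous, translation-invariant, and group-strategyproof, hence $2$-dictatorial by Lemma~\ref{Special2DictatorialTheorem}. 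This identifies two of the three groups as the $2$-dictator groups; without loss of generality say groups $A,B$ are the $2$-dictators and group $C$ is non-dictatorial. The output of $g$ always lies on the segment between the common location of $A$ and that of $B$.

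Next I would reduce the scale by one. Rather than fixing a non-dictator agent to an arbitrary constant (which would break translational invariance), I would bind a non-dictator agent to another agent, say identify $x_n \equiv x_{n-1}$, producing an $(n-1)$-agent mechanism $f'$ that inherits unanimity, translational invariance, and group-strategyproofness. By the induction hypothesis $f'$ is $2$-dictatorial, giving two agents $i,j\in\{1,\ldots,n-1\}$ on whose segment every output of $f'$ lies. The subtle point is that $f'$ only constrains profiles in which $x_n=x_{n-1}$, so I must promote this partial $2$-dictatorship to all profiles. The key observation, as flagged in the techniques section, is to exploit two overlapping ``partial unanimity'' properties: the partial unanimity with respect to the $2$-dictator groups from $g$, and the partial unanimity with respect to the bound agent together with the dictators in $f'$. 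When several agents in a partial-unanimity group report a common location, unanimity-type reasoning (via Lemma~\ref{CentroidLemma} and the uncompromising property, Lemma~\ref{StillDictatorLemma}) forces the output onto their shared point or onto a segment they span.

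The main step is to show that the $2$-dictators identified by $g$ and those identified by $f'$ can be taken consistent, and that their intersection pins down exactly two global $2$-dictators. I would argue that the intersection of the $2$-dictator group of $g$ with the dictator pair of $f'$ is precisely a single pair of agents: if a profile produced an output off the segment $\overline{x_ix_j}$ between the true global $2$-dictator candidates, then by strict convexity (as in the proof of Lemma~\ref{SegmentLemma} and Lemma~\ref{Special2DictatorialTheorem}) there is a point strictly preferred by both candidates, and letting the remaining agents collaborate to misreport their partial-unanimity location we could drive the output to that strictly-better point, violating group-strategyproofness. Cost continuity (Lemma~\ref{ContinuousLemma}) is used to move agents in small steps so that the partial properties established on special profiles propagate across the whole space, exactly as in Lemma~\ref{ConvergeLemma}.

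I expect the hardest part to be handling the mismatch that may arise when different choices of which non-dictator agent to bind, or different reduced mechanisms, appear to select different $2$-dictator pairs; reconciling them requires a careful consistency argument, analogous to the ``common dictator'' step at the end of Theorem~\ref{DictatorialTheorem} where two profiles $\vx,\vx'$ are compared and a beneficial misreport is exhibited. Here the comparison is more delicate because outputs are distributions on segments rather than single points, so the argument must track centroids and radii (using Lemma~\ref{CentroidLemma}) and must ensure that the translational-invariance constraint is never silently violated when agents are rebound. Once the global $2$-dictator pair is shown to be well-defined and independent of the auxiliary bindings, a final application of the strict-convexity separation argument establishes that $f(\vx)$ always lies on $\overline{x_ix_j}$, completing the induction.
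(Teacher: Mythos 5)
Your scaffolding coincides with the paper's: induction on $n$, the three-group binding giving a $2$-dictatorial $g$ via Lemma~\ref{Special2DictatorialTheorem}, the reduction that binds a non-dictator to another agent (the paper's $f_1$, which binds agent $1$ to the dictator agent $2$) so that translational invariance survives and the induction hypothesis applies, and the partial-unanimity facts extracted with the uncompromising property. The genuine gap is in the only step that is actually hard: deriving a contradiction from a profile $\hat{\vx}$ whose output lies off $\overline{x_2x_3}$. Your argument is that strict convexity yields a point strictly preferred by both dictator candidates, and that ``letting the remaining agents collaborate to misreport their partial-unanimity location we could drive the output to that strictly-better point, violating group-strategyproofness.'' But the only coalitions whose coincidence pins down the output are $\{2,\dots,n\}$ (from $g$ plus the uncompromising property) and $\{1,2,3\}$ (from $f_1$ plus the uncompromising property); their intersection $\{2,3\}$ inherits no unanimity property of its own. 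So to force the output to the preferred point you must move a coalition strictly larger than $\{2,3\}$, and group-strategyproofness is violated only if \emph{every} member of the misreporting coalition strictly gains --- which agents $4,\dots,n$ (or agent $1$) need not. Conversely, if only agents $2,3$ misreport, nothing determines the new output. As stated, your consistency step proves nothing.

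The paper closes exactly this hole with a two-case analysis that your plan does not contain, and the two cases need different tools. If some off-segment output is genuinely randomized, Lemma~\ref{CentroidLemma} confines its support to a line $L$, and one can choose $x'\in L$ strictly preferred by agents $2,3$ \emph{and by every agent off $L$}; moving precisely that coalition (each of whose members would strictly gain if the output became $x'$) forces $f(\hat{\vx}')\neq x'$, after which the unanimous reduced games --- agent $1$ versus the bound pair $\{2,3\}$ (Corollary~\ref{RandSegmentCorollary}) and the game with agent $1$ fixed (Lemma~\ref{CentroidLemma}) --- pin $f(\hat{\vx}')$ to $\overline{x_1x'}\cap L$, with continuity of the centroid (Corollary~\ref{RandContinuousCorollary}) handling the degenerate case where $x_1$ is also on $L$. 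If instead every off-segment output is deterministic, the paper builds the deterministic two-player game $\hat f(x_1,x_2)=f(x_1,x_2,x_2,\hat{\vx}_{-\{1,2,3\}})$, invokes Theorem~\ref{DictatorialTheorem} on it, and then applies Corollary~\ref{RandSegmentCorollary} to the two-player game formed by agents $2,3$. Your proposal never makes this case split, and in particular never explains how a \emph{deterministic} off-segment output is excluded --- a situation in which Lemma~\ref{CentroidLemma} gives no information and your centroid/radius tracking has nothing to track.
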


\begin{proof}
We prove by induction on $n$. When $n = 2$, by Corollary \ref{RandSegmentCorollary}, the only two agents are the 2-dictators. Now we assume that $n \geq 3$, and assume it holds for $n - 1$.
%When $n = 2$, we claim that the only two agents are the 2-dictators. Suppose not, i.e., some support of the output lies out of the segment between the 2-dictators, so the output must be deterministic (by Lemma \ref{CentroidLemma}), but it breaks group-strategyproofness (by Lemma \ref{SegmentLemma}).

%When $n \geq 2$, first divide $N$ into two non-empty groups $N_1$, $N_2$, where $N_1 \cup N_2 = N$ and $N_1 \cap N_2 = \varnothing$. $\forall y_1, y_2 \in \R_d$, denote $\vx(y_1, y_2)$ as the profile where $x_i = y_1$ for all $i \in N_1$ and $x_i = y_2$ for all $i \in N_2$.%

First, we divide all agents into three non-empty, nonintersecting groups $S_1, S_2, S_3$, where  $S_1 \cup S_2 \cup S_3 = N$. $\forall y_1, y_2, y_3 \in \R_d$, denote $\vx(y_1, y_2, y_3)$ as the profile where $x_i = y_1$ for all $i \in S_1$, $x_j = y_2$ for all $j \in S_2$, and $x_k = y_3$ for all $k \in S_3$. We construct a mechanism $g$, where there are only three agents: $\forall \vy = (y_1, y_2, y_3)$, let
\begin{align*}
g(\vy) &= f(\vx(y_1, y_2, y_3)).
\end{align*}
In short, we bind the three groups of agents respectively and then construct $g$. It is clear that $g$ is unanimous, translation-invariant and group-strategyproof, so $g$ is 2-dictatorial (by Lemma \ref{Special2DictatorialTheorem}). Without loss of generality, assume agents $2, 3$ are the 2-dictators of $g$, and $1 \in S_1$.

Let $\mathcal{F}_S$ denote the set of mechanisms derived from $f$ when the locations of all agents in $S$ are fixed. Formally, $\forall S \subseteq N$,
\[\mathcal{F}_S = \{f_{\vx_S} \mid \vx_S \in \R_d^{|S|}\},\]
where $\forall \vx_{-S}$, $f_{\vx_S}(\vx_{-S}) = f(\vx_S, \vx_{-S})$.
Note that every mechanism in $\mathcal{F}_S$ is formed by $n - |S|$ agents, and is group-strategyproof as well as $f$. Similarly let $\mathcal{F}_{-S}$ denote the set of mechanisms when the locations of all agents \emph{not} in $S$ are fixed. Specially, $\mathcal{F}_1$ denotes the set of mechanisms when the location of agent $1$ is fixed, and moreover, every mechanism in $\mathcal{F}_1$ (as well as $\mathcal{F}_{S_1}$) is unanimous (by the 2-dictatorship of $g$ together with the uncompromising property).

Let $f_1$ denote the mechanism derived from $f$ when the location of agent $1$ is fixed to the location of agent $2$, i.e., $\forall \vx_{-1}$,
\[f_1(\vx_{-1}) = f(x_2, \vx_{-1}).\]
It is clear that $f_1$ is unanimous, translation-invariant, and group-strategyproof, and thus 2-dictatorial by the induction assumption. We assume that agents $1, 2$ (i.e., agents $2, 3$ in $f$) are the 2-dictators of $f_1$. This assumption is without loss of generality since otherwise we can bind agent $1$ to one of the 2-dictators instead of agent $2$, while the 2-dictators must not change by the uncompromising property. By the 2-dictatorship of $f_1$, as long as agents $1, 2, 3$ in $f$ share the same location, the output must be that point as well, i.e., every mechanism in $\mathcal{F}_{-\{1,2,3\}}$ is unanimous.

Then we will generalize the 2-dictatorship of $f_1$ to the whole space for arbitrary $x_1$. Assume for contradiction that $f$ is not 2-dictatorial, i.e., $\exists \hat{\vx}$ such that $f(\hat{\vx})$ does not lie on $\overline{x_2 x_3}$. Consider the following two cases:
\begin{itemize}
\item $\exists \hat{\vx}$ such that $f(\hat{\vx})$ does not lie on $\overline{x_2 x_3}$ and $\mathcal{R}(f(\hat{\vx})) > 0$. By Lemma \ref{CentroidLemma}, $f(\hat{\vx})$ lies on a line segment between some agents. By the same argument as the proof of Lemma \ref{CentroidLemma}, there exists $x' \in \R_d$ on the same line with $f(\hat{\vx})$ such that $\|x' - x_i\| < \|f(\hat{\vx}) - x_i\|$ for all $i \in S$, where $S$ includes agents $2, 3$ and all agents that are not on the same line with $f(\hat{\vx})$. Let $\hat{\vx}'$ denote the profile after moving all agents in $S$ to $x'$ (then agents $2, \dots, n$ are on the same line with $f(\hat{\vx})$), which satisfies $f(\hat{\vx}') \neq x'$ by group-strategyproofness. Since every mechanism in $\mathcal{F}_1$ and $\mathcal{F}_{-\{1,2,3\}}$ is unanimous and group-strategyproof, by Lemma \ref{CentroidLemma}, $f(\hat{\vx}')$ must both lie on the same line with $f(\hat{\vx})$ and lie on $\overline{x'x_1}$. If $x_1$ is not on the same line with $f(\hat{\vx})$, then it directly follows that $f(\hat{\vx}') = x'$ and leads to a contradiction. Even if $x_1$ is on the same line with $f(\hat{\vx})$, any $f(\hat{\vx}') \neq x'$ would violate the continuity of the centroid (Corollary \ref{RandContinuousCorollary}), when considering the two-player game formed by agent $1$ and a group of agents $2, 3$ while keeping the other agents fixed.

\item $\forall \vx$ such that $f(\vx)$ does not lie on $\overline{x_2x_3}$, $\mathcal{R}(f(\vx)) = 0$. We construct a two-player game formed by agent $1$ and a group of agents $2, 3$ while keeping the other agents fixed: $\forall x_1, x_2 \in \R_d$, let
\[\hat{f}(x_1, x_2) = f(x_1, x_2, x_2, \hat{\vx}_{-\{1,2,3\}}).\]
It is clear that $\hat{f}$ is deterministic, unanimous, and group-strategyproof, and thus dictatorial by Theorem \ref{DictatorialTheorem}. That is, as long as agents $2, 3$ share the same location, the output must be that point as well regardless of the location of agent $1$. Now consider the two-player game $f_{\hat{\vx}_{-\{2,3\}}}$ formed by agents $2, 3$: $\forall x_2, x_3 \in \R_d$,
\[f_{\hat{\vx}_{-\{2,3\}}}(x_2, x_3) = f(\hat{\vx}_1, x_2, x_3, \hat{\vx}_{-\{1,2,3\}}).\]
Since $f_{\hat{\vx}_{-\{2,3\}}}$ is unanimous and group-strategyproof, $f(\hat{\vx})$ must lie on $\overline{x_2x_3}$ by Corollary \ref{RandSegmentCorollary}, which makes a contradiction.

\end{itemize}

Therefore, $f$ is 2-dictatorial.

\end{proof}

\begin{corollary} \label{2DictatorialCorollary}
No unanimous, translation-invariant, group-strategyproof mechanism is anonymous, for any $n \geq 3$.
\end{corollary}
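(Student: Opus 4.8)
The plan is to argue by contradiction, using Theorem~\ref{2DictatorialTheorem} to convert anonymity into a geometric impossibility. Suppose some mechanism $f$ is unanimous, translation-invariant, group-strategyproof, and anonymous. By Theorem~\ref{2DictatorialTheorem}, $f$ is 2-dictatorial, so there are two agents---which, after relabeling, I take to be agents $1$ and $2$---such that for every profile $\vx$ the support of $f(\vx)$ lies on the segment $\overline{x_1 x_2}$. Anonymity, on the other hand, says that $f(\vx)$ depends only on the \emph{multiset} of reported locations, not on which agent reports which point. The entire argument exploits the tension between these two facts.

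Concretely, since $d \geq 2$ I would fix three points $A, B, C \in \R_d$ forming a non-degenerate triangle, so that the three sides $\overline{AB}$, $\overline{BC}$, $\overline{CA}$ meet only pairwise at the vertices and have empty common intersection. Then I consider three profiles that share the same multiset of locations (one copy each of $A$ and $B$, and $n - 2$ copies of $C$): the base profile $\vx = (A, B, C, C, \dots, C)$, the profile $\vx' = (C, B, A, C, \dots, C)$ obtained by transposing coordinates $1$ and $3$, and the profile $\vx'' = (A, C, B, C, \dots, C)$ obtained by transposing coordinates $2$ and $3$. Because all three have identical multisets, anonymity forces $f(\vx) = f(\vx') = f(\vx'')$.

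Now I apply 2-dictatorship (agents $1, 2$) to each profile separately. For $\vx$ the output lies on $\overline{x_1 x_2} = \overline{AB}$; for $\vx'$ it lies on $\overline{x'_1 x'_2} = \overline{CB}$; and for $\vx''$ it lies on $\overline{x''_1 x''_2} = \overline{AC}$. Combined with the equalities from anonymity, the common value $f(\vx)$ must lie simultaneously on all three sides of the triangle $ABC$. Since those three segments have empty common intersection, this is impossible, and hence no such anonymous mechanism can exist.

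I do not expect a serious obstacle here, as this is a short geometric corollary of the main theorem; the only point needing care is the bookkeeping. One must check that each transposition genuinely preserves the multiset (so that anonymity applies) while moving a fresh non-dictator location into one of the two dictator slots, and that the chosen triangle is non-degenerate so that its three sides share no common point. The case $n = 3$ is the cleanest instance---the three profiles are just relabelings of $(A, B, C)$---and the case $n > 3$ follows identically by parking all remaining agents at $C$.
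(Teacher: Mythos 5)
Your proposal is correct and matches the paper's intent: the paper states this corollary without proof as an immediate consequence of Theorem~\ref{2DictatorialTheorem}, and your argument---permuting $A$, $B$, $C$ among the dictator and non-dictator slots so that anonymity forces the common output's support into $\overline{AB} \cap \overline{BC} \cap \overline{CA} = \varnothing$---is exactly the natural formalization of that implication, with the hypotheses $n \geq 3$ and $d \geq 2$ used where they are genuinely needed.
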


Theorem \ref{2DictatorialTheorem} indicates that we can simply move the 2-dictators to one side away from the others, where the lower bounds of approximately optimal mechanisms are obtained.

\begin{corollary}
For any $n \geq 3$, no translation-invariant, group-strategyproof mechanism can do better than $2$-approximation for maximum cost.
\end{corollary}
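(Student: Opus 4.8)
The plan is to combine the $2$-dictatorial characterization (Theorem \ref{2DictatorialTheorem}) with one explicit worst-case instance. First I would observe that any mechanism beating $2$-approximation for maximum cost is, a fortiori, a bounded-approximation mechanism, and hence unanimous by Lemma \ref{UnanimousInfApprox}. Together with the assumed translation-invariance and group-strategyproofness, this lets me invoke Theorem \ref{2DictatorialTheorem}, so the mechanism is $2$-dictatorial: there is a fixed pair $i, j \in N$ such that for every profile the support of $f(\vx)$ lies on $\overline{x_i x_j}$.

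Next, exploiting $n \geq 3$, I would construct a profile that pins the output to the $2$-dictators while the true optimum sits strictly between the clusters. Fix any unit vector $e$, set $x_i = x_j = 0$, and place every remaining agent (there is at least one, since $n \geq 3$) at $2e$. Because the support of the output must lie on $\overline{x_i x_j}$, which degenerates to the single point $\{0\}$, the mechanism is forced to return the point mass at $0$. Its expected maximum cost is then $\max\{0, 2\} = 2$, whereas the optimal facility is the midpoint $e$, whose maximum cost is $1$ (the two distinct agent locations are at distance $2$ apart, so the smallest enclosing ball has radius $1$). On this profile the approximation ratio is exactly $2/1 = 2$, so the worst-case ratio of the mechanism is at least $2$, contradicting the assumption that it does strictly better than $2$-approximation.

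There is essentially no analytic difficulty here; all the heavy lifting is front-loaded into Theorem \ref{2DictatorialTheorem}, and what remains is a one-instance lower bound. The only points I would check with care are: (i) that ``better than $2$-approximation'' genuinely entails a bounded ratio, so that Lemma \ref{UnanimousInfApprox} yields unanimity and the characterization applies; and (ii) that collapsing the two $2$-dictators to a common point truly forces a deterministic output there, which is immediate from Definition \ref{2DictatorialTheorem}'s notion of $2$-dictatorship since $\overline{x_i x_j}$ reduces to a singleton. I would also note that the instance places only $n - 2 \geq 1$ agents away from the dictators, so it is valid for every $n \geq 3$ as required.
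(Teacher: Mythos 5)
Your proof is correct and follows exactly the route the paper intends: invoke Lemma \ref{UnanimousInfApprox} to get unanimity, apply Theorem \ref{2DictatorialTheorem} to obtain 2-dictatorship, and then place the two 2-dictators together and all remaining agents at a distant point so the forced output at the dictators' location is twice the optimal maximum cost. This is precisely the construction the paper sketches (``move the 2-dictators to one side away from the others''), so there is nothing to add.
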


\begin{corollary}
For any $n \geq 3$, no translation-invariant, group-strategyproof mechanism can do better than $\left(n/2 - 1\right)$-approximation for social cost.
\end{corollary}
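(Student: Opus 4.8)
The plan is to reduce to the 2-dictatorial characterization and then exhibit a single profile on which the social cost blows up relative to the optimum. First I would dispose of the non-unanimous case: if $f$ is not unanimous, then by Lemma \ref{UnanimousInfApprox} its approximation ratio is unbounded, so it certainly cannot do better than $(n/2-1)$-approximation. Hence I may assume $f$ is unanimous; since $f$ is also translation-invariant and group-strategyproof, Theorem \ref{2DictatorialTheorem} applies and $f$ is 2-dictatorial with some 2-dictators $i, j \in N$, meaning the support of $f(\vx)$ always lies on $\overline{x_i x_j}$.

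Next I would build the adversarial profile hinted at by the remark preceding the corollary, namely collapse both dictators onto one point and push everyone else away. Concretely, fix a unit vector $q$ (so $\|q\| = 1$) and let $\vx$ be the profile with $x_i = x_j = 0$ and $x_k = q$ for every $k \notin \{i, j\}$. Because the support of $f(\vx)$ must lie on the segment $\overline{x_i x_j}$, which here degenerates to the single point $\{0\}$, the output is the point mass at $0$ regardless of any randomization. Its social cost is therefore $\soc(f(\vx), \vx) = (n-2)\|q\| = n-2$.

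For the optimum I only need an upper bound, obtained by placing the facility at $q$: $\min_{y} \soc(y, \vx) \le \soc(q, \vx) = 2\|q\| = 2$. Dividing, the approximation ratio achieved on this profile is at least $(n-2)/2 = n/2 - 1$, so $f$ cannot beat $(n/2-1)$-approximation. Since $\vx$ was a legitimate profile, this contradicts any claimed ratio strictly below $n/2-1$.

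I expect no serious obstacle here, as essentially all the content is carried by Theorem \ref{2DictatorialTheorem}; the remaining points merely require care. These are (i) invoking Lemma \ref{UnanimousInfApprox} so that unanimity, which is a hypothesis of Theorem \ref{2DictatorialTheorem}, may legitimately be assumed, and (ii) observing that degenerating the dictators' segment to a single point annihilates any benefit from randomization, so that the expected social cost of even a randomized mechanism is still exactly $n-2$. For $n = 3$ the claimed bound $n/2-1 = 1/2$ lies below the trivial lower bound of $1$, so the statement is vacuous there; the same construction nonetheless yields ratio $1$, which is more than enough.
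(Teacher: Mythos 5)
Your proposal is correct and matches the paper's intended argument: the paper derives this corollary directly from Theorem \ref{2DictatorialTheorem} by ``moving the 2-dictators to one side away from the others,'' which is exactly your profile with $x_i = x_j = 0$ and all other agents at a unit vector $q$, forcing the (possibly randomized) output onto the degenerate segment $\{0\}$ and yielding ratio $(n-2)/2$. Your explicit invocation of Lemma \ref{UnanimousInfApprox} to justify assuming unanimity is a detail the paper leaves implicit, but it is the right way to fill that gap.
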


\section{Discussion and Open Problems} \label{sec:discussion}

Our characterization of 2-dictatorship is almost complete, given the existence of non-dictatorial mechanisms (e.g.,  Mechanism~\ref{RandMed}). However, one remaining problem is whether or not the distribution between the 2-dictators could be affected by the other agents. The answer of this question would close the gap for the social cost objective.
%Although it might not be so surprising given that group-strategyproofness is more stringent than strategyproofness, the characterization of randomized mechanisms is still a major open problem and no previous work manages to do so.

In this paper, we focus on the strictly convex norms, which rule out $L_1$ and $L_\infty$ space.
Previous work shows that generalized median voter schemes are strategyproof under $L_1$-norm~\cite{Barbera1993Median}, and we suggest that they can moreover guarantee group-strategyproofness in a 2-dimensional $L_1$ (or $L_\infty$, after rotating the axes) space.
%(contradicting the Remark 4.1 made by Sui~\cite{Sui2015Multidim}, which seems to be a factual mistake)
%which may be similarly separable referring to the previous studies. 
However, they are no longer group-strategyproof if there are more than 2 dimensions. For example, suppose there are 5 agents located in a 3-dimensional $L_1$ space with coordinates $(1, 0, 0), (0, 1, 0), (0, 0, 1), (1, 1, 1), (1, 1, 1)$, and the algorithm selects a median in each dimension, making the output to be $(1, 1, 1)$; then the first three agents can collaborate to misreport $(0, 0, 0)$, which would result in a strictly better output $(0, 0, 0)$. 

The most challenging generalization of this work would be deriving characterization of  randomized strategyproof mechanisms in multi-dimensional space. Among deterministic mechanisms, much effort has been done to the generalized median voter schemes under different domains; given such characterization, we suggest that selecting the median in each dimension can be approximately efficient for social cost, but it also indicates that $2$-approximation is already the tight bound for maximum cost. Yet there is no known result about randomized mechanisms, even in an approximation view. We propose the following mechanism for better approximating the maximum cost.

\begin{mechanism} \label{RandCenter}
Given $\vx$, output $(x_1 + \dots + x_n) / n$ with $1/2$ probability, and each $x_i$ with $1/2n$ probability.
\end{mechanism}  %By intuition, we think that $2 - o(1)$ is also the tight bound, but it needs far more analysis.

\noindent
%, and $(2 - 2/n)$-approximation for social cost
Mechanism \ref{RandCenter} is strategyproof and $(2 - 1/n)$-approximation for maximum cost in any normed vector space (see Appendix \ref{ProofRandCenter} for the proof). Although $2 - o(1)$ may not be a significant breakthrough, it breaks the tight bound of deterministic mechanisms and provides positive implications. Further, the characterization of randomized mechanisms would be completely different if taking away the group-strategyproofness constraint, since some randomized strategyproof mechanisms such as Mechanism \ref{RandCenter} can truly break the limit studied in this paper. Studying randomized strategyproof mechanisms would be a more challenging but attractive task. We believe that our approach and techniques could potentially lead the way for future algorithmic studies and further characterizations in this fundamental setting.

%\section{Acknowledgement}

%We would like to thank an anonymous reviewer for giving comprehensive suggestions in an earlier version of this work, from which we were encouraged to realize the significance to this literature and succeed to make further generalization.

\section*{Acknowledgements}
P. Tang and S. Zhao are supported by Science and Technology Innovation 2030 --- ``New Generation Artificial Intelligence'' Major Project No. 2018AAA0100904, 2018AAA0101103 and Beijing Academy of Artificial Intelligence (BAAI).
D. Yu is supported by NSF, ONR, Simons Foundation, Schmidt Foundation, Amazon Research, DARPA and SRC.

%%
%% The next two lines define the bibliography style to be used, and
%% the bibliography file.
\bibliographystyle{ACM-Reference-Format}
\bibliography{ref}

\clearpage

\begin{appendices}

\section{Deterministic Mechanisms}\label{DeterAppendix}

\subsection{Proof of Lemma \ref{PartialDictatorLemma}}\label{ProofPartialDictatorLemma}

\begin{lemma} \label{WeakDictatorLemma}
Suppose $f$ is deterministic, unanimous, and group-strategyproof. When $n = 2$, if there exist $x_1 \neq x_2$ such that $f(\vx) = x_1$, then for all $x_2'$, $f(x_1, x_2') = x_1$.
\end{lemma}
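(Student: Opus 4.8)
The plan is to prove Lemma~\ref{WeakDictatorLemma} by a continuity-and-stepping argument, exploiting that we work in a space of dimension at least two. We are given $x_1 \neq x_2$ with $f(x_1, x_2) = x_1$, and we must show $f(x_1, x_2') = x_1$ for \emph{every} $x_2'$. First I would record two immediate consequences of the hypotheses. By the uncompromising property (Lemma~\ref{StillDictatorLemma}), whenever $f(x_1, x_2) = x_1$ and we move the second agent to any location on the segment $\overline{x_1 x_2}$ or leave it fixed, the output stays at $x_1$; in particular $f(x_1, x_2') = x_1$ for all $x_2' \in \overline{x_1 x_2}$. Second, by Lemma~\ref{SegmentLemma}, for any second-agent location $z$ the output $f(x_1, z)$ must lie on the segment $\overline{x_1 z}$, so the output is always "trapped" on the segment joining the two reported points.

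The core idea is to travel from $x_2$ to an arbitrary target $x_2'$ along a carefully chosen continuous path and argue that the output cannot leave $x_1$ along the way. I would parametrize a path $z(t)$, $t \in [0,1]$, with $z(0) = x_2$ and $z(1) = x_2'$, and apply cost continuity (Lemma~\ref{ContinuousLemma}) to the function $\mu(t) = \|f(x_1, z(t)) - x_1\|$, which is the distance from the \emph{fixed} first agent to the output. Here is the key leverage: cost continuity is stated for the moving agent, but I actually want continuity of the first agent's cost as the \emph{second} agent moves. So instead I would track $\nu(t) = \|f(x_1, z(t)) - z(t)\|$, the moving agent's own cost, which Lemma~\ref{ContinuousLemma} guarantees is continuous (indeed $1$-Lipschitz) in $z(t)$. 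Since the output lies on $\overline{x_1 z(t)}$ by Lemma~\ref{SegmentLemma}, writing the output as $\lambda(t) x_1 + (1-\lambda(t)) z(t)$ with $\lambda(t)\in[0,1]$ lets me convert between the two costs: the moving agent's cost is $\lambda(t)\|x_1 - z(t)\|$, and output $=x_1$ corresponds exactly to $\lambda(t) = 1$. The aim is to show that the set of $t$ with $f(x_1, z(t)) = x_1$ is both open and closed (relative to $[0,1]$), hence all of $[0,1]$.

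The closedness direction follows from continuity of $\nu$ together with Lemma~\ref{SegmentLemma}: if $t_k \to t$ with output always $x_1$, then $\nu(t_k) = \|x_1 - z(t_k)\|$, and passing to the limit forces $\nu(t) = \|x_1 - z(t)\|$, which combined with the segment constraint pins the output at $x_1$. For openness, around a point $t_0$ where $f(x_1, z(t_0)) = x_1$, I would invoke the uncompromising property again: since $f(x_1, z(t_0)) = x_1$, Lemma~\ref{WeakDictatorLemma}'s weaker cousin (the stepping lemma) gives $f(x_1, w) = x_1$ for all $w$ on the segment $\overline{x_1\, z(t_0)}$, and more usefully one shows that nearby second-agent positions still yield output $x_1$ by a strategyproofness/group-strategyproofness argument preventing a sudden jump of the output off $x_1$. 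This is exactly where the two-dimensionality enters and where I expect the main obstacle: the argument must route the path $z(t)$ so as to avoid any degenerate collinear configuration that would permit the output to slide continuously away from $x_1$ down the segment (the one-dimensional counterexample, where the leftmost and rightmost points are both selectable, shows such sliding is genuinely possible on a line). Concretely, the hard step is ruling out that $\lambda(t)$ decreases continuously from $1$; I would handle this by choosing the path to approach $x_2'$ through points not collinear with the current output and $x_1$, so that any output off $x_1$ would, via strict convexity and Lemma~\ref{SegmentLemma}, contradict the output's being simultaneously forced onto two distinct segments. Establishing that this routing is always possible in dimension $d \geq 2$, and that the local "no-jump" step genuinely closes, is the crux of the proof.
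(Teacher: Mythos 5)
Your framework (travel along a path $z(t)$ from $x_2$ to $x_2'$ and show that $\{t : f(x_1,z(t)) = x_1\}$ is open and closed in $[0,1]$) is reasonable, and your closedness argument via the $1$-Lipschitz cost $\nu(t)$ together with Lemma~\ref{SegmentLemma} is correct. But the openness step, which you yourself flag as ``the crux,'' is where the entire content of the lemma lives, and your sketch for it does not close. Continuity alone can never rule out the output sliding continuously away from $x_1$ --- that is exactly what happens in one dimension with, say, the max mechanism --- and your proposed remedy (routing through points not collinear with the current output, so that an output off $x_1$ is ``simultaneously forced onto two distinct segments'') is not a valid constraint: each profile $(x_1, z(t))$ forces its own output onto the single segment $\overline{x_1\,z(t)}$, and outputs at different values of $t$ are a priori unrelated, so no single output is ever confined to two segments at once. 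Strategyproofness relates outputs at different profiles only through cost comparisons, which is what must be exploited but never is in your sketch.

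The missing idea, which is the paper's one-line argument, is quantitative rather than topological: suppose $f(x_1,x_2)=x_1$ and $\|x_2'-x_2\| < \|x_1-x_2\|$. Any point $w = \xi x_1 + (1-\xi)x_2'$ on $\overline{x_1 x_2'}$ with $\xi < 1$ satisfies, by convexity of the norm,
\[\|w - x_2\| \leq \xi\|x_1 - x_2\| + (1-\xi)\|x_2' - x_2\| < \|x_1 - x_2\|,\]
so if $f(x_1,x_2') \neq x_1$, agent $2$ (truly located at $x_2$, currently getting cost $\|x_1-x_2\|$) would strictly gain by misreporting $x_2'$, contradicting strategyproofness; combined with Lemma~\ref{SegmentLemma}, this pins $f(x_1,x_2') = x_1$. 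This yields ``openness'' with a uniform, large step radius $\|x_1-x_2\|$ rather than an infinitesimal neighborhood, which makes the open/closed machinery unnecessary: the paper simply takes finitely many such steps --- first around the sphere of radius $\|x_1-x_2\|$ centered at $x_1$ onto the ray from $x_1$ toward $x_2'$ (this is where $d \geq 2$ enters), then straight along that ray --- and concludes by induction over the finite sequence. Note also that cost continuity (Lemma~\ref{ContinuousLemma}), on which your plan leans, is not needed here at all; the paper reserves it for the next stage (Lemma~\ref{PartialDictatorLemma}), where it is agent $1$ who moves.
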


\begin{proof}
%Figure here.

If $x_2' = x_1$, by unanimity, $f(x_1, x_2') = x_1$.

Otherwise, by Lemma \ref{SegmentLemma}, $f(x_1, x_2')$ lies on the segment between $x_1$ and $x_2'$. Thus if $\|x_2' - x_2\| < \|x_1 - x_2\|$, then $\|f(x_1,x'_2) - x_2\| \geq \|f(x_1, x_2) - x_2\|$ if and only if $f(x_1, x'_2) = x_1$, and we can obtain that $f(x_1, x'_2) = x_1$ by strategyproofness.

Now we only need to consider the case that $\|x_2' - x_2\| \geq \|x_1 - x_2\|$. In this case, 
it is actually easy to construct a sequence of points which starts with $x_2$, ends with $x_2'$, and for any two consecutive points $y$ and $z$, $\|z - y\| < \|x_1 - y\|$. E.g., first move agent $2$ from $x_2$ to the ray directed from $x_1$ to $x_2'$ while keeping the radius, then move straight to $x_2'$ (see Figure \ref{route_lemma}). Then we can inductively prove that for any point $y$ in the sequence, $f(x_1,y)=x_1$. Finally, we obtain $f(x_1, x_2') = x_1$.%
%
%Otherwise, starting from $x_2$, while agent $2$ is approaching $x_2'$, it can be divided into several steps. For each step, we will show that for a given $\hat{x}_2$ such that $\|\hat{x}_2 - x_2\| < \|x_1 - x_2\|$, $f(x_1, \hat{x}_2) = x_1$.
%
%By Lemma \ref{SegmentLemma}, $f(x_1, \hat{x}_2)$ lies on the segment between $x_1$ and $\hat{x}_2$. Hence, $\|f(x_1, \hat{x}_2) - x_2\| \geq \|f(x_1, x_2) - x_2\|$ if and only if $f(x_1, \hat{x}_2) = x_1$. By strategyproofness, we can obtain that $f(x_1, \hat{x}_2) = x_1$.
%
%Repeating this step, it is easy to construct a route that ends at $x_2'$, e.g., first move agent $2$ from $x_2$ to the ray directed from $x_1$ to $x_2'$ while keeping the radius, then move straight to $x_2'$ (see Figure \ref{route_lemma}). Finally we obtain $f(x_1, x_2') = x_1$.
\begin{figure}[htbp]%
\centering
\includegraphics[scale=0.1]{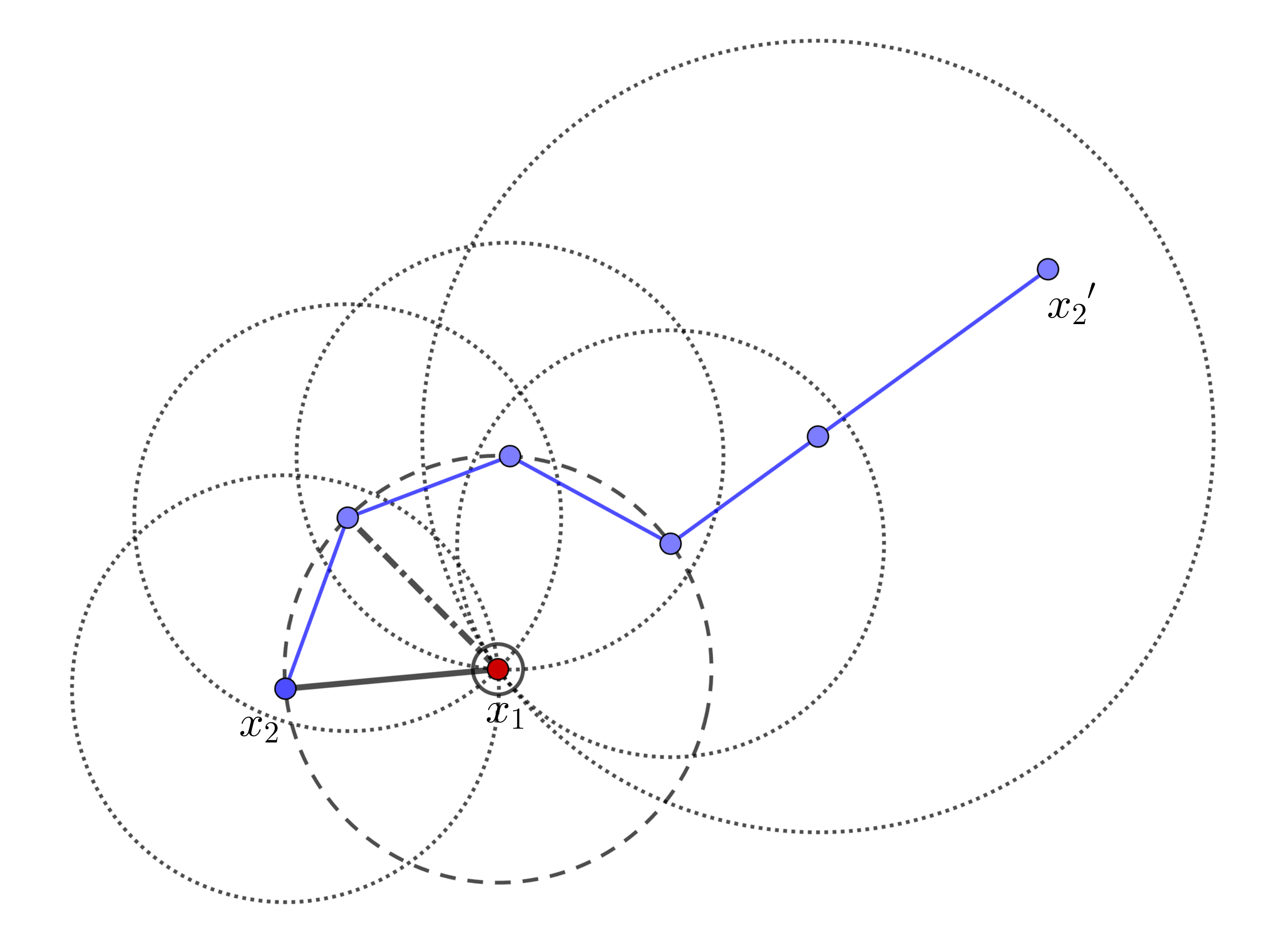}
\caption{Proof of Lemma \ref{WeakDictatorLemma}, shown in the Euclidean space. It shows a possible route that starts from $x_2$ and ends at $x_2'$, while keeping the dictator agent $1$ staying at the center $x_1$.}
\label{route_lemma}
\end{figure}%
\end{proof}

\begin{replemma}{PartialDictatorLemma}
Suppose $f$ is deterministic, unanimous, and group-strategyproof. When $n = 2$, if there exist $x_1 \neq x_2$ such that $f(\vx) = x_1$, then agent $1$ is the dictator in all profiles.
\end{replemma}
\begin{proof}

$\forall \vx' = (x_1', x_2')$, we prove $f(\vx') = x_1'$ by the following three steps.

\begin{enumerate}
\item Move agent $2$ to $\hat{x}_2$, such that $\hat{x}_2 \neq x_1'$ and $x_1'$ is on $\overline{x_1\hat{x}_2}$. By Lemma \ref{WeakDictatorLemma}, $f(x_1, \hat{x}_2) = x_1$.
\item Move agent $1$ to $x_1'$.

    Suppose $f(x_1', \hat{x}_2) \neq x_1'$. $\forall \xi \in [0, 1]$, let
    \[\mu(\xi) = \|f(\xi x_1 + (1 - \xi) x_1', \hat{x}_2) - \hat{x}_2\|.\]
    Also we know that $f(x_1', \hat{x}_2)$ lies on $\overline{x_1'\hat{x}_2}$ by Lemma \ref{SegmentLemma}, so \[\mu(0) = \|f(x_1', \hat{x}_2) - \hat{x}_2\| < \|x_1' - \hat{x}_2\| \leq \|x_1 - \hat{x}_2\| =  \mu(1).\] Because $\mu$ is a continuous function (by Lemma \ref{ContinuousLemma}), there exists $\xi_0\in(0,1]$ such that $\mu(\xi_0) = \|x_1' - \hat{x}_2\|$. Let $y = \xi_0 x_1 + (1 - \xi_0) x_1'$. We have $\|f(y, \hat{x}_2) - \hat{x}_2\| = \|x_1' - \hat{x}_2\|$, and $y$ is on  $\overline{x_1'\hat{x}_2}$ (by Lemma \ref{SegmentLemma}), so $f(y, \hat{x}_2) = x_1'$. Consequently, agent $1$ would misreport $y$ instead of $x_1'$, contradicting strategyproofness. Thus, $f(x_1', \hat{x}_2) = x_1'$.
\item Move agent $2$ from $\hat{x}_2$ to $x_2'$. By Lemma \ref{WeakDictatorLemma}, $f(\vx') = x_1'$.\qedhere
\end{enumerate}
\end{proof}

% \subsection{Proof of Lemma \ref{SpecialDictatorialTheorem}}\label{ProofSpecialDictatorialTheorem}

% \begin{replemma}{SpecialDictatorialTheorem}
% When $n = 2$, if $f$ is deterministic, unanimous, and group-strategyproof, then $f$ is dictatorial.
% \end{replemma}

% \subsection{Proof of Theorem \ref{DictatorialTheorem}}\label{ProofDictatorialTheorem}

% \begin{reptheorem}{DictatorialTheorem}
% If $f$ is deterministic, unanimous, and group-strategyproof, then $f$ is dictatorial.
% \end{reptheorem}

\section{Randomized Mechanisms}\label{RandomizeAppendix}

\subsection{Proof of Lemma \ref{ConvergeLemma}} \label{ProofConvergeLemma}

In order to prove Lemma \ref{ConvergeLemma}, we introduce Lemma \ref{DeltaBoundLemma} and Lemma \ref{DeltaUnanimousLemma}.

\begin{lemma} \label{DeltaBoundLemma}
Suppose $f$ is unanimous and group-strategyproof. When $n = 2$, $\forall \delta \geq 0$, if there exist $x_1 \neq x_2$ such that $\|f(\vx) - x_1\| \leq \delta$, then for all $x_2'$ where $\|x_2' - x_2\| <\|x_2 - x_1\|$,
\[\|f(x_1, x_2') - x_1\| \leq \frac{\delta}{1 - \frac{\|x_2' - x_2\|}{\|x_2 - x_1\|}}.\]
\end{lemma}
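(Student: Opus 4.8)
The plan is to reduce everything to the one-dimensional geometry of the two segments involved and then turn ``the new output stays far from the moved agent'' into ``the new output stays close to the fixed agent.'' Write $d = \|x_2 - x_1\|$ and $\delta' = \|x_2' - x_2\| < d$, and let $P = f(x_1, x_2)$ and $Q = f(x_1, x_2')$. First I would invoke Corollary \ref{RandSegmentCorollary} at both profiles, so that $P$ is supported on $\overline{x_1x_2}$ and $Q$ on $\overline{x_1x_2'}$. The only fact I need from the first profile is that for a distribution supported on a segment the expected distances to the two endpoints add up to the segment length, giving $\|P - x_2\| = d - \|P - x_1\| \ge d - \delta$.

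Next I would use that agent $2$, truthfully at $x_2$, cannot gain by reporting $x_2'$ (strategyproofness, i.e. the special case $S=\{2\}$ of group-strategyproofness): $\|Q - x_2\| \ge \|P - x_2\| \ge d - \delta$. Thus after the move the output is still far from $x_2$, and it remains to convert this into closeness to $x_1$. Here is the crux, and the reason the bound is multiplicative rather than additive: each $y \in \mathrm{supp}(Q)$ lies on $\overline{x_1x_2'}$, so $y = (1-s_y)x_1 + s_y x_2'$ and $y - x_2 = (1-s_y)(x_1 - x_2) + s_y(x_2' - x_2)$; the triangle inequality gives $\|y - x_2\| \le (1-s_y)d + s_y\delta'$. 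Taking expectation over $y \sim Q$, writing $\bar s = \E[s_y]$, and noting $\|Q - x_1\| = \bar s\,\|x_1 - x_2'\|$ by collinearity, I obtain $\|Q - x_2\| \le d - \bar s\,(d - \delta')$.

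Combining the two bounds yields $\bar s\,(d - \delta') \le \delta$, hence $\|Q - x_1\| = \bar s\,\|x_1 - x_2'\| \le \frac{\delta\,\|x_1 - x_2'\|}{d - \delta'}$; since a move that does not take agent $2$ farther from $x_1$ has $\|x_1 - x_2'\| \le d$, this is exactly $\frac{\delta}{1 - \delta'/d}$, and in general the argument controls the \emph{relative} distance $\|Q - x_1\|/\|x_1 - x_2'\|$, multiplying it by $1/(1-\delta'/d)$, which is the iterable form needed to chain small moves in Lemma \ref{ConvergeLemma}. The main obstacle, and the place where strict convexity enters (only indirectly, through Corollary \ref{RandSegmentCorollary}), is that strategyproofness controls the distance from the output to the \emph{moved} agent $x_2$, whereas the lemma concerns the distance to the \emph{fixed} agent $x_1$; the segment structure of both outputs is precisely what bridges the two. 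I would also take care to pass from a single point to a distribution by applying the triangle inequality pointwise and only then taking the expectation, rather than working with the centroid $\mathcal C(Q)$, since $x_2$ is off the line $\overline{x_1x_2'}$ and the centroid estimate would lose the factor that makes the bound tight.
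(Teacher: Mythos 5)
Your derivation is correct, and it is essentially the paper's own argument carried out carefully: restrict both outputs to their segments via Corollary \ref{RandSegmentCorollary}, use agent $2$'s unilateral strategyproofness to get $\|f(x_1,x_2')-x_2\| \ge r-\delta$ (writing $r=\|x_2-x_1\|$, $d=\|x_2'-x_2\|$, $L=\|x_2'-x_1\|$), and then use convexity of the distance to $x_2$ along $\overline{x_1x_2'}$, pointwise and then in expectation. The place where you and the paper part ways is exactly the point you flagged: your chain yields
\[\|f(x_1,x_2')-x_1\| \le \frac{\delta L}{r-d},\]
which matches the stated bound only when $L\le r$. The paper's proof appears to get the stated bound for every $x_2'$, but only through an error: after defining $\mu$ by a normalized parametrization of $\overline{x_1x_2'}$ (with the endpoints also swapped as displayed), it asserts $\|f(x_1,x_2')-x_2\| = \mu(\|f(x_1,x_2')-x_1\|)$, silently identifying the parameter with the true distance to $x_1$, i.e.\ assuming $L=r$; done correctly, the paper's argument produces exactly your bound.

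Moreover, the discrepancy is real and not a weakness of your proof: the lemma as stated is false when $x_2'$ moves away from $x_1$. In the Euclidean plane, let the two-agent mechanism output $x_2$ with probability $p(t)=\min\{t/100,\,1/2\}$, where $t=\|x_1-x_2\|$, and $x_1$ otherwise. It is unanimous; agent $2$ cannot gain, since for $t'>t$ the requirement $(p(t')-p(t))t\le p(t')(t'-t)$ reduces in the uncapped region to $t\le t'$; agent $1$ cannot gain, since for $t'<t$ the requirement $(p(t)-p(t'))t\le(1-p(t'))(t-t')$ reduces in the uncapped region to $t+t'\le 100$ (true, as both are below $50$) and in the capped cases to $t'(100-t')\le 2500\le 50t$; and the coalition $\{1,2\}$ cannot gain because the two agents' costs always sum to at least $\|x_1-x_2\|$, with equality under truth-telling. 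Yet for $x_1=(0,0)$, $x_2=(1,0)$, $x_2'=(3/2,0)$ one has $\delta=1/100$ and $d=1/2<r=1$, so the stated bound is $1/50$, while $\|f(x_1,x_2')-x_1\| = \frac{3}{200}\cdot\frac{3}{2} = \frac{9}{400} > \frac{1}{50}$. So your conclusion (keeping the factor $L$, or adding the hypothesis $L\le r$) is the best possible. One caveat on your closing remark, though: the relative, iterable form does not substitute for the stated lemma downstream. The paper invokes Lemma \ref{DeltaBoundLemma} with $L=(1+\epsilon)r>r$ in step 2 of Lemma \ref{DeltaUnanimousLemma} (the outward radial moves), and with the corrected bound those steps compound to a factor of order $(\ell/r)^2$ rather than $\ell/r$; indeed the same mechanism above (with $r<1/2$ and $\ell\ge 50$) falsifies Lemma \ref{DeltaUnanimousLemma} as stated, so that lemma and its application inside Lemma \ref{ConvergeLemma} require genuine repair, not just a rescaling of constants.
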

\begin{proof}

Let $r = \|x_2 - x_1\|$, $d = \|x_2' - x_2\|$. By Corollary \ref{RandSegmentCorollary}, $f(x_1, x_2')$ lies on $\overline{x_1 x_2'}$. $\forall x \in [0, r]$, let
\[\mu(x) = \left\|\frac{x}{r}x_1 + \left(1 - \frac{x}{r}\right) x_2' - x_2\right\|.\]
Clearly, $\mu(0) = r$, $\mu(r) = d$. By the convexity of distance, $\forall x \in [0, r]$,
\[\mu(x) \leq \left(1 - \frac{x}{r}\right) r + \frac{x}{r}\cdot d = r - \left(1 - \frac{d}{r}\right) x.\]
Thus, by strategyproofness,
\begin{align*}
r - \delta &\leq \|f(x_1, x_2) - x_2\| \\
    &\leq \|f(x_1, x_2') - x_2\|\\
    &= \mu(\|f(x_1, x_2') - x_1\|) \\
    &\leq r - \left(1 - \frac{d}{r}\right)\|f(x_1, x_2') - x_1\|,
\end{align*}
and we can solve that
\[\|f(x_1, x_2') - x_1\| \leq \frac{\delta}{1 - \frac{d}{r}}.\]

\end{proof}

\begin{lemma} \label{DeltaUnanimousLemma}
Suppose $f$ is unanimous and group-strategyproof. When $n = 2$, $\forall \delta \geq 0$, if there exist $x_1 \neq x_2$ such that $\|f(\vx) - x_1\| \leq \delta$, then for all $x_2'$ where $\ell \geq r$,
\[\|f(x_1, x_2') - x_1\| \leq \frac{100 \ell}{r} \delta.\]
Here, $r = \|x_2 - x_1\|$, $\ell = \|x_2' - x_1\|$.
\end{lemma}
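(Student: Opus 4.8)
The plan is to bootstrap from Lemma~\ref{DeltaBoundLemma} by iterating it along a chain of points leading from $x_2$ to $x_2'$. Lemma~\ref{DeltaBoundLemma} only controls one step of agent $2$'s movement, and crucially requires that the movement $\|x_2'-x_2\|$ be strictly less than the current radius $\|x_2-x_1\|$; a single application is useless once $\ell \geq r$. So first I would fix a scheme for moving agent $2$ from $x_2$ out to $x_2'$ in finitely many controlled hops, applying Lemma~\ref{DeltaBoundLemma} at each hop so that the bound $\|f(x_1,\cdot)-x_1\|\leq \delta$ gets amplified by a controlled factor each time, and then argue the product of all these factors is at most $100\ell/r$.

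The key technical idea is to move radially: at each step, push agent $2$ directly away from $x_1$ (along the ray from $x_1$), say doubling or multiplying the radius by some fixed factor $\lambda>1$ at each step. If the current position has radius $\rho$ and we move to radius $\lambda\rho$ along the same ray, then the step length is $(\lambda-1)\rho$ while the radius before the move is $\rho$, so the ratio $\|x_2'-x_2\|/\|x_2-x_1\| = \lambda-1$ is a \emph{constant} independent of the step. Lemma~\ref{DeltaBoundLemma} then multiplies the bound by $1/(1-(\lambda-1)) = 1/(2-\lambda)$ each step, provided $\lambda<2$. Choosing, say, $\lambda=3/2$ gives a per-step blowup factor of $2$, and reaching radius $\ell$ from radius $r$ takes $\lceil \log_{3/2}(\ell/r)\rceil$ steps, so the total amplification is $2^{\log_{3/2}(\ell/r)} = (\ell/r)^{\log_{3/2}2}$. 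The exponent $\log_{3/2}2 \approx 1.71$ exceeds $1$, so this naive radial scheme gives a polynomial but not linear bound in $\ell/r$; I would need to refine the step size (taking $\lambda$ close to $1$ makes each step cheap but requires many steps) and optimize, or combine a radial phase with a final angular adjustment to reach the specific point $x_2'$ rather than just the correct radius. The angular move at fixed radius $\ell$ has step length comparable to $\ell$ and radius $\ell$, so it is a single bounded-ratio application.

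Concretely I would split the journey into two parts mirroring the route in Lemma~\ref{WeakDictatorLemma}: first move radially outward from $x_2$ to the point $\hat{x}_2$ at radius $\ell=\|x_2'-x_1\|$ on the ray $x_1\to x_2$, then rotate at fixed radius $\ell$ from $\hat x_2$ to $x_2'$. For the radial phase I would take many small steps with $\lambda=1+1/m$ and let $m$ be moderately large so that the cumulative factor is close to linear in $\ell/r$; for the rotation phase the ratio $\|x_2'-\hat x_2\|/\ell$ is at most $2$ (two points on a sphere of radius $\ell$), which is not less than $1$, so Lemma~\ref{DeltaBoundLemma} does not directly apply and I would instead break the arc into a few sub-steps each of ratio below $1$, contributing only a constant factor. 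Bookkeeping the constants to land under the clean bound $100\ell/r$ is exactly where the slack in the constant $100$ is spent.

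The main obstacle I anticipate is controlling the total multiplicative blowup to be genuinely \emph{linear} in $\ell/r$ rather than merely polynomial. The tension is that Lemma~\ref{DeltaBoundLemma}'s factor $1/(1-d/r)$ blows up badly when a step is large relative to the current radius, forcing small steps, but small steps mean many steps and hence a large product. The resolution is that as agent $2$ moves radially \emph{outward}, the denominator radius grows, so later steps of a fixed absolute length are proportionally cheaper; taking steps of roughly constant \emph{ratio} (geometric spacing) balances these effects and, with the step ratio chosen small enough, keeps $\sum \log\frac{1}{1-d/r}$ comparable to $\log(\ell/r)$ up to the constant absorbed into $100$. Verifying that this geometric-step estimate telescopes to a linear-in-$\ell/r$ bound, and checking that each intermediate point still satisfies the hypothesis $x_1\neq$ (current position) so that Corollary~\ref{RandSegmentCorollary} and Lemma~\ref{DeltaBoundLemma} apply, is the crux of the argument.
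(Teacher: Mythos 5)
Your proposal is correct and is essentially the paper's own proof: the paper likewise decomposes the journey into an angular phase along the norm sphere (constantly many steps of ratio at most $1/8$, giving a factor $(8/7)^{32}\leq 50$) and a radial phase with geometrically spaced steps whose ratio $\epsilon$ is taken small enough that the cumulative amplification $(1-\epsilon)^{-1-\log_{1+\epsilon}(\ell/r)}$ converges to $\ell/r$ and can be capped at $2\ell/r$, with the product $50\times 2$ accounting for the constant $100$ exactly as in your bookkeeping. The only cosmetic difference is the order of the two phases (the paper rotates at radius $r$ first, then moves outward along the ray to $x_2'$), which is immaterial since the angular factor is scale-invariant.
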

\begin{proof}
The proof consists of two steps.
\begin{enumerate}
\item If $r = \ell$, then $\|f(x_1, x_2') - x_1\| \leq 50 \delta$.

We can move agent $2$ from $x_2$ to $x_2'$ in several steps, while keeping the radius (i.e., distance to $x_1$) unchanged. Note that we are actually moving along a convex contour. If we move a distance of at most $r/8$ at each step, then it can arrive within $32$ steps (think of moving along a square), and by Lemma \ref{DeltaBoundLemma},
\[\|f(x_1, x_2') - x_1\| \leq \left(\frac{8}{7}\right)^{32} \delta \leq 50 \delta.\]

\item If $x_2$ lies on $\overline{x_1x_2'}$, then $\|f(x_1, x_2') - x_1\| \leq 2 \ell \delta / r$.

From $x_2$ straight to $x_2'$, assume each time we move a distance of $\epsilon$ times the distance between $x_1$ and the current location of agent $2$. As $\epsilon \to 0$, by Lemma \ref{DeltaBoundLemma},
\begin{align*}
\|f(x_1, x_2') - x_1\| \leq \delta (1 - \epsilon)^{-1 - \log_{1 + \epsilon}{\frac{\ell}{r}}} \to \frac{\ell}{r} \delta,
\end{align*}
so there exists $\epsilon$ such that $\|f(x_1, x_2') - x_1\| \leq 2 \ell \delta / r$.
\end{enumerate}
We can then obtain the statement for all $x_2'$, by first applying step 1 and then applying step 2 (following a similar route to Figure \ref{route_lemma}).
\end{proof}

\begin{replemma}{ConvergeLemma}
Consider the case where $n = 3$, $f$ is unanimous, translation-invariant, and group-strategyproof. Then $\exists i \in N$, $\forall z \in \R_d$, $\forall \vx$ such that $\vx_{-i} = (z,\ldots,z)$, $f(\vx) = z$.
\end{replemma}
\begin{proof}

Let $\chi: \R_d \times \R \to N$ be a function such that $\chi(z, \ell)$ corresponds to the agent $i$ chosen by Lemma \ref{InfiniteRatioLemma} with respect to the given $z$ and $\ell$. Let $\Phi: \R_d \to 2^N$ be a function mapping a location to a subset of $N$, representing the convergence of $\chi(z, \ell)$ as $\ell \to 0$. Formally, $\forall z \in \R_d$, $\forall i \in N$, $i \in \Phi(z)$ if and only if either there is an infinite number of $m \in \mathbb{N}^+$ such that $\chi(z, 1 / m) = i$, or there exists $x_i \neq z$ such that $f(x_i, \vx_{-i}) = z$ where $\vx_{-i} = (z,\ldots,z)$. $\forall i \in N$, let $S_i = \{z \mid i \in \Phi(z)\}$. Clearly, $\Phi(z) \neq \varnothing$, and $S_1 \cup S_2 \cup S_3 = \R_d$. Specially, if $f$ is translation-invariant, then $\chi(z, l)$ and $\Phi(z)$ are invariant to $z$, so $S_i$ is either an empty set or exactly $\R_d$. Now we suppose that $S_i = \R_d$.

If $x_i = z$, then it holds by unanimity.

Otherwise, suppose $f(\vx) \neq z$. Let $\delta = \min(\|f(\vx) - z\|, 1) > 0$. By definition, $\ell$ can be infinitely small while satisfying $\chi(z, \ell) = i$. Let $\epsilon = \delta / 200$. Applying Lemma \ref{InfiniteRatioLemma} with respect to any $\ell < \epsilon \delta / 2$ where $\chi(z, \ell) = i$, there exists $\hat{\vx}$ such that
\[\|f(\hat{\vx}) - \hat{x}_j\| < \epsilon \|\hat{x}_i - \hat{x}_j\| < \frac{\epsilon \delta}{2}\]
holds for some $j \neq i$ where $\|\hat{x}_j - z\| < \epsilon \delta / 2$.

Let $z' = \hat{x}_j$. Now we move the agent other than $i, j$ to $z'$ (coincides with agent $j$), and these conditions still hold (by strategyproofness). Let $\vx'$ be the profile where $x_i' = x_i$ and $\vx_{-i}' = (z',\ldots,z')$. Because $\|\hat{x}_i - \hat{x}_j\| < \delta / 2 < \|x_i - z'\|$, we can apply Lemma \ref{DeltaUnanimousLemma} by considering agent $i$ and the others as a two-player game, so $\|f(\vx') - z'\| \leq \delta / 2$,
\begin{align*}
\|f(\vx') - z\| \leq \|f(\vx') - z'\| + \|z' - z\| < \delta.
\end{align*}
It leads to a contradiction, as agent $i$ would misreport $z'$ instead of $z$.
\end{proof}

% \subsection{Proof of Lemma \ref{Special2DictatorialTheorem}} \label{ProofSpecial2DictatorialTheorem}

% \begin{replemma}{Special2DictatorialTheorem}
% When $n = 3$, if $f$ is unanimous, translation-invariant, and group-strategyproof, then $f$ is 2-dictatorial.
% \end{replemma}

% \subsection{Proof of Theorem \ref{2DictatorialTheorem}} \label{Proof2DictatorialTheorem}

% \begin{reptheorem}{2DictatorialTheorem}
% If $f$ is unanimous, translation-invariant, and group-strategyproof, then $f$ is 2-dictatorial.
% \end{reptheorem}

\section{Proofs of Mechanisms}

\subsection{Strategyproofness and Approximation Bound of Mechanism \ref{RandCenter}} \label{ProofRandCenter}

\begin{repmechanism}{RandCenter}
Given $\vx$, output $(x_1 + \dots + x_n) / n$ with $1/2$ probability, and each $x_i$ with $1/2n$ probability.
\end{repmechanism}
\begin{proposition}
Mechanism \ref{RandCenter} is strategyproof.
\end{proposition}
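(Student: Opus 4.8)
The plan is to verify the strategyproofness of Mechanism~\ref{RandCenter} by direct computation of the expected cost and then bound the change in cost when a single agent misreports. First I would fix an agent $i$ and suppose all agents except $i$ report truthfully. Let $c = (x_1 + \dots + x_n)/n$ denote the true centroid of the reported profile, and let $c'$ denote the centroid after agent $i$ misreports $x_i'$ instead of $x_i$. The key observation is that when agent $i$ deviates, only two things change from agent $i$'s perspective: the centroid moves from $c$ to $c' = c + (x_i' - x_i)/n$, and the point drawn with probability $1/2n$ that used to be $x_i$ becomes $x_i'$; all the other support points $x_k$ ($k \neq i$) are unchanged. So I would write agent $i$'s expected cost under truthful reporting and under misreporting, and compare them term by term.

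The main computation is as follows. Agent $i$'s true expected cost is
\[
\frac{1}{2}\|c - x_i\| + \frac{1}{2n}\sum_{k=1}^{n} \|x_k - x_i\|,
\]
and after misreporting it is
\[
\frac{1}{2}\|c' - x_i\| + \frac{1}{2n}\Big(\|x_i' - x_i\| + \sum_{k \neq i} \|x_k - x_i\|\Big).
\]
The difference between the misreport cost and the true cost is
\[
\frac{1}{2}\big(\|c' - x_i\| - \|c - x_i\|\big) + \frac{1}{2n}\big(\|x_i' - x_i\| - \|x_i - x_i\|\big)
= \frac{1}{2}\big(\|c' - x_i\| - \|c - x_i\|\big) + \frac{1}{2n}\|x_i' - x_i\|,
\]
since the summation terms over $k \neq i$ cancel and $\|x_i - x_i\| = 0$. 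To prove strategyproofness I must show this difference is nonnegative, i.e. that the decrease in the centroid term cannot outpace the increase in the own-point term. I would bound the centroid movement: by the reverse triangle inequality and $c' - c = (x_i' - x_i)/n$, we have $\|c - x_i\| - \|c' - x_i\| \leq \|c' - c\| = \|x_i' - x_i\|/n$, so the centroid term decreases by at most $\frac{1}{2}\cdot\frac{1}{n}\|x_i' - x_i\|$, which is exactly offset by the $\frac{1}{2n}\|x_i' - x_i\|$ increase from the own-point term. Hence the difference is at least $0$, establishing strategyproofness.

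The main obstacle, or rather the main subtlety, is making sure the cancellation of the $k \neq i$ terms is clean: one must be careful that the mechanism's support points for the other agents genuinely do not depend on $x_i$ except through their own (fixed) reports, and that the centroid is the only shared quantity affected by $i$'s deviation. Once that bookkeeping is set up, the argument reduces to the single reverse-triangle-inequality bound $\|c' - c\| = \frac{1}{n}\|x_i' - x_i\|$, which precisely matches the penalty weight $\frac{1}{2n}$ against the centroid weight $\frac{1}{2}$. I would emphasize that this balance is exactly why the probabilities $1/2$ and $1/2n$ are chosen, and that the argument holds in \emph{any} normed vector space since it uses only the triangle inequality, not strict convexity.
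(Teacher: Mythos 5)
Your proof is correct and is essentially the same argument as the paper's: both decompose the expected cost, observe that the terms for agents $k \neq i$ are unchanged, and apply the triangle inequality to the centroid shift $\|c'-c\| = \frac{1}{n}\|x_i'-x_i\|$ to show the $\frac{1}{2}$-weighted centroid term can decrease by at most what the $\frac{1}{2n}$-weighted own-point term increases. The paper merely writes this as a single chain of inequalities (for agent $1$, by symmetry) rather than as an explicit difference, so there is nothing to add.
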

\begin{proof}

Since all agents are symmetric, it suffices to show that agent $1$ cannot gain from misreporting. $\forall \vx$, $\forall x_1' \in \R_d$,
\begin{align*}
\|f(x_1', \vx_{-1}) - x_1\| &= \frac{1}{2}\left\|\frac{x_1' + \sum_{i=2}^{n}{x_i}}{n} - x_1\right\| + \frac{1}{2n}\|x_1' - x_1\| + \frac{1}{2n}\sum_{i=2}^{n}{\|x_i - x_1\|} \\
    &\geq \frac{1}{2}\left\|\frac{\sum_{i=1}^{n}{x_i}}{n} - x_1\right\| + \frac{1}{2n}\sum_{i=2}^{n}{\|x_i - x_1\|} \\
    &= \|f(\vx) - x_1\|.
\end{align*}
Agent $1$ cannot gain from misreporting, and thus Mechanism \ref{RandCenter} is strategyproof.

\end{proof}

\begin{proposition}
Mechanism \ref{RandCenter} is $(2 - 1/n)$-approximation for maximum cost.
\end{proposition}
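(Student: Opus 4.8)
The plan is to write out the expected maximum cost explicitly and bound it against the \emph{diameter} of the point set rather than directly against the optimal center. Writing $\bar x = (x_1 + \dots + x_n)/n$ for the centroid and reading off the description of Mechanism \ref{RandCenter}, the objective value of the mechanism is
\[\mc(f(\vx), \vx) = \frac12 \max_{i\in N}\|x_i - \bar x\| + \frac{1}{2n}\sum_{j=1}^{n}\max_{i\in N}\|x_i - x_j\|.\]
I would let $d = \max_{i,k\in N}\|x_i - x_k\|$ denote the diameter of the profile and $R = \min_y \mc(y,\vx)$ the optimal deterministic maximum cost, achieved at some center $c^*$.

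The key step, and the source of the improvement over the trivial factor $2$, is to bound the centroid term by $\tfrac{n-1}{n}d$ rather than by $d$. First I would observe that for every $i$,
\[\|x_i - \bar x\| = \left\|\frac1n\sum_{k\in N}(x_i - x_k)\right\| \le \frac1n\sum_{k\in N}\|x_i - x_k\| \le \frac{n-1}{n}\,d,\]
where the last inequality uses $\|x_i - x_k\|\le d$ together with the fact that the $k=i$ term vanishes. For the second group of terms, $\max_i\|x_i - x_j\|\le d$ holds by the very definition of the diameter, so $\frac{1}{2n}\sum_j\max_i\|x_i-x_j\|\le d/2$.

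Combining the two bounds gives
\[\mc(f(\vx),\vx) \le \frac12\cdot\frac{n-1}{n}\,d + \frac{d}{2} = \frac{2n-1}{2n}\,d.\]
Finally I would invoke the standard relation between diameter and enclosing radius: for any two agents, $\|x_a - x_b\| \le \|x_a - c^*\| + \|c^* - x_b\| \le 2R$, hence $d \le 2R$. Substituting yields $\mc(f(\vx),\vx) \le \frac{2n-1}{2n}\cdot 2R = (2 - 1/n)R$, the claimed ratio (the degenerate case $R=0$, in which all points coincide, is immediate).

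I expect the only nonroutine point to be recognizing that one should compare against the diameter $d$ instead of the optimal radius directly. Bounding $\max_i\|x_i - \bar x\|$ through the centered sum $\frac1n\sum_k(x_i - x_k)$ is precisely what produces the saving of $1/n$, since dropping the zero $k=i$ term replaces the naive factor $1$ by $\frac{n-1}{n}$; everything else is just the triangle inequality. Note also that, unlike the lower-bound arguments elsewhere in the paper, strict convexity plays no role here, consistent with the claim that this approximation guarantee holds in an arbitrary normed vector space.
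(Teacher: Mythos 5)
Your proof is correct and uses essentially the same ingredients as the paper's: the decomposition of the mechanism's expected maximum cost into the centroid term plus the agent terms, the bound $\|x_i - \bar{x}\| \le \frac{1}{n}\sum_{k \ne i}\|x_i - x_k\| \le \frac{n-1}{n}d$ exploiting the vanishing $k=i$ term, and the diameter-versus-radius inequality $d \le 2R$. The paper merely organizes this as a proof by contradiction (and additionally exhibits a worst-case profile showing the ratio $2 - 1/n$ is tight, which the proposition as stated does not require), so your direct argument is a faithful equivalent.
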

\begin{proof}

For any $y_1 \neq y_2$, consider the following profile
\[\vx^* = (y_1, y_2, \dots, y_2).\]
In this case,
\[\mc(f(\vx^*), \vx^*) = \frac{n-1}{2n}\|y_1 - y_2\| + \frac{1}{2}\|y_1 - y_2\|  = \Big(1 - \frac{1}{2n}\Big)\|y_1 - y_2\|,\]
while the optimal is $\mc((y_1 + y_2)/2, \vx^*) = \|y_1 - y_2\|/2$, so the approximation ratio is at least $2 - 1/n$.

Now we prove it to be the upper bound. Assume for contradiction that $\exists \vx \in \R_d^n$, $\exists y \in \R_d$ such that
\[\frac{\mc(f(\vx), \vx)}{\mc(y, \vx)} > 2 - \frac{1}{n}.\]
Let $r = \mc(y, \vx)$. Let $\overline{x} = (x_1 + \dots + x_n)/n$. Then,
\[\Big(2 - \frac{1}{n}\Big) r < \mc(f(\vx), \vx) < \frac{1}{2}\mc(\overline{x}, \vx) + 1,\]
so there exists $i \in N$ such that 
\[\|x_i - \overline{x}\| = \mc(\overline{x}, \vx) > \Big(2 - \frac{2}{n}\Big)r.\]
Also we have
\[\|x_i - \overline{x}\| = \Bigg\|\frac{1}{n}\sum_{j \neq i}{x_i - x_j}\Bigg\| \leq \frac{1}{n}\sum_{j \neq i}{\|x_i - x_j\|},\]
so there exists $j \in N$ such that
\[\|x_i - x_j\| \geq \frac{n}{n-1}\|x_i - \overline{x}\| > 2r.\]
On the other hand,
\[\|x_i - x_j\| \leq \|x_i - y\| + \|x_j - y\| \leq 2r,\]
which makes a contradiction.

\end{proof}

\subsection{Approximation Bound of Mechanism \ref{RandMed}} \label{ProofRandMed}
\begin{repmechanism}{RandMed}
Given $\vx$, return $x_1$ with $1/4$ probability, $x_2$ with $1/4$ probability, and $(x_1 + x_2)/2$ with $1/2$ probability.
\end{repmechanism}
\begin{proposition}
Mechanism \ref{RandMed} is $n/2$-approximation for social cost.
\end{proposition}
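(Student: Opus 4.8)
The plan is to bound the social cost of Mechanism~\ref{RandMed} against the optimal social cost for an arbitrary profile $\vx$. Since the mechanism only ever uses $x_1$, $x_2$, and their midpoint, I would first compute $\soc(f(\vx),\vx)$ explicitly. For each agent $i$, the expected distance to the (randomized) output is
\[
\frac{1}{4}\|x_i - x_1\| + \frac{1}{4}\|x_i - x_2\| + \frac{1}{2}\left\|x_i - \frac{x_1+x_2}{2}\right\|.
\]
By the triangle inequality applied to the midpoint, $\|x_i - (x_1+x_2)/2\| \le \frac{1}{2}\|x_i-x_1\| + \frac{1}{2}\|x_i-x_2\|$, so each agent's cost is at most $\frac{1}{2}\|x_i-x_1\| + \frac{1}{2}\|x_i-x_2\|$. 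Summing over all $i$ gives $\soc(f(\vx),\vx) \le \frac{1}{2}\sum_i \|x_i-x_1\| + \frac{1}{2}\sum_i \|x_i-x_2\|$, i.e. the mechanism's social cost is at most the average of the social costs of the two deterministic mechanisms ``always output $x_1$'' and ``always output $x_2$''.

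Next I would relate these two deterministic quantities to the optimum. Let $y^* = \arg\min_y \soc(y,\vx)$ and write $r = \soc(y^*,\vx) = \sum_i \|x_i - y^*\|$. For the ``always $x_1$'' cost, I would use $\|x_i - x_1\| \le \|x_i - y^*\| + \|y^* - x_1\|$ for each $i$, giving $\sum_i \|x_i - x_1\| \le r + n\|x_1 - y^*\|$. Since $\|x_1 - y^*\| \le r$ (it is just one agent's contribution to the social cost, hence at most the whole sum), this yields $\sum_i \|x_i - x_1\| \le (n+1)r$; the same bound holds for $x_2$. Averaging gives $\soc(f(\vx),\vx) \le (n+1)r/\cdot$, which is too weak — it overshoots $n/2$. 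So the naive bound must be sharpened.

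The main obstacle, and the step I would spend the most care on, is extracting the factor of roughly $n/2$ rather than $n$. The key observation is that the crude bound $\|x_1 - y^*\| \le r$ is wasteful: when $\|x_1 - y^*\|$ is large, agent $1$'s own term already consumes much of $r$, so the remaining agents are close to $y^*$ and the penalty $n\|x_1 - y^*\|$ cannot simultaneously be large for both the $x_1$-mechanism and the $x_2$-mechanism. Concretely, I would average over the two choices before applying the triangle inequality: from $\frac{1}{2}\|x_i - x_1\| + \frac{1}{2}\|x_i - x_2\| \le \|x_i - y^*\| + \frac{1}{2}\|x_1 - y^*\| + \frac{1}{2}\|x_2-y^*\|$, summing over $i$ gives $\soc(f(\vx),\vx) \le r + \frac{n}{2}(\|x_1-y^*\| + \|x_2-y^*\|)$. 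The issue is still that $\|x_1 - y^*\| + \|x_2 - y^*\|$ could be as large as $r$. I would therefore argue that choosing $x_1, x_2$ to be the two agents farthest from $y^*$ (or noting the mechanism is free to label any two agents as the designated pair, or more likely invoking that the lower-bound construction is tight so the worst case is the two-point profile $(y_1, y_2, \dots, y_2)$) forces the right tradeoff. On that extremal profile the bound should collapse exactly to $n/2$, confirming tightness; the remaining work is a careful case analysis showing no other profile does worse, which I expect to be the genuinely delicate part of the argument.
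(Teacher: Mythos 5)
Your first step (using convexity of the norm to get $\soc(f(\vx),\vx) \le \tfrac{1}{2}\soc(x_1,\vx) + \tfrac{1}{2}\soc(x_2,\vx)$) matches the paper exactly, but the way you then compare to the optimum loses an additive $1$ that you never recover. Applying $\|x_i - x_k\| \le \|x_i - y^*\| + \|x_k - y^*\|$ uniformly to \emph{all} agents $i$, including $i = 1,2$ themselves, gives $\soc(f(\vx),\vx) \le r + \tfrac{n}{2}(d_1+d_2)$ with $d_k = \|x_k - y^*\|$, and since $d_1 + d_2$ can be as large as $r$ this only yields an $(n/2+1)$-approximation. Indeed, even on your own extremal profile $(y_1, y_2, \dots, y_2)$ this bound evaluates to $(n/2+1)\|y_1-y_2\|$ while the true cost is $\tfrac{n}{2}\|y_1-y_2\|$, so your claim that the bound ``collapses exactly to $n/2$'' there is false: the slack is in the inequality itself, not in which profile you test it on. The remedies you float do not close the gap. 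The mechanism uses the \emph{fixed} agents $1$ and $2$, so you may not relabel the pair as ``the two agents farthest from $y^*$''; and asserting that the worst case is the lower-bound profile is circular --- that is precisely what an upper-bound proof must establish, and the ``careful case analysis'' you defer to is the entire content of the proposition.

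The missing idea, which is the paper's actual argument, is to treat agents $1$ and $2$ separately instead of routing them through the same triangle inequality. Writing $C = \sum_{i\ge 3}\|x_i - y^*\|$, one has $\soc(x_1,\vx) \le (n-2)d_1 + \|x_1 - x_2\| + C$ (agent $1$ contributes $0$, agent $2$ contributes $\|x_1-x_2\|$, and only agents $i \ge 3$ are bounded via $y^*$), and symmetrically for $\soc(x_2,\vx)$. Averaging and using $\|x_1 - x_2\| \le d_1 + d_2$ gives
\[
\soc(f(\vx),\vx) \;\le\; \frac{n}{2}(d_1+d_2) + C \;\le\; \frac{n}{2}\bigl(d_1 + d_2 + C\bigr) \;=\; \frac{n}{2}\,\soc(y^*,\vx),
\]
where the last inequality only needs $n/2 \ge 1$. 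The point is that the two designated agents' own terms involve $\|x_1 - x_2\|$ alone and never their distances to $y^*$; your version double-counts $d_1 + d_2$ (once inside $r$, once in the penalty term), and that double count is exactly the spurious $+1$.
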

\begin{proof}
For any $y_1 \neq y_2$, consider the following profile
\[\vx^* = (y_1, y_2, \dots, y_2).\]
Because $\soc(f(\vx^*), \vx^*) = \frac{n}{2}\|y_1 - y_2\|$ and the optimal is $\soc(y_1, \vx^*) = \|y_1 - y_2\|$, the approximation ratio is at least $n/2$.

Now we prove it to be the upper bound. For any $y \in \R_d$, assume $y$ is the optimal location that minimizes the social cost. Let $d_1 = \|x_1 - y\|$, $d_2 = \|x_2 - y\|$,
\[C = \sum_{i=3}^{n}{\|x_i - y\|}.\]
Due to the convexity, $\forall \vx$,
\[\soc(f(\vx), \vx) \leq \frac{\soc(x_1, \vx) + \soc(x_2, \vx)}{2},\]
and
\begin{align*}
\frac{\soc(f(\vx), \vx)}{\soc(y, \vx)} &\leq \frac{\frac{1}{2}\soc(x_1, \vx) + \frac{1}{2}\soc(x_2, \vx)}{\soc(y, \vx)} \\
    &\leq \frac{\frac{1}{2}((n - 2)d_1 + \|x_1 - x_2\| + C) + \frac{1}{2}((n - 2)d_2 + \|x_1 - x_2\| + C)}{d_1 + d_2 + C} \\
    &\leq \frac{\frac{n}{2} (d_1 + d_2) + C}{d_1 + d_2 + C} \\
    &\leq \frac{n}{2}.
\end{align*}
As this inequality holds for all $y \in \R_d$, we conclude that Mechanism \ref{RandMed} is exactly $n/2$-approximation for social cost.
\end{proof}

\subsection{Group-Strategyproofness of Mechanism \ref{Separate2Dictator}} \label{ProofSeparate2Dictator}
\begin{repmechanism}{Separate2Dictator}
Given $\vx$, let $r$ be the first coordinate of $x_1$, and $y$ be the point on $\overline{x_1x_2}$ such that $\|x_1-y\| = \min\{|r - a|,\|x_1-x_2\|\}$. Similarly, let $y'$ be the point on $\overline{x_1x_3}$ such that $\|x_1-y'\| = \min\{|r - a|,\|x_1-x_3\|\}$. 

If $r \geq a$, return $x_1$ with $2/3$ probability, and $y$ with $1/3$ probability. Otherwise, return $x_1$ with $2/3$ probability, and $y'$ with $1/3$ probability.
\end{repmechanism}
\begin{proposition}
Mechanism \ref{Separate2Dictator} is unanimous and group-strategyproof.
\end{proposition}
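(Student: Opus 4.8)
The plan is to verify unanimity by inspection and then prove group-strategyproofness by a coalition-by-coalition analysis in the base case $n=3$, exploiting the fact that every output is a two-point lottery putting weight $2/3$ on agent $1$'s reported location and weight $1/3$ on a ``base point'' in a ball around it. Unanimity is immediate: if $x_1=x_2=x_3$ then $\|x_1-x_2\|=0$ forces $\|x_1-y\|=0$ (and likewise $\|x_1-y'\|=0$), so the output is the common point with probability $1$. For group-strategyproofness, fix a true profile $\vx$ and let $r$ be the first coordinate of $x_1$; by swapping the roles of agents $2$ and $3$ we may assume $r\ge a$, so that truthful reporting makes agent $2$ the ``active'' agent and the truthful base point $y$ lies on $\overline{x_1x_2}$. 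I write $c_i=\|f(\vx)-x_i\|$ for truthful costs and $c_i'=\|f(\vx')-x_i\|$ for the cost (against the true $x_i$) under a joint misreport $\vx'$ that equals $x_i$ off the coalition, and I record that every reachable output is supported on the reported point $x_1'$ and some base point $b$.

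The crux is a single \emph{sum-of-costs} inequality obtained from the triangle inequality with $x_1$ as pivot. Since $f(\vx')$ is supported on $x_1'$ (weight $2/3$) and $b$ (weight $1/3$),
\[
c_1'+c_2'=\tfrac23\big(\|x_1-x_1'\|+\|x_2-x_1'\|\big)+\tfrac13\big(\|x_1-b\|+\|x_2-b\|\big)\ge \|x_1-x_2\|.
\]
On the other hand, because the truthful base point $y$ lies on $\overline{x_1x_2}$, one has $c_1+c_2=\tfrac13\|x_1-y\|+\tfrac23\|x_1-x_2\|+\tfrac13\|x_2-y\|=\|x_1-x_2\|$. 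Hence $c_1'+c_2'\ge c_1+c_2$ for \emph{every} report, so agents $1$ and $2$ can never both strictly improve. This rules out every coalition containing both $1$ and $2$, in particular the grand coalition.

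It remains to treat coalitions not containing both $1$ and $2$. If the coalition contains agent $2$ but not agent $1$, then agent $1$ is truthful, the mode stays toward agent $2$, and the reachable base points form exactly the closed ball $\bar B(x_1,r-a)$; by strict convexity the unique nearest point of this ball to $x_2$ is the truthful $y$, so agent $2$ cannot lower $\|x_2-b\|$ and thus cannot strictly gain. If the coalition contains agent $1$ but not agent $2$, I show agent $1$ alone cannot gain: writing $s=\|x_1-x_1'\|$, $m=\min\{r-a,\|x_1-x_2\|\}$, and using that the first-coordinate functional is $1$-Lipschitz, if the reported mode stays toward agent $2$ then the base distance $d$ from $x_1'$ satisfies $d\ge m-s$, and combining $c_1'\ge\tfrac23 s+\tfrac13\max(0,d-s)$ with this bound yields $c_1'\ge\tfrac13 m=c_1$ in both subcases $d\ge s$ and $d<s$; if instead agent $1$ flips the mode ($r'<a$) then $s>r-a\ge m$ forces $c_1'\ge\tfrac23 s>\tfrac13 m$. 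Finally the singleton $\{3\}$ is trivial, since agent $3$ cannot affect the output while the mode points toward agent $2$. Every nonempty coalition is thereby handled.

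The main obstacle is the individual strategyproofness of agent $1$: unlike the active agent, agent $1$ controls both the mode (via the sign of $r-a$) and the radius of the base ball, so one must rule out the temptation either to shrink the ball or to flip to the other mode. The two-regime triangle estimate above, together with the observation that a mode flip costs agent $1$ at least $r-a$ in the heavy $2/3$ mass, is what closes this case. A secondary technical point is that ``first coordinate'' must be read as a fixed $1$-Lipschitz linear functional (true for the standard coordinate of any $L_p$ norm, and arrangeable in general by rescaling), since the estimate $|r-r'|\le\|x_1-x_1'\|$ is used throughout.
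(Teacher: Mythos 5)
Your proof is correct and takes essentially the same route as the paper's: unanimity by inspection, reduction to the case $r \geq a$, then a coalition-by-coalition analysis in which coalitions containing both agents $1$ and $2$ are blocked by the sum-of-costs identity $c_1+c_2=\|x_1-x_2\|$, coalitions containing agent $2$ but not $1$ by the triangle inequality over the ball of reachable base points, coalitions containing agent $1$ but not $2$ by the $2/3$ mass at agent $1$'s reported location, and the singleton $\{3\}$ trivially. Your explicit handling of the mode-flip case ($r'<a$) and your remark that the first-coordinate functional must be read as $1$-Lipschitz are in fact slightly more careful than the paper's single inequality chain, which assumes both implicitly.
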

\begin{proof}

It suffices to show the group-strategyproofness for all $r \geq a$, as the other case is symmetric.

In this case, the output lies on $\overline{x_1 x_2}$, and determined by $x_1$ and $x_2$. Thus, a group that contains neither agent 1 nor agent 2 cannot violate group-strategyproofness. On the other hand, a group that contains both of agents $1, 2$ cannot violate group-strategyproofness either, because $\|x_1 - f(\vx)\| + \|x_2 - f(\vx)\| = \|x_1 - x_2\|$ has already reached the minimum. 

Consider a group that contains agent 2 but not agent 1. If $\|x_2 - x_1\| < r - a$, then $y = x_2$, so agent 2 is truthful. Otherwise, let $y'$ be the ``$y$'' after misreporting, where $\|y' - x_1\| \leq r - a$. As a result,  $\|x_2 - y'\| \geq \|x_2 - x_1\| - (r-a)= \|x_2 - y\|$. Therefore, this group cannot violate group-strategyproofness.

Consider a group that contains agent 1 but not agent 2. Let $y'$ be the ``$y$'' after agent 1 misreports $x_1'$. Let $d = \|x_1 - x_1'\|$, and $r'$ be the first coordinate of $x_1'$. Then we have $r'-a \geq r - a - d$ and $\|y' - x_1'\|\geq \|y - x_1\| - d$. Therefore, $\|y'-x_1\| \geq \|y - x_1\| - 2 d$, and
\begin{align*}
\|f(\vx') - x_1\| &= \frac{2}{3}\|x_1' - x_1\| + \frac{1}{3}\|y' - x_1\| \\
    &\geq \frac{2}{3}d +  \frac{1}{3}(\|y - x_1\| - 2d) \\
    &= \frac{\|y - x_1\|}{3} \\
    &= \|f(\vx) - x_1\|.
\end{align*}
Thus, this group cannot violate group-strategyproofness either.

The unanimity is clear, and thus Mechanism \ref{Separate2Dictator} is unanimous and group-strategyproof. 

%When $x_1 = x_2$, $y = x_2$, so Mechanism \ref{Separate2Dictator} is also unanimous.
\end{proof}

\end{appendices}

%%
%% If your work has an appendix, this is the place to put it.
\appendix

\end{document}